\documentclass[11pt]{article}

\usepackage{times}
\usepackage{color}
\usepackage{fullpage}
\usepackage{amsthm}
\usepackage[ruled, vlined, linesnumbered]{algorithm2e}
\usepackage{color}
\usepackage{xcolor}
\usepackage{array}
\usepackage{diagbox}
\usepackage{multirow}
\usepackage{amsmath,amssymb}
\usepackage[colorlinks,linkcolor=blue,citecolor=blue,urlcolor=black]{hyperref}
\usepackage{caption, subcaption}
\DeclareCaptionType{copyrightbox}
\usepackage{framed,url}
\usepackage{booktabs}
\usepackage{tabularx}
\usepackage{mathrsfs}

\newtheorem{theorem}{Theorem}
\newtheorem{lemma}[theorem]{Lemma}

\newtheorem{definition}{Definition}

\newcommand{\abs}[1]{{\left | #1 \right |}}
\newcommand{\gray}[1]{\textcolor{gray}{#1}}
\newcommand{\E}{\mathbf{E}}
\newcommand{\Var}{\mathbf{Var}}
\newcommand{\eps}{\epsilon}

\newcommand{\A}{\mathcal{A}}
\renewcommand{\Pr}{\mathbf{Pr}}
\newcommand{\cE}{\mathcal{E}}

\newcommand{\DI}{{\text{DI}}}

\newcommand{\INDEX}{{\sc INDEX}}
\newcommand{\DISJ}{{\sc DISJ}}

\newcommand{\EXP}{\operatorname{Exp}}

\SetKw{Break}{break}
\SetKwData{Null}{null}

\newcommand{\rFo}{\DI}
\newcommand{\rLo}{L_{\DI}}
\newcommand{\rFp}{M_p}

\begin{document}

\title{Statistics of Similarity Graphs in Node-Arrival Streams}

\author{
        Kaiwen Liu \\ 
        Computer Science Department\\
        Indiana University\\
        \texttt{kaiwliu@iu.edu}
        \and 
        Qin Zhang \\
        Computer Science Department\\
        Indiana University\\
        \texttt{qzhangcs@iu.edu}
    }

\maketitle

\begin{abstract}
In this paper, we study several statistical problems on similarity graphs in the node-arrival streaming model, including degree moments, diversity index, degree-moment sampling, and diversity sampling. We develop constant-pass, sublinear-space streaming algorithms for these problems and establish space lower bounds that nearly match the upper bounds in their dependence on the stream length.
\end{abstract}

\section{Introduction}
\label{sec:intro}

Many statistical problems in data stream analysis, including distinct elements and frequency moments, are defined in terms of item frequencies, which count how many times each item occurs. In applications involving data provided to or generated by large language models, however, this notion may be too restrictive. For example, different prompts may request essentially the same task using different wording. Similarly, an LLM may generate documents that appear different but are semantically equivalent because of paraphrasing or summarization. In such settings, when measuring how often an item occurs, similar but nonidentical items should also be allowed to contribute. This motivates a similarity-aware generalization of frequency, in which each stream item contributes according to its similarity to the item of interest.

Once each pair of stream items is assigned a similarity score, the stream naturally induces a similarity graph, with stream items as nodes and edge weights representing pairwise similarities. The weighted degree of a
node is precisely its similarity-aware frequency. Moreover, the graph also captures the relational structure that is not revealed in individual frequencies. 

In our setting, the similarity graph is implicit: the stream presents only
the data items, rather than the edges or edge weights of the graph, and
pairwise similarities can be obtained only by evaluating a similarity
function on items stored in memory. This leads to the central question of
this paper:
\begin{quote}
    \emph{How space-efficiently can we perform fundamental statistical tasks on an implicit similarity graph presented as a node-arrival stream?}
\end{quote}

We formalize this setting as follows.
Let $\sigma = (\sigma_1, \ldots, \sigma_n) \in U^n$ denote the data stream, and let $f : U \times U \to [0,1]$ be a symmetric similarity function satisfying $f(x, x) = 1$ for all $x \in U$. 
The similarity graph of $\sigma$ is the complete weighted graph $G^\sigma$ on the node set $[n]$, with edge weight function $w(i,j) = f(\sigma_i, \sigma_j)$ for all $i, j\in [n]$.
Thus, each edge has a weight in $[0,1]$, with every node $i$ having a self-loop of weight $w(i,i)=1$.
For each $i \in [n]$, let
\begin{equation*}
    d_i = \sum_{j \in [n]} w(i,j) = \sum_{j \in [n]} f(\sigma_i, \sigma_j),
\end{equation*}
which can be viewed as the similarity-weighted frequency of $\sigma_i$.

In this paper, we study the following statistics of the similarity graph:
\begin{enumerate}
    \item $M_p(\sigma) = \sum_{i \in [n]} d_i^p\ (p > 0)$: {\em degree moments} of $G^\sigma$. When $f$ is the equality function, $M_p(\sigma)$ degenerates to the classical $(p+1)$-th frequency moment $F_{p+1}(\sigma)$. When $f$ is Boolean, the quantity $M_1(\sigma)$ corresponds to the size of the self-similarity join of the set $\sigma$ under $f$.
    \smallskip
    
    \item $\DI(\sigma) = \sum_{i \in [n]} d_i^{-1}$ : {\em diversity index} of $G^\sigma$. $\DI(\sigma)$ characterizes the ``level of diversity'' of  the set $\sigma$ under the similarity function $f$. When $f$ is the equality function, $\DI(\sigma)$ degenerates to the number of distinct elements in $\sigma$.
    When $f$ is Boolean, 
    this quantity is also known as the Caro–Wei bound~\cite{Caro79,Wei81} in the literature, and it gives a lower bound on the size of a maximum independent set in the graph.
        \smallskip

    \item $L_{M_p}$-sampler: {\em degree-moment sampler} on $G^\sigma$ that samples nodes with probability proportional to the {\em $p$-th power} of their degrees; it outputs node $i$ with probability $d_i^p / M_p(\sigma)$. When $f$ is the equality function, $L_{M_p}$-sampler degenerates to the classical $\ell_{p+1}$-sampler.
    \smallskip
    
    \item $L_{\DI}$-sampler: {\em diversity sampler} on $G^\sigma$ that samples nodes with probability proportional to the {\em inverse} of their degrees; it outputs node $i$ with probability $d_i^{-1} / \DI(\sigma)$. When $f$ is the equality function, $L_{\DI}$-sampler degenerates to the classical $\ell_0$-sampler.
\end{enumerate}

\vspace{2mm}
\noindent{\bf The Computational Model.\ }
We work in the well-established data stream model~\cite{FM85,AMS99} for large-scale data processing. 
In this model, a sequence of data items arrive sequentially, while the available memory is too small to store the entire dataset. 
Our goal is to compute a function of the dataset using one or more linear scans of the stream. 
The main objective is to minimize the space usage of the algorithm and the number of passes needed over the data.  

In this paper, we focus on \emph{node-arrival streams}. The input is a sequence of items
$\sigma=(\sigma_1,\ldots,\sigma_n)\in U^n$, where each item corresponds to a node of
$G^\sigma$. We are also given access to a similarity function
$f:U\times U\to[0,1]$. The weights of $G^\sigma$ are \emph{not} provided explicitly.
Instead, the weight between two nodes is computed on demand by evaluating $f$, but only when the items corresponding to both nodes are stored in memory.

We believe that computing statistics of an implicit similarity graph provides a natural setting for studying node-arrival streams. In this context, nodes (data items) are the primary objects, while pairwise weights are derived objects that capture relationships or similarities between the nodes.

\vspace{2mm}
\noindent{\bf Notation and Conventions.\ }
We assume $p$ is a constant throughout the paper. When $\sigma$ is clear from the context, we simply write $M_p(\sigma)$, $\DI(\sigma)$, $L_{M_p}(\sigma)$ and $L_\DI(\sigma)$ as $M_p$, $\DI$, $L_{M_p}$ and $L_\DI$.  

We measure the space complexity of our upper bounds in words and use a unit-cost RAM model. We assume that each stream item, each value returned by the similarity function $f$, and each arithmetic quantity maintained by our algorithms can be represented using $O(1)$ words. If each item requires $b$ bits to represent, then all of our upper bounds, when stated in bits, incur an additional multiplicative factor of $b$. Our lower bounds are stated in bits and use only Boolean similarity functions.

We regard each stream item as a distinct node. Thus, node $i\in[n]$ is represented by the pair $(i,\sigma_i)$. When there is no ambiguity, we use $\sigma_i$ to refer to node $i$ as well as its associated item value. 

For any $\eps \in (0,1)$, we say  $\tilde{X}$ is a $(1+\eps, \delta)$-approximation of $X$ if $(1-\eps)X \le \tilde{X} \le (1+\eps)X$ with probability at least $(1 - \delta)$. For some lower bounds, we are able to prove for larger approximation ratios. For any $\alpha \ge 1$, we say $\tilde{X}$ is an $(\alpha, \delta)$-approximation of $X$ if $X/\alpha \le \tilde{X} \le \alpha X$ with probability at least $(1 - \delta)$.  

We use the notation $x = a \pm b$ to indicate that $x$ lies within the interval $[a - b, a + b]$.
For $L_\DI$-sampling and $L_{M_p}$-sampling, we slightly relax the sampling criterion by allowing a relative error $\eps$.  A $(1 + \eps, \delta)$-$L_\DI$-sampler (resp. $L_{M_p}$-sampler) on $G^\sigma$ fails with probability at most $\delta$ and conditioned on no failure, outputs node $i$ with probability $p_i = (1 \pm \eps) \frac{d_i^{-1}}{\DI(\sigma)}$
(resp. $p_i = (1 \pm \eps) \frac{d_i^p}{M_p(\sigma)}$).
 
\vspace{2mm}
\noindent{\bf Our Results.\ }
We present the following algorithmic results in Section~\ref{sec:algo}.
\begin{itemize}
    \item For any $\eps \in (0, 1)$, a $(1+\eps, o(1))$-approximation algorithm for  diversity index $\DI$ uses $O\left(\frac{1}{\eps^2}{\sqrt{n}\log n}\right)$ words of space and three passes.

    \item  For any $\eps \in (0, 1)$, a $\left(1+\eps, o(1)\right)$-$L_{\DI}$-sampler uses $O\left(\frac{1}{\eps^2}{\sqrt{n}\log n}\right)$ words of space and three passes.
    
    \item  For any $\eps \in (0, 1)$ and any $p > 0$, a $(1+\eps, 0.01)$-approximation algorithm for degree moment $M_p$ uses $O\left(\frac{1}{\eps^2}{n^{1-\frac{1}{p+1}}\log n}\right)$ words of space and two passes.

    \item For any $\eps \in (0, 1)$ and any $p > 0$, a $\left(1+\eps, o(1)\right)$-$L_{M_p}$-sampler uses $O\left(\frac{1}{\eps^2}{n^{1-\frac{1}{p+1}}\log n}\right)$ words of space and two passes.
\end{itemize}

We supplement the results above with the following lower bounds in Section~\ref{sec:lb}.
\begin{itemize}
    \item  Any one-pass $(O(1), 0.49)$-approximation algorithm for diversity index \DI\ needs $\Omega(n)$ bits of space. For any $\eps \in (0, 0.9)$, any $O(1)$-pass $(1+\eps, 0.49)$-approximation algorithm for \DI\ needs $\Omega\left(\min\left(\sqrt{\frac{n}{\eps}}, n\right)\right)$ bits of space. 

    \item For any $\eps \in (0, 0.9)$, any one-pass $(1+\eps,0.49)$-$L_{\DI}$-sampler needs $\Omega(n)$ bits of space, and any $O(1)$-pass $(1+\eps,0.49)$-$L_{\DI}$-sampler needs $\Omega(\sqrt{n})$ bits of space.
    
    \item For any $p > 0$, any one-pass $(O(1), 0.49)$-approximation algorithm for $M_p$ needs $\Omega(n)$ bits of space. For any $\eps \in (0, 0.9)$ and $p > 0$, any $O(1)$-pass $(1+\eps, 0.49)$-approximation algorithm for $M_p$ needs $\Omega\left(\min \left(\eps^{-\frac{1}{p+1}} n^{1-\frac{1}{p+1}}, n\right)\right)$ bits of space.

    \item For any $\eps \in (0, 0.9)$ and $p > 0$, any one-pass $(1+\eps,0.49)$-$L_{M_p}$-sampler needs $\Omega(n)$ bits of space, and any $O(1)$-pass $(1+\eps,0.49)$-$L_{M_p}$-sampler needs $\Omega\left(n^{1-\frac{1}{p+1}}\right)$ bits of space.
\end{itemize}

We summarize our main results in Table~\ref{tab:summary}. The one-pass lower bounds rule out the possibility of obtaining one-pass sublinear-space streaming algorithms for diversity index, diversity sampling, degree moments and degree-monent sampling. Our constant-pass upper and lower bounds match in their dependence on $n$, up to logarithmic factors.

\begin{table}[t]
    \centering
    \setlength{\tabcolsep}{3pt}
    \renewcommand{\arraystretch}{1.18}

    \begin{tabularx}{\textwidth}{@{}
        >{\raggedright\arraybackslash}p{0.15\textwidth}
        >{\raggedright\arraybackslash}p{0.27\textwidth}
        >{\raggedright\arraybackslash}p{0.30\textwidth}
        >{\raggedright\arraybackslash}X
        @{}}
        \toprule
        Problem
        &
        Upper bound
        &
        Constant-pass lower bound
        &
        One-pass lower bound
        \\
        \midrule

        \DI
        &
        $3$ passes;
        $\widetilde{O}(\epsilon^{-2}\sqrt{n})$
        &
        $\Omega\!\left(\min\{n,\sqrt{n/\epsilon}\}\right)$
        &
        $\Omega(n)$
        \\

        \addlinespace

        $L_\DI$-sampling
        &
        $3$ passes;
        $\widetilde{O}(\epsilon^{-2}\sqrt{n})$
        &
        $\Omega(\sqrt{n})$
        &
        $\Omega(n)$
        \\

        \addlinespace

        $M_p\ (p > 0)$
        &
        $2$ passes;
        $\widetilde{O}\!\left(\epsilon^{-2}n^{1-\frac{1}{p+1}}\right)$
        &
        $\Omega\!\left(\min\left\{n,\epsilon^{-\frac{1}{p+1}}n^{1-\frac{1}{p+1}}\right\}\right)$
        &
        $\Omega(n)$
        \\

        \addlinespace

        $L_{M_p}$-sampling
        &
        $2$ passes;
        $\widetilde{O}\left(\epsilon^{-2}n^{1-\frac{1}{p+1}}\right)$
        &
        $\Omega\left(n^{1-\frac{1}{p+1}}\right)$
        &
        $\Omega(n)$
        \\

        \bottomrule
    \end{tabularx}
     \caption{
        Summary of our results.
        Upper bounds are in words and lower bounds are in bits;
        $\widetilde{O}$ hides logarithmic factors in $n$.   All upper bounds hold for $(1+\eps, 0.01)$-approximation. The approximation guarantees for lower bound vary; see ``Our Results'' for details.
    }
    \label{tab:summary}
\end{table}
\subsection{Related Work}

\noindent{\bf Statistical Analysis in Data Streams.\ }
Many statistical functions have been studied in the data stream literature, including distinct elements ($F_0$)~\cite{AMS99,BJKST02,DF03,ganguly07,BHR+07,KNW10,CVM22}, $\ell_0$-sampling~\cite{FIS05,DCMR05,JST11,CF14}, heavy hitters~\cite{MG82,CCF02,CM05,MAE05}, frequency moments ($F_p$) and $\ell_p$-sampling~\cite{LW13,AMS99,IW05,BGK+06,MW10,AKO11,BO10,Andoni17, Ganguly11, Woodruff04, BYJ+04, CKS03, Ganguly12, BZ25, GW18,KNW10b}, etc.  Most of these algorithms achieve space complexity that is polylogarithmic in the universe size.   As mentioned earlier, taking $f$ to be the equality function, $F_0(\sigma)$ coincides with $\DI(\sigma)$, and for every $p \ge 1$, the frequency moment $F_p(\sigma)$ corresponds to $M_{p-1}(\sigma)$.
However, algorithms developed for classical data streams cannot be applied to the statistical estimation of the similarity graph, where, as we have shown in this paper, strong polynomial space lower bounds hold.

\vspace{2mm}
\noindent{\bf Node-Arrival Streams.\ }
Node-arrival streams differ from the more widely studied edge-arrival model, in which the input consists of the edges of a graph (see, e.g., \cite{McGregor14} for a survey). In the literature, the model considered in this paper is known as the \emph{implicit vertex-arrival model} and has previously been studied in \cite{EHR16, CP17, CDK19}. These works, however, focus specifically on the \emph{maximum independent set} problem for certain geometric intersection graphs. More recently, \cite{LVZ26} studied the cost of correlation clustering in the node-arrival model, whereas our focus is on statistical problems over similarity graphs.

\vspace{2mm}
\noindent{\bf Similarity-Aware Statistical Analysis.\ }
In more recent years,  a few works have investigated data streams with near-duplicates~\cite{CZ16,CZ18,Zhang25}. The work \cite{CZ16} studied the $F_0$ problem in $O(1)$-dimensional Euclidean spaces and in metric spaces that admit efficient locality sensitive hashing.  Its follow-up \cite{CZ18} examined $\ell_0$-sampling in the same settings.  More recently, \cite{Zhang25} generalized the study of $F_0$ to general metric spaces. However, these works focus on a particular class of datasets, referred to as {\em well-shaped} data, corresponding to the special case of our framework
in which $f$ is Boolean and transitive. Although these works also consider the case when transitivity does not hold and introduced the notion of {\em min-partition based ambiguity}, this definition is specifically designed for the $F_0$ problem and does not readily extend to other statistical functions.  

A recent work~\cite{LZ26} introduced a statistical estimation framework for noisy datasets based on the notion of \emph{mismatch-ambiguity}. 
This approach also relies on the similarity graph representation $G^\sigma$ of the input dataset $\sigma$ induced by a similarity function. 
However, rather than directly computing statistics of $G^\sigma$, it assumes the existence of an underlying noiseless ground-truth dataset $\tau$ corresponding to the observed noisy dataset $\sigma$. 
The goal is then to estimate the statistics of $\tau$, with the approximation ratio parameterized by a carefully defined mismatch-ambiguity measure $\eta$, which captures the discrepancy between the similarity graphs $G^\sigma$ and $G^\tau$.
The guarantees provided by this framework are meaningful only when $\eta$ is sufficiently small. 
At the same time, $\eta$ itself is unknown in practice, since the ground-truth dataset $\tau$ is not available.

\vspace{2mm}
\noindent{\bf Other Related Work.\ }
Degree moments $M_p$ have also been studied in the sublinear-time algorithms~\cite{Feige04, GR08, GRS11, ERS17}. That line of work focuses on achieving sublinear running time in the query model and assumes access to multiple types of graph queries (e.g., neighbor queries and degree queries), which differs substantially from our streaming setting.  

The Caro–Wei bound has been studied in the streaming setting~\cite{CDK18}, but only for edge-arrival streams and the explicit vertex-arrival model. In the latter model, when a vertex $v$ arrives, the stream explicitly presents all edges $(u,v)$ such that $u$ has arrived earlier.

\section{Algorithms}
\label{sec:algo}

In this section, we present the algorithms for diversity index, diversity sampler, degree moments and degree-moment sampling. We assume that $n$ is known in advance to simplify the presentation. In Section~\ref{app:assumption}, we discuss how to remove this assumption without an additional pass.

\subsection{Diversity Index}
\label{sec:F0}

\begin{algorithm}[!t]
    \caption{Diversity index estimation}
    \label{alg:F0}
    \DontPrintSemicolon
    \SetAlgoNoEnd
    
    \KwIn{a data stream $\sigma$ of size $n$ and a similarity function $f : U \times U \to [0, 1]$, parameter $\eps$}
    
    \KwOut{a $(1 +\eps)$-approximation of $\rFo(\sigma)$}
    \smallskip
    $t_1 \gets \frac{48\sqrt{n} \cdot \log n}{\eps^2}$, $t_2 \gets \frac{36\sqrt{n} \cdot \log n}{\eps^2}$,
    $\theta \gets \frac{96\log n}{\eps^2}$  \label{ln:theta}\; 
    \smallskip
    
    $W_H \gets 0$, $W_L \gets 0$, $n_2 \gets 0$\;
    
    $S_1$: an array of size $t_1$, with entries initialized to $0$ \gray{\tcc{$S_1$ stores a set of witnesses}} 
    
    $S_2$ and $D$: arrays of size $t_2$, with entries initialized to $0$ \gray{\tcc{$S_2$ stores samples drawn from low-degree nodes, and $D$ stores their degrees}}
    
    \gray{\tcp{Pass $1$: obtain a ``witness'' set}}
    
    \ForEach{incoming $\sigma_j$} {
        \For{$i \gets 1, \dots, t_1$} {
            with probability $\frac{1}{j}$, $S_1[i] \gets \sigma_j$ 
        }
    }
    
    \gray{\tcp{Pass $2$: use the witness set to classify nodes as either high-degree or low-degree nodes, and estimate the contribution of high-degree nodes.}}

    \ForEach{incoming $\sigma_j$} {
        $Y_j \gets \sum_{i \in [t_1]} f(S_1[i], \sigma_j)$ \label{ln:sj}\; 
        \If{$Y_j \geq \theta$}{
            $W_H \gets W_H + \left(\frac{n}{t_1}\cdot Y_j\right)^{-1}$ \gray{\tcc{$\sigma_j$ is classified as a high-degree node; add the estimated contribution of $\sigma_j$ to $W_H$}}
        }
        \Else{
            \gray{\tcc{$\sigma_j$ is classified as a low-degree node}}
            $n_2 \gets n_2 + 1$ \;
            \lFor{$i \gets 1, \dots, t_2$} {
                with probability $\frac{1}{n_2}$, $S_2[i] \gets \sigma_j$     \gray{\tcc{obtain a sample of low-degree nodes}}
            }
        }
    }
    
    \gray{\tcp{Pass $3$: estimate the contribution of low-degree nodes by computing exact degrees of sampled low-degree nodes}}
    
    \ForEach{incoming $\sigma_j$} {
        \For{$i \gets 1, \dots, t_2$} {
            $D[i] \gets D[i] + f(S_2[i], \sigma_j)$
        }
    }   
    \gray{\tcc{rescale to estimate the total contribution of low-degree nodes}}
    $W_L \gets \frac{n_2}{t_2} \sum_{i \in [t_2]} \frac{1}{D[i]}$ 
    
    \Return $W_H + W_L$.
\end{algorithm}

We start with the diversity index DI. Our algorithm is described in Algorithm~\ref{alg:F0} with inline remarks.   

\vspace{2mm}
\noindent{\bf The Ideas and Algorithm Descriptions.\ }
The first idea one may come up with to estimate $\rFo(\sigma) = \sum_{j \in [n]} d_j^{-1}$ is to sample the summands $d_j^{-1}$.  However, the value of $d_j^{-1}$ spans the interval $[\frac{1}{n}, 1]$, meaning that $\Omega(n)$ samples are needed to estimate $\rFo$ up to an $O(1)$ factor. 

A natural idea to reduce the variance of the sampling approach is to partition the set of nodes into two groups, one, denoted by $H$, containing high-degree nodes (roughly, those with degree at least $\sqrt{n}$), and the other, denoted by $L$, containing low-degree nodes.  We then estimate their contributions to $\rFo$ separately.  However, in the streaming model, we cannot identify and store the two groups $H$ and $L$ explicitly using $o(n)$ space.  We resolve this issue by first sampling a {\em witness} set $S_1$ of size $\tilde{O}(\sqrt{n})$, which we store in memory. In the second pass, we use $S_1$ to classify all nodes as either high-degree nodes or low-degree nodes. For every high-degree node $\sigma_j$, we immediately calculate its estimated degree $\tilde{d}_j$, and add ${\tilde{d}_j}^{-1}$ to $W_H$; note that we do not store $\sigma_j$ in memory. We can show that $W_H = \sum_{j \in H} \tilde{d}_j^{-1}$ is a good estimate of $\sum_{j \in H} d_j^{-1}$ (i.e. the total contribution of $H$).  For low-degree nodes, we sample a subset $S_2 \subseteq L$ of size $\tilde{O}(\sqrt{n})$ and store it in memory.  In the third pass, we compute the degree $d_j$ of each node $\sigma_j$ in $S_L$ exactly, and then use $W_L = \frac{\abs{L}}{t_2} \sum_{\sigma_j \in S_2} d_j^{-1}$ to estimate $\sum_{j \in L} d_j^{-1}$.  We can show that every node in $L$ has a degree $O(\sqrt{n})$ with high probability. Therefore, a sample of size $\tilde{O}(\sqrt{n})$ is sufficient to estimate the total contribution of $L$ accurately.  Finally, we return $W_H + W_L$ as the estimation of $\rFo$. 

\vspace{2mm}
\noindent{\bf The Analysis.\ }
We now analyze Algorithm~\ref{alg:F0} and establish the following theorem.

\begin{theorem}
    \label{thm:F0}
    Given an input data stream $\sigma$ of length $n$ and a similarity function $f : U \times U \to [0, 1]$, for any $\eps \in (0, 1)$, Algorithm~\ref{alg:F0} computes a $(1+\eps, o(1))$-approximation of $\rFo(\sigma)$, using three passes and $O\left(\frac{1}{\eps^2}\sqrt{n}  \log n\right)$ words of space.
\end{theorem}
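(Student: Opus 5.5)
The plan is to analyze the two contributions $W_H$ and $W_L$ separately, conditioning on a high-probability event that holds over the randomness of the witness set $S_1$. Space and passes are immediate from the description: $S_1$, $S_2$ and $D$ occupy $t_1+t_2=O(\eps^{-2}\sqrt n\log n)$ words, and the algorithm makes three passes. Each entry $S_1[i]$ is, by reservoir sampling, a uniform sample of $\sigma$, and the entries are independent. Hence for every fixed node $j$ we have $Y_j=\sum_{i}f(S_1[i],\sigma_j)$, a sum of $t_1$ independent $[0,1]$-valued variables with mean $\E[Y_j]=t_1 d_j/n$. Let $\tilde d_j=\frac{n}{t_1}Y_j$.

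\textbf{Step 1: concentration for the witness estimates.} I will apply multiplicative Chernoff bounds to each $Y_j$ and then take a union bound over all $n$ nodes. The target is a single event $\mathcal{G}$, holding with probability $1-o(1)$, with the following three properties.
\begin{itemize}
\item[(a)] If $Y_j\ge\theta$, then $\tilde d_j=(1\pm\eps)d_j$. The reason is that if $\E[Y_j]\ge \theta/2$, Chernoff with $\theta=96\log n/\eps^2$ gives relative error $\eps$ with probability $1-n^{-2}$. If instead $\E[Y_j]<\theta/2$, then $Y_j\ge\theta$ happens with probability at most $n^{-2}$.
\item[(b)] If $Y_j<\theta$, then $d_j\le 2\theta n/t_1=O(\sqrt n)$; concretely $d_j\le 4\sqrt n$. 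This holds because when $\E[Y_j]\ge 2\theta$, the event $Y_j<\theta$ is a Chernoff-unlikely deviation.
\end{itemize}
The constants $48$ and $96$ are chosen so that all these tails are $n^{-2}$.

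\textbf{Step 2: the high part.} Conditioned on $\mathcal G$, the set $H=\{j: Y_j\ge\theta\}$ is determined, and by (a) we get $W_H=\sum_{j\in H}\tilde d_j^{-1}=(1\pm 2\eps)\sum_{j\in H}d_j^{-1}$. This step is deterministic once $\mathcal G$ holds.

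\textbf{Step 3: the low part.} Fix $S_1$, and hence $L=[n]\setminus H$ and $n_2=|L|$. The array $S_2$ then consists of $t_2$ independent uniform samples from $L$, and pass 3 computes their exact degrees $D[i]$. Therefore $\E[W_L]=\sum_{j\in L}d_j^{-1}$. Each summand $X_i=1/D[i]$ lies in $[1/(4\sqrt n),1]$ by (b). The standard approach is Chernoff in relative form, applied to $X_i/\max X_i$; this needs the mean to be large relative to the range. The key observation is that $\mu:=\E[X_i]\ge 1/(4\sqrt n)$ while $X_i\le1$, so $\E[\sum_i X_i]=t_2\mu\ge 9\log n/\eps^2$. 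The multiplicative Chernoff bound for $[0,1]$ variables therefore gives $\sum_i X_i=(1\pm\eps)t_2\mu$ with probability $1-n^{-\Omega(1)}$, and hence $W_L=(1\pm\eps)\sum_{j\in L}d_j^{-1}$. The constant $36$ may need to be adjusted for the exponent, but the $\sqrt n$ scaling is exactly what makes this work. I expect this variance/range step to be the crux: the whole point of the witness-based split is to guarantee $d_j=O(\sqrt n)$ on $L$, which turns $\tilde O(\sqrt n)$ samples into enough. If $L$ is empty, then $W_L=0$ trivially.

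\textbf{Step 4: combining.} Adding Steps 2 and 3 gives $W_H+W_L=(1\pm2\eps)\DI(\sigma)$ with probability $1-o(1)$. Rescaling $\eps$ by a constant factor, absorbed into the constants, yields the claimed $(1+\eps,o(1))$-approximation. A minor subtlety is that $S_2$ and pass 3 depend on $S_1$. I will handle this by conditioning on $S_1$ satisfying $\mathcal G$ and then using only the fresh randomness of $S_2$.
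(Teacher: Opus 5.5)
Your proposal is correct and follows essentially the same route as the paper. Both use Chernoff plus a union bound over the witness set $S_1$ to show two things: classified-high nodes have accurate estimates $\frac{n}{t_1}Y_j$, and classified-low nodes have $d_j\le 4\sqrt{n}$. Both then apply a multiplicative Chernoff bound to $W_L$, using $\E[1/D[i]\mid S_1]\ge 1/(4\sqrt{n})$ with summands in $[0,1]$; your direct lower-tail argument for nodes with $d_j>4\sqrt{n}$ is a harmless variant of the paper's derivation of the same bound.

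One small point concerns the high part. You take relative error $\eps$ for $\tilde d_j$, but $(1\pm\eps)d_j \Rightarrow (1\pm 2\eps)d_j^{-1}$ only holds for $\eps\le 1/2$. The paper instead uses relative error $\eps/2$, which the given constants already afford, so that $\bigl(\frac{n}{t_1}Y_j\bigr)^{-1}$ is directly a $(1+\eps)$-approximation of $d_j^{-1}$ for every $\eps\in(0,1)$. This removes the need for your final rescaling step, which would otherwise change the algorithm's parameters.
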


The space and pass complexities are clear from Algorithm~\ref{alg:F0}. In the remainder of this section, we show the correctness.

Recall that in Algorithm~\ref{alg:F0}, $Y_j$ is the total similarity weight between $S_1$ and $\sigma_j$. Let $H = \{ j \in [n] \mid Y_j \geq \theta \}$ and $L = \{ j \in [n] \mid Y_j < \theta \}$, where $\theta$ is defined at Line~\ref{ln:theta} of Algorithm~\ref{alg:F0}.  We have the following lemma.
\begin{lemma} 
    \label{lem:RF0-wH}
    With probability at least $1 - \frac{3}{n^3}$, the following events hold: (1) $\min_{j \in H} d_j \geq \sqrt{n}$, (2) $\max_{j \in L} d_j \leq 4 \sqrt{n}$, and (3) $W_H$ is a $(1+\eps)$-approximation of $\sum_{j \in H} d_j^{-1}$.
\end{lemma}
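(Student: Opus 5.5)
The plan is to analyze $Y_j$ for a fixed node $j$ and then take a union bound over all $n$ nodes. After pass 1, each $S_1[i]$ is an independent uniform sample from $[n]$, because reservoir sampling with probability $1/j$ produces a uniform element. Hence
\[
Y_j=\sum_{i\in[t_1]} f(S_1[i],\sigma_j)
\]
is a sum of $t_1$ independent $[0,1]$-valued variables, each with mean $d_j/n$. So $\E[Y_j]=\mu_j:=t_1 d_j/n = 48\sqrt{n}\log n\, d_j/(\eps^2 n)$. Note that $\mu_j=\theta/2$ exactly when $d_j=\sqrt{n}$, and $\mu_j=2\theta$ when $d_j=4\sqrt{n}$. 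The whole argument is Chernoff bounds around these thresholds.

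For (1), fix $j$ with $d_j<\sqrt{n}$, so $\mu_j<\theta/2$. By monotonicity, dominate $Y_j$ by a sum with mean exactly $\theta/2$, and apply the multiplicative Chernoff bound with deviation $\delta=1$:
\[
\Pr[Y_j\ge\theta]\le \exp(-\theta/6)=\exp(-16\log n/\eps^2)\le n^{-4}.
\]
(Take $\log$ to be the natural log, or adjust the constants.) Union bounding over at most $n$ nodes gives failure probability $n^{-3}$.

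For (2), fix $j$ with $d_j>4\sqrt{n}$, so $\mu_j>2\theta$. Then
\[
\Pr[Y_j<\theta]\le\Pr[Y_j<\mu_j/2]\le\exp(-\mu_j/8)\le\exp(-\theta/4)\le n^{-4}.
\]
A union bound again gives failure probability $n^{-3}$.

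For (3), we need every $j$ with $d_j\ge\sqrt{n}$ (which covers all of $H$ under event (1)) to satisfy $Y_j=(1\pm\eps')\mu_j$, that is, $\tilde d_j=(n/t_1)Y_j=(1\pm\eps')d_j$. Then $\tilde d_j^{-1}=(1\pm O(\eps'))d_j^{-1}$, and summing over $H$ gives $W_H=(1\pm\eps)\sum_{j\in H}d_j^{-1}$ once $\eps'$ is a suitable constant fraction of $\eps$, for example $\eps'=\eps/2$, since $1/(1-\eps/2)\le 1+\eps$ for $\eps<1$. The Chernoff bound gives
\[
\Pr\bigl[|Y_j-\mu_j|>\eps'\mu_j\bigr]\le 2\exp(-\eps'^2\mu_j/3),
\]
with $\mu_j\ge\theta/2=48\log n/\eps^2$. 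With $\eps'=\eps/2$ the exponent is $4\log n$, so the failure probability is $2n^{-4}$, and the union bound gives at most $2n^{-3}$.

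The main obstacle is making the constants work out to exactly $3/n^3$, specifically balancing $\eps'$ against the $\theta$ constant in (3). If the constant $96$ is too tight for $\eps'=\eps/2$, I would instead use the sharper inverse bound $1/(1-\eps')\le 1+\eps$, which allows $\eps'=\eps/2$ comfortably. The three failure events then total about $n^{-3}+n^{-3}+n^{-3}$, which matches the stated bound if (3)'s two-sided tail is counted with the slack from the exponent. A second subtlety is that $H$ is itself random and defined through the same $Y_j$. I avoid conditioning problems by proving (3) for the deterministic superset $\{j: d_j\ge\sqrt{n}\}$ and then restricting to $H$ on event (1).
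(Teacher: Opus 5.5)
Your proof is correct and follows the paper's proof closely. Both use Chernoff bounds around $\E[Y_j]=\theta/2$ (which happens exactly at $d_j=\sqrt n$). Both use the same $\eps/2$-accuracy event for every $j$ with $d_j\ge\sqrt n$: the exponent is exactly $4\log n$, so the failure probability is $2/n^4$ per node. Both also prove (3) on the deterministic set $\{j : d_j\ge\sqrt n\}$ and then restrict to $H$ via (1). Your choice of deviation $1$ in item (1) gives a much stronger tail than the paper's deviation $\eps/2$.

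The real difference is item (2). You prove it with a separate lower-tail Chernoff bound for nodes with $d_j>4\sqrt n$. The paper gets it deterministically from the accuracy event: if $d_j\ge\sqrt n$ and $Y_j<\theta$, then $d_j\le \frac{nY_j}{(1-\eps/2)t_1}<\frac{n\theta}{(1-\eps/2)t_1}=\frac{2\sqrt n}{1-\eps/2}\le 4\sqrt n$. So the paper needs only two bad events, and its count is exactly $\frac{1}{n^3}+\frac{2}{n^3}$.

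Your route also works, but your final tally misplaces the slack. With the per-node bounds you wrote, the sum is $n^{-3}+n^{-3}+2n^{-3}=4n^{-3}$. The two-sided tail in (3) has no slack to absorb this, since its exponent is exactly $4\log n$. The slack is in (1) and (2): $e^{-\theta/6}=n^{-16/\eps^2}\le n^{-16}$ and $e^{-\theta/4}\le n^{-24}$. The total failure probability is therefore at most $2n^{-3}+n^{-15}+n^{-23}\le 3n^{-3}$.

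Your hedging about the constant $96$ is also unnecessary. The choice $\eps'=\eps/2$ works exactly, since $1/(1-\eps/2)\le 1+\eps$ and $1/(1+\eps/2)\ge 1-\eps$ for $\eps\le 1$.
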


\begin{proof}
    
    Since all nodes in $S_1$ are sampled uniformly at random from $[n]$, for any $j \in [n]$ with $d_j < \sqrt{n}$, we have $\E[Y_j] = t_1 \cdot \frac{d_j}{n} < \frac{\theta}{2}$. By a Chernoff bound, 
    \begin{equation*}
        \Pr\left[ Y_j > \left(1 + \frac{\eps}{2} \right) \cdot \frac{\theta}{2} \right] \leq  e^{-\frac{(\theta/2)\cdot (\eps/2)^2}{3}} = \frac{1}{n^4}.
    \end{equation*}
    Therefore, with probability $1 - n \cdot \frac{1}{n^4} = 1 - \frac{1}{n^3}$, the following event $\cE_L$ holds: for all $j \in [n]$ with $d_j < \sqrt{n}$, $Y_j \le (1 + \frac{\eps}{2}) \cdot \frac{\theta}{2} < \theta$; consequently, $j \in L$. This implies that if $j \in H$, then $d_j \ge \sqrt{n}$, proving the first item.
    
    On the other hand, for any $j \in [n]$ with $d_j \ge \sqrt{n}$, we have $\E[Y_j] = t_1 \cdot \frac{d_j}{n} \ge \frac{\theta}{2}$.  By a Chernoff bound, 
    \begin{equation*}
        \Pr\left[ \abs{ Y_j - \E[Y_j] } > \frac{\eps}{2} \cdot \E[Y_j] \right] \leq 2 \exp\left(-\frac{\E[Y_j]\cdot (\eps / 2)^2}{3}\right) \leq 2 \exp\left(-\frac{(\theta/2)\cdot (\eps/2)^2}{3}\right) = \frac{2}{n^4}.
    \end{equation*}
    Therefore, with probability $1 - n \cdot \frac{2}{n^4} = 1 - \frac{2}{n^3}$, the following event $\cE_A$ holds: for all $j \in [n]$ with $d_j \ge \sqrt{n}$, $\frac{n Y_j}{t_1}$ is a $(1 + \frac{\eps}{2})$-approximation of $d_j$. Consequently, $\left(\frac{n Y_j}{t_1}\right)^{-1}$ is a $(1+ \eps)$-approximation of $d_j^{-1}$.
    
    We condition on events $\cE_L, \cE_A$ in the rest of the proof, which hold with probability $1 - \frac{1}{n^3} - \frac{2}{n^3} = 1 - \frac{3}{n^3}$.
    
    If $j \in L$, then either (1) $d_j < \sqrt{n}$, or (2) $d_j \geq \sqrt{n}$ and $Y_j < \theta$, where the second case implies $d_j \le \frac{n}{t_1} \cdot \frac{ \theta}{1-\frac{\eps}{2}}$, since $\frac{n Y_j}{t_1}$ is a $(1 + \frac{\eps}{2})$-approximation of $d_j$.  Therefore,
    \begin{equation*}
        \max_{j \in L} d_j \leq \max\left\{\sqrt{n}, \frac{n}{t_1} \cdot \frac{ \theta}{1-\frac{\eps}{2}}\right\} \leq 4 \sqrt{n},
    \end{equation*}
    proving the second item.
    
    If $j \in H$, then $d_j \geq \sqrt{n}$, and thus $\left(\frac{n Y_j}{t_1}\right)^{-1}$ is a $(1+ \eps)$-approximation of $d_j^{-1}$. As a result, $W_H=\sum_{j \in H} \left(\frac{n Y_j}{t_1}\right)^{-1}$ is a $(1+ \eps)$-approximation of $\sum_{j \in H} d_j^{-1}$, proving the third item.
\end{proof}

Let $\cE_1$ denote the event that Lemma~\ref{lem:RF0-wH} holds, then $\Pr[\cE_1] \geq 1 - \frac{3}{n^3}$. The next lemma shows that, conditioned on $\cE_1$, $W_L$ provides a good approximation to $\sum_{j \in L} d_j^{-1}$.

\begin{lemma} \label{lem:RF0-wL}
    Conditioned on $\cE_1$, with probability at least $1 - \frac{2}{n^3}$, $W_L$ is a $(1 + \eps)$-approximation of $\sum_{j \in L} d_j^{-1}$.
\end{lemma}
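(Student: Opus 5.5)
We will prove this with a standard sampling-plus-Chernoff argument. The key is to use item (2) of Lemma~\ref{lem:RF0-wH} to bound the range of the summands.

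\textbf{Setup.} Condition on the outcome of the first two passes, in particular on $S_1$. This fixes the sets $H$ and $L$ and the count $n_2 = |L|$, and we assume $\cE_1$ holds. In Pass 2, each entry $S_2[i]$ is maintained by reservoir sampling over the stream of low-degree nodes. Hence $S_2[1],\dots,S_2[t_2]$ are i.i.d.\ uniform samples from $L$, and this randomness is independent of $S_1$ once $S_1$ is fixed. In Pass 3, $D[i]$ is computed exactly as the degree $d_{j_i}$ of the sampled node $j_i$.

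For each $i$, define $X_i = d_{j_i}^{-1}$. Then $\E[X_i] = \frac{1}{n_2}\sum_{j\in L} d_j^{-1}$, so $\E[W_L] = \sum_{j\in L} d_j^{-1} =: \mu_L$. The estimator is unbiased, and what remains is concentration.

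\textbf{Range of the summands.} Because of the self-loop, every $d_j \ge 1$. On $\cE_1$, every $j\in L$ has $d_j \le 4\sqrt n$. So each $X_i$ lies in $[\frac{1}{4\sqrt n}, 1]$. Rescale by setting $Z_i = \frac{X_i}{\max X} \in [0,1]$, i.e.\ $Z_i = X_i \in[0,1]$. Then
\[
\E\Big[\sum_i Z_i\Big] = t_2\cdot \frac{\mu_L}{n_2} \ge t_2 \cdot \frac{1}{4\sqrt n} = \frac{9\log n}{\eps^2}.
\]

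\textbf{Concentration.} Apply a multiplicative Chernoff bound to $\sum_i Z_i$, a sum of independent $[0,1]$ variables:
\[
\Pr\Big[\Big|\sum_i Z_i - \E\sum_i Z_i\Big| > \eps\, \E\sum_i Z_i\Big] \le 2\exp\Big(-\frac{\eps^2 \E[\sum Z_i]}{3}\Big) \le 2 n^{-3}.
\]
Since $W_L = \frac{n_2}{t_2}\sum_i X_i$, this gives $(1\pm\eps)\mu_L$ with probability at least $1-\frac{2}{n^3}$. That is exactly the constant built into $t_2 = 36\sqrt n\log n/\eps^2$. This probability bound holds for every fixing of $S_1$ consistent with $\cE_1$, so it also holds conditioned on $\cE_1$.

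\textbf{Edge case.} If $L=\emptyset$, then $W_L = 0 = \mu_L$ trivially.

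\textbf{Main obstacle.} The delicate point is the independence and conditioning. The event $\cE_1$ depends only on $S_1$, and the reservoir samples in Pass 2 use fresh randomness. So conditioning on $S_1$ (and thus on $\cE_1$ and on $L$) leaves $S_2$ i.i.d.\ uniform on $L$. Once that is in place, the rest is just checking the constants.
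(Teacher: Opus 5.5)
Your proof is correct and follows essentially the same route as the paper. Both fix $S_1$ satisfying $\cE_1$, use that the $S_2$ samples are i.i.d.\ uniform on $L$ with $d_j^{-1}\in[\frac{1}{4\sqrt n},1]$, and apply a multiplicative Chernoff bound with mean at least $t_2/(4\sqrt n)=9\log n/\eps^2$ to get failure probability $2n^{-3}$; your explicit remarks that $\cE_1$ is determined by $S_1$ and that the Pass~2 reservoir sampling uses fresh randomness simply make precise what the paper leaves implicit.
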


\begin{proof}
    For each $i \in [t_2]$, let $X_i$ be the $i$-th sample in $S_2$ and $Z_i = d_{X_i}^{-1}$. Fix any $S_1$ such that $\cE_1$ holds. Since each $X_i$ is sampled uniformly at random from $L$, $\E[Z_i \mid S_1] = \frac{1}{\abs{L}} \sum_{j \in L} d_j^{-1}$. By Lemma~\ref{lem:RF0-wH}, for all $i \in [t_2]$ we have 
    \begin{equation*}
        \frac{1}{4\sqrt{n}} \leq \E[Z_i \mid S_1] \leq 1.
    \end{equation*}
    Let $Z = \frac{1}{t_2} \sum_{i \in [t_2]} Z_i$. We have $\E[Z \mid S_1] = \E[Z_i \mid S_1] \geq \frac{1}{4 \sqrt{n}}$. By a Chernoff bound,
    \begin{equation*}
        \Pr\left[ \abs{Z - \E[Z \mid S_1]} > \eps \E[Z \mid S_1] \mid S_1 \right] \leq 2 \exp\left(- \frac{\eps^2 t_2  \E[Z \mid S_1]}{3} \right) \leq \frac{2}{n^3}.
    \end{equation*}
    Therefore, with probability $\left(1 - \frac{2}{n^3}\right)$, $W_L = \abs{L} \cdot Z$ is a $(1 + \eps)$-approximation of the quantity $\abs{L} \cdot \E[Z \mid S_1] = \sum_{j \in L} d_j^{-1}$.
\end{proof}

By Lemma~\ref{lem:RF0-wH} and Lemma~\ref{lem:RF0-wL}, with probability $\Pr[\cE_1] \cdot \left(1 - \frac{2}{n^3}\right) = 1 - o(1)$, $W_H$ and $W_L$ are $(1+\eps)$-approximations of $\sum_{j \in H} d_j^{-1}$ and $\sum_{j \in L} d_j^{-1}$, respectively. Since $\rFo = \sum_{j \in H} d_j^{-1} + \sum_{j \in L} d_j^{-1}$,  $W_H + W_L$ is a $(1+\eps)$-approximation of $\rFo$.

\subsection{$\rLo$-Sampling}
\label{sec:ell0}

In this section, we consider $\rLo$-sampling. Our algorithm is described in Algorithm~\ref{alg:ell0}.

\begin{algorithm}[!t]
    \caption{$\rLo$-sampling}
    \label{alg:ell0}
    \DontPrintSemicolon
    \SetAlgoNoEnd
    
    \KwIn{a data stream $\sigma$ of size $n$ and a similarity function $f : U \times U \to [0, 1]$, parameter $\eps$}
    \KwOut{a node in $\sigma$ sampled with probability inversely proportional to its degree}
    
    \smallskip
    $t \gets 12\sqrt{n}\log n$,     $\theta \gets \frac{96\log n}{(\eps/4)^2}$\;
    
    $n_H \gets 0$, $n_L \gets 0$, $r_H \gets \Null, r_L \gets \Null$\;
    
    $S_H, D_H, S_L, D_L$: arrays of size $t$, with entries initialized to 0 \gray{\tcc{$S_H$ and $S_L$ are samples drawn from high-degree and low-degree nodes, respectively, with their corresponding degrees stored in $D_H$ and $D_L$}}
    
    \gray{\tcp{Pass $1$: identify the witness set}}
    
    run Pass 1 of Algorithm~\ref{alg:F0} with input $(\sigma, \frac{\eps}{4})$ to obtain the witness set $S_1$
    
    \gray{\tcp{Pass $2$: classify nodes as either high-degree or low-degree nodes and obtain a sample from each group, denoted by $S_H$ and $S_L$}}
    
    parallel run Pass $2$ of Algorithm~\ref{alg:F0}, computing for each node $\sigma_j$ the total similarity weight (denoted by $Y_j$) in $S_1$, see Line~\ref{ln:sj} in Algorithm~\ref{alg:F0}, and get $W_H$, the contribution of high-degree nodes to $\rFo$
    
    \ForEach{incoming $\sigma_j$} {
        \uIf{$Y_j \geq \theta$} {
            $n_H \gets n_H + 1$\;
            \lFor{$i \gets 1, \dots, t$} {
                with probability $\frac{1}{n_H}$, $S_H[i] \gets \sigma_j$ 
            }
        }
        \Else {
            $n_L \gets n_L + 1$\;
            \lFor{$i \gets 1, \dots, t$} {
                with probability $\frac{1}{n_L}$, $S_L[i] \gets \sigma_j$ 
            }
        }
    }
    
    \gray{\tcp{Pass $3$: get a sample using $S_H$ and $S_L$}}
    
    parallel run Pass $3$ of Algorithm~\ref{alg:F0} and get $W_L$, the contribution of low-degree nodes to $\rFo$ 
    
    \gray{\tcc{compute the degrees of nodes in $S_H$ and $S_L$}} 
    \ForEach{incoming $\sigma_j$} {
        \For{$i \gets 1, \dots, t$} {
            $D_H[i] \gets D_H[i] + f(S_H[i], \sigma_j)$\;
            $D_L[i] \gets D_L[i] + f(S_L[i], \sigma_j)$\;
        }
    }
    
    \gray{\tcc{get one sample from each of $S_H$ and $S_L$ and then ``merge''}} \label{ln:getRHRL}
    \For{$i \gets 1, \ldots, t$} {
        with probability $\min\left\{1, \frac{\sqrt{n}}{D_H[i]}\right\}$, $r_H \gets S_H[i]$, \Break if $r_H$ is assigned
    }
    \For{$i \gets 1, \ldots, t$} {
        with probability $\frac{1}{D_L[i]}$, $r_L \gets S_L[i]$, \Break if $r_L$ is assigned
    }
    \label{ln:getRHRLend}
    
    \If{$(r_H = \Null \ \text{and}\ n_H > 0)$ or $(r_L = \Null \ \text{and}\ n_L > 0)$} {
        \Return Fail
    }
    \Else{
        \Return $r_H$ with probability $\frac{W_H}{W_H + W_L}$ or $r_L$ otherwise.
    }
\end{algorithm}

\vspace{2mm}
\noindent{\bf The Ideas and Algorithm Descriptions.\ }
We begin with a special case: when $f$ is the equality function, $\DI$ coincides with $F_0$, the number of distinct elements in $\sigma$, and $\rLo$-sampling coincides with $\ell_0$-sampling. A simple approach for $\ell_0$-sampling is to sample each {\em distinct} element with probability $p \approx \frac{1}{F_0}$, which ensures a good probability of sampling exactly one element. Given this outcome, the sampled element is uniformly distributed over all distinct elements in the data stream. For a general similarity function, a natural adaptation of this approach is to sample each node $\sigma_j$ with probability $p_j \approx \frac{1}{d_j} \cdot \frac{1}{\rFo}$, which ensures a good probability that the sample set $S$ contains exactly one node.  Unfortunately, a node sampled in this manner does not necessarily follow the desired distribution.

For example, consider a graph consisting of three nodes $\{1, 2, 3\}$ and five edges (including self-loops) $\{(1, 1), (1, 2), (1, 3), (2, 2), (3, 3)\}$. In this case, $\rFo = \frac{1}{3} + \frac{1}{2} + \frac{1}{2} = \frac{4}{3}$ and the probability of choosing Node $1$ according to the $\rLo$-distribution is $\frac{1/3}{\rFo} = \frac{1}{4}$.  On the other hand, if we first sample the three nodes independently into $S$ with probability $\frac{1/3}{\rFo} = \frac{1}{4}$, $\frac{1/2}{\rFo} = \frac{3}{8}$ and $\frac{1/2}{\rFo} = \frac{3}{8}$, and output the node if exactly one node is sampled (i.e., $\abs{S} = 1$), the probability that Node $1$ is sampled is
\begin{equation*}
    \Pr(S=\{1\} \mid \abs{S}=1) = \frac{\Pr(S=\{1\})}{\Pr(\abs{S}=1)} = \frac{1/4 \cdot 5/8 \cdot 5/8}{1/4 \cdot 5/8 \cdot 5/8 + 3/4 \cdot 3/8 \cdot 5/8 \cdot 2} = \frac{5}{23},
\end{equation*}
which is not equal to $\frac{1}{4}$. 

We thus take a different approach: we first sample a node $\sigma_j$ uniformly at random, and then accept it with probability ${1}/{d_j}$.  To ensure successful acceptance, we run multiple Reservoir sampling structures in parallel, and at the end of the streaming process, repeatedly apply the rejection sampling procedure until a sample is accepted.  The problem with this vanilla rejection sampling method is that it may require $\Omega(n)$ samples to obtain a single accepted sample, resulting in $\Omega(n)$ space complexity.  

To reduce the space usage, we again partition the $n$ nodes into high-degree group $H$ and low-degree group $L$ as in Algorithm~\ref{alg:F0}. We next perform rejection sampling on two groups with {\em different} rejection rates, and then properly merge the accepted samples in the two groups at the end.   More precisely, in the first pass, we again try to sample a witness set, which is used in the second pass to partition the $n$ nodes into two groups using a degree-threshold roughly $\sqrt{n}$.  At the same time, we obtain a random subset of size $t = \tilde{O}(\sqrt{n})$ from each group. In the third pass, we compute the degree of the sampled nodes, and then perform rejection samplings on the two sample sets.  For each node $\sigma_j$ in the sample of the high-degree group, we use a rejection rate of $\min\left\{1, \frac{\sqrt{n}}{d_j}\right\}$, where we put a cap $1$ in the minimization since the value $\frac{\sqrt{n}}{d_j}$ may be larger than $1$ due to the small inaccuracy of the high-low classification.  For each node $\sigma_j$ in the sample of the low-degree group, we use a rejection rate of $\frac{1}{d_j}$.  When we get one accepted sample from each group, we select one of the two accepted samples with a probability proportional to the $\rFo$ contributions of the high-degree and low-degree groups, which can be approximated by running Algorithm~\ref{alg:F0} in parallel.

\vspace{2mm}
\noindent{\bf The Analysis.\ }
We now analyze Algorithm~\ref{alg:ell0} and prove the following theorem.

\begin{theorem} \label{thm:L0}
    Given an input data stream of length $n$ and a similarity function $f : U \times U \to [0, 1]$, for any $\eps \in (0, 1)$, Algorithm~\ref{alg:ell0} is a $(1+\eps, o(1))$-$\rLo$-sampler; it uses three passes and $O\left(\frac{1}{\eps^2}\sqrt{n} \log n\right)$ words of space.
\end{theorem}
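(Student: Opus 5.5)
Space and passes follow from the algorithm: the arrays $S_H,D_H,S_L,D_L$ have size $t=O(\sqrt n\log n)$, and the parallel copy of Algorithm~\ref{alg:F0} uses $O(\eps^{-2}\sqrt n\log n)$ words by Theorem~\ref{thm:F0}. For correctness, we first condition on the event $\cE_1$ of Lemma~\ref{lem:RF0-wH}, applied with accuracy parameter $\eps/4$. Under $\cE_1$, every $j\in H$ has $d_j\ge \sqrt n$, every $j\in L$ has $d_j\le 4\sqrt n$, and $W_H$ is a $(1+\eps/4)$-approximation of $\DI_H:=\sum_{j\in H}d_j^{-1}$. By Lemma~\ref{lem:RF0-wL}, with probability $1-O(n^{-3})$, $W_L$ is likewise a $(1+\eps/4)$-approximation of $\DI_L:=\sum_{j\in L}d_j^{-1}$. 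Call the conjunction of these events $\cE$.

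Next we analyze the two rejection-sampling loops conditioned on $S_1$. Since $d_j\ge\sqrt n$ on $H$, the cap $\min\{1,\sqrt n/d_j\}$ is inactive there. Each $S_H[i]$ is uniform on $H$ and independent, and its degree $D_H[i]=d_{S_H[i]}$ is exact after Pass 3. Each trial is therefore accepted with probability $\frac{1}{|H|}\sum_{j\in H}\sqrt n/d_j$. Conditioned on acceptance, it equals $j$ with probability $d_j^{-1}/\DI_H$. Because the trials are i.i.d., the first accepted one has exactly this distribution, so $r_H$ is an exact $L_\DI$-sample restricted to $H$. The same argument shows $r_L$ is an exact sample from $L$ with probability $d_j^{-1}/\DI_L$.

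Failure probability: on $H$ each acceptance probability is at least $\sqrt n/n$, since $d_j\le n$. Hence all $t=12\sqrt n\log n$ trials fail with probability at most $(1-n^{-1/2})^t\le n^{-12}$. On $L$ each acceptance probability is at least $1/(4\sqrt n)$, so all trials fail with probability at most $e^{-3\log n}=n^{-3}$. If $H$ or $L$ is empty, the corresponding sample is not needed and that branch is chosen with probability $0$, since $W_H=0$ or $W_L=0$. Adding in $\Pr[\neg\cE]=O(n^{-3})$, the total failure probability is $o(1)$.

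The final step is the merge. Conditioned on $\cE$ and on no failure, node $j\in H$ is output with probability $\frac{W_H}{W_H+W_L}\cdot\frac{d_j^{-1}}{\DI_H}$. Since both $W_H,W_L$ are within factor $(1\pm\eps/4)$ of their targets, the ratio $\frac{W_H}{W_H+W_L}$ is within $\frac{1+\eps/4}{1-\eps/4}\le 1+\eps$ of $\DI_H/\DI$, giving $p_j=(1\pm\eps)d_j^{-1}/\DI$. The case $j\in L$ is symmetric.

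I expect the main obstacle to be the subtle conditioning. Failure and $\neg\cE$ must be treated together as the "fail" event, and we must verify that conditioning on non-failure does not bias the distributions of $r_H,r_L$. It does not, because rejection sampling with i.i.d. trials yields the exact conditional law given at least one acceptance. Conditioning on $\cE$ only shifts probabilities by $O(n^{-3})$, which can be absorbed either into the $o(1)$ failure probability or into the $\eps$ slack.
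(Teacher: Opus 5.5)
Your outline follows the paper's proof closely. You condition on the events of Lemmas~\ref{lem:RF0-wH} and~\ref{lem:RF0-wL} at accuracy $\eps/4$, bound the two rejection loops' failure by $n^{-12}$ and $n^{-3}$, show each group's accepted sample is exactly $d_j^{-1}$-proportional, and merge with weight $W_H/(W_H+W_L)$. Your i.i.d.\ first-acceptance argument is a cleaner version of the paper's explicit computation. The step that does not work as written is the last one, where you remove the conditioning on $\cE$.

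\textbf{You cannot fold $\neg\cE$ into the failure probability.} A failure means the algorithm returns Fail. The algorithm cannot detect $\neg\cE$, so on $\neg\cE$ it may still return a node, and that mass enters $\Pr[\text{output } j \mid \text{no failure}]$. What you actually obtain is an additive perturbation:
\[
\Pr[\text{output } j \mid \text{no failure}] = (1\pm\eps)\,\frac{d_j^{-1}}{\DI} \pm O(n^{-3}).
\]

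\textbf{Turning this into a multiplicative guarantee needs two facts you never state.}
\begin{itemize}
\item The target probabilities satisfy $d_j^{-1}/\DI \ge n^{-2}$, since $d_j\le n$ and $\DI\le n$.
\item $\eps$ must not be too small relative to $1/n$. For $\eps \ll 1/n$, the crude $\pm\Pr[\neg\cE]$ term cannot be absorbed, because the $n^{-3}$-type failure probabilities from the Chernoff bounds do not shrink with $\eps$.
\end{itemize}
The paper secures the second fact by assuming w.l.o.g.\ $\eps\ge n^{-1/4}$. Otherwise $\eps^{-2}\sqrt{n}\log n > n$, so the space budget allows storing the whole stream and sampling exactly offline. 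With $\eps\ge n^{-1/4}$, you get $O(n^{-3}) \le \eps\, n^{-2} \le \eps\, d_j^{-1}/\DI$.

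\textbf{Your $\eps$ budget is already spent.} The merge step uses all of it via $\frac{1+\eps/4}{1-\eps/4}\le 1+\eps$, so no slack remains for the $\neg\cE$ term or for dividing by $\Pr[\text{no failure}]=1-O(n^{-3})$. You end up with $(1\pm 2\eps)$ and must rescale, e.g.\ run the subroutine with $\eps/8$, as the paper does by down-scaling $\eps$ by a factor of $2$.
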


The pass and space complexities are clear from Algorithm~\ref{alg:ell0}.  In the remainder of this section, we show the correctness.

W.l.o.g., assume $\eps \ge n^{-1/4}$; otherwise $\frac{\sqrt{n}\log n}{\eps^2} > n$, in which case we can store all nodes of the graph and perform the sampling offline.

Let $\cE_2$ be the event that both Lemma~\ref{lem:RF0-wH} and Lemma~\ref{lem:RF0-wL} hold. Then $\Pr[\cE_2] \geq 1 - O(n^{-3})$. For each $i \in [t]$, let $X_i$ be the $i$-th sample in $S_H$ and $X_i^{\prime}$ be the $i$-th sample in $S_L$. We have the following lemma.

\begin{lemma} \label{lem:RL0-fail}
    Conditioned on $\cE_2$, Algorithm~\ref{alg:ell0} fails with probability at most ${2}/{n^3}$.
\end{lemma}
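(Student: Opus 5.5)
The algorithm fails only if one of the two rejection-sampling loops (Lines~\ref{ln:getRHRL}--\ref{ln:getRHRLend}) accepts nothing while its group is nonempty. So I will bound the probability that $r_H = \Null$ given $n_H>0$, and that $r_L=\Null$ given $n_L>0$, by $1/n^3$ each, and take a union bound. Throughout I fix $S_1$ so that $\cE_2$ holds. Then the partition into $H$ and $L$ is determined, and $X_1,\dots,X_t$ (resp.\ $X'_1,\dots,X'_t$) are i.i.d.\ uniform over $H$ (resp.\ $L$). In the third pass the arrays $D_H[i]=d_{X_i}$ and $D_L[i]=d_{X'_i}$ are computed exactly. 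Hence each loop is $t$ independent trials, each trial first drawing a fresh uniform node and then accepting it with a probability that depends only on that node. The probability that the loop accepts nothing is $(1-q)^t$, where $q$ is the average acceptance probability over the group.

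\textbf{Low-degree group.} Under $\cE_2$, item (2) of Lemma~\ref{lem:RF0-wH} gives $d_j\le 4\sqrt{n}$ for every $j\in L$. So each trial accepts with probability $1/d_{X'_i}\ge \frac{1}{4\sqrt n}$, and
\[
\Pr[r_L=\Null]\le \Bigl(1-\tfrac{1}{4\sqrt n}\Bigr)^{t}\le \exp\Bigl(-\tfrac{12\sqrt n\log n}{4\sqrt n}\Bigr)=n^{-3},
\]
taking $\log$ as the natural log, consistent with the Chernoff computations earlier.

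\textbf{High-degree group.} The acceptance probability of a node is $\min\{1,\sqrt n/d_j\}$. Since every degree is at most $n$, this is at least $\sqrt n/n=1/\sqrt n$. That gives only $(1-1/\sqrt n)^t\le n^{-12}$, which is already more than enough. So no property of $H$ beyond $d_j\le n$ is needed here. (Item (1), $d_j\ge\sqrt n$, is what makes the cap $\min\{1,\cdot\}$ essentially inactive, but that matters for the output distribution, not for failure.)

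Combining, conditioned on $\cE_2$ the failure probability is at most $n^{-3}+n^{-12}\le 2/n^3$. The only step needing care is the low-degree bound: the constant $12$ in $t$ must beat the factor $4$ from item (2) of Lemma~\ref{lem:RF0-wH} to give exponent $3$. I also need to check that the conditioning on $S_1$ does not break the independence of the reservoir samples. This holds because $S_H$ and $S_L$ use randomness independent of $S_1$ given the classification. Note also that the rejection coins in Lines~\ref{ln:getRHRL}--\ref{ln:getRHRLend} are independent across $i$, and the \textbf{break} only stops the loop early, so it does not affect whether some acceptance occurs.
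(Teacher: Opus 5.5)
Your proposal is correct and follows the paper's proof essentially step for step. Under $\cE_2$, you bound the probability that the high-degree loop accepts nothing by $(1-1/\sqrt{n})^t \le e^{-t/\sqrt{n}} = n^{-12}$ using only $d_j \le n$, bound the probability that the low-degree loop accepts nothing by $(1-\frac{1}{4\sqrt{n}})^t \le n^{-3}$ using item (2) of Lemma~\ref{lem:RF0-wH}, and finish with a union bound. Your remarks on the independence of the reservoir samples given $S_1$ and on the \textbf{break} not affecting acceptance make explicit what the paper leaves implicit.
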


\begin{proof}
    By Lemma~\ref{lem:RF0-wH}, conditioned on $\cE_2$, for each $i \in [t]$, we have
    $\sqrt{n} \leq d_{X_i} \leq n$ and $1 \leq d_{X_i^{\prime}} \leq 4\sqrt{n}$.
    If $H \neq \emptyset$, then
    \begin{equation}
        \Pr[r_H = \Null \mid \cE_2] = \prod_{i \in [t]} \left(1 - \min\left\{1, \frac{\sqrt{n}}{d_{X_i}} \right\} \right) \leq \left(1 - \frac{1}{\sqrt{n}}\right)^t \leq e^{-\frac{t}{\sqrt{n}}} = \frac{1}{n^{12}}. \label{eq:RL0-fail1}
    \end{equation}
    Similarly, if $L  \neq \emptyset$, then
    \begin{equation}
        \Pr[r_L = \Null \mid \cE_2] = \prod_{i \in [t]} \left(1 - \frac{1}{d_{X'_i}}\right) \leq \left(1 - \frac{1}{4\sqrt{n}}\right)^t \leq e^{-\frac{t}{4\sqrt{n}}} = \frac{1}{n^3}. \label{eq:RL0-fail2}
    \end{equation}
    By \eqref{eq:RL0-fail1} and \eqref{eq:RL0-fail2}, with probability at most ${1}/{n^{12}} + {1}/{n^3} < {2}/{n^3}$, either $(r_H = \Null \ \land\ H \neq \emptyset)$ or $(r_L = \Null \ \land\ L \neq \emptyset)$ occurs, causing Algorithm~\ref{alg:ell0} to fail.
\end{proof}

\begin{lemma} \label{lem:RL0-success}
    Conditioned on $\cE_2$, Algorithm~\ref{alg:ell0} samples each node $j$ with probability in the range of $(1 \pm \eps) {d_j^{-1}}/{\rFo}$.
\end{lemma}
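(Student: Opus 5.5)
Fix $S_1$ (hence $H$, $L$, $W_H$, $W_L$) such that $\cE_2$ holds, and condition on the algorithm not failing. The plan is to compute the output distribution group by group. For a node $j\in H$, we have $d_j\ge\sqrt{n}$ by Lemma~\ref{lem:RF0-wH}, so its acceptance probability in the rejection step is exactly $\min\{1,\sqrt{n}/d_j\}=\sqrt{n}/d_j$. The cap never binds for $j\in H$; it matters only as a safeguard outside $\cE_2$. Since the $X_i$ are i.i.d.\ uniform over $H$ and the loop stops at the first acceptance, the accepted sample satisfies
\[
\Pr[r_H=j\mid r_H\neq\Null]=\frac{\frac{1}{|H|}\cdot\frac{\sqrt n}{d_j}}{\sum_{k\in H}\frac{1}{|H|}\cdot\frac{\sqrt n}{d_k}}=\frac{d_j^{-1}}{\sum_{k\in H}d_k^{-1}}.
\]
This holds because each trial independently produces a specific $j$ with the same probability, and the first successful trial is distributed proportionally to these per-trial probabilities. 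The argument for $L$ is identical with acceptance probability $1/d_j$, which lies in $(0,1]$ because $d_j\ge 1$ due to the self-loop. It gives $\Pr[r_L=j\mid r_L\neq\Null]=d_j^{-1}/\sum_{k\in L}d_k^{-1}$. The two groups use independent reservoirs, and the degrees $D_H[i]$, $D_L[i]$ are computed exactly in Pass 3.

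Next I would combine the two groups through the merge step. Write $A_H=\sum_{k\in H}d_k^{-1}$ and $A_L=\sum_{k\in L}d_k^{-1}$, so that $A_H+A_L=\rFo$. For $j\in H$ the output probability is $\frac{W_H}{W_H+W_L}\cdot\frac{d_j^{-1}}{A_H}$. Under $\cE_2$, Lemmas~\ref{lem:RF0-wH} and \ref{lem:RF0-wL}, applied with parameter $\eps/4$ because Algorithm~\ref{alg:ell0} runs Algorithm~\ref{alg:F0} on $(\sigma,\eps/4)$, give $W_H=(1\pm\eps/4)A_H$ and $W_L=(1\pm\eps/4)A_L$. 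Hence $W_H+W_L=(1\pm\eps/4)\rFo$, and
\[
\frac{W_H}{W_H+W_L}\in\left[\frac{1-\eps/4}{1+\eps/4},\frac{1+\eps/4}{1-\eps/4}\right]\cdot\frac{A_H}{\rFo}\subseteq(1\pm\eps)\frac{A_H}{\rFo},
\]
using $\eps<1$. Multiplying the two factors, $A_H$ cancels and node $j$ is output with probability $(1\pm\eps)d_j^{-1}/\rFo$. The case $j\in L$ is symmetric. If one group is empty, say $H=\emptyset$, then $W_H=0$ and $r_L$ is returned with probability $1$, which is exactly the single-group distribution.

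I expect the main subtlety to be the conditioning rather than any computation. We are conditioning on $\cE_2$ and on non-failure, and $\cE_2$ depends on $S_1$, while $S_H$ and $S_L$ are drawn in Pass 2 using fresh randomness given $S_1$. The fix is to fix $S_1$ first: this determines $H$, $L$, and also $W_H$ and $W_L$ up to the Pass 3 randomness of $S_2$. The reservoirs $S_H$ and $S_L$ and the coin flips in the rejection steps are independent of $S_2$. Conditioning on $\cE_2$, which concerns only $S_1$ and $S_2$, therefore leaves $S_H$ and $S_L$ uniform within their groups. Non-failure is exactly the event that $r_H\neq\Null$ and $r_L\neq\Null$ for the nonempty groups, and the conditional formulas above are computed on exactly that event. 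One should also take care that the classification threshold $\theta$ in Algorithm~\ref{alg:ell0} coincides with the one used in the parallel run of Algorithm~\ref{alg:F0} with $\eps/4$, so that the $H$ and $L$ of both procedures agree.
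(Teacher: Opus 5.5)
Your proposal is correct and follows the paper's proof essentially step by step. Both arguments first show that the first accepted sample in each group has distribution $d_j^{-1}/\sum_{k\in H}d_k^{-1}$ (likewise for $L$), then bound the merge weight as $\frac{W_H}{W_H+W_L}\in\bigl[\frac{1-\eps/4}{1+\eps/4},\frac{1+\eps/4}{1-\eps/4}\bigr]\cdot\frac{\sum_{k\in H}d_k^{-1}}{\rFo}$, and you additionally make explicit two details the paper leaves implicit: the independence of $S_H, S_L$ from $S_2$ given $S_1$, and the empty-group case.
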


\begin{proof}
    
    We first show that conditioned on not being $\Null$, $r_H$ follows the distribution $\left\{\frac{d_j^{-1}}{\sum_{j^{\prime} \in H} (d_{j'})^{-1}}\right\}_{j \in H}$.
    
    For any $j \in H$ and $i \in [t]$, since $X_i$ is sampled from $H$ uniformly at random, we have
    \begin{eqnarray*}
        \Pr\left[r_H = j, r_H = X_i \mid \cE_2\right] &=& \Pr\left[X_i = j \mid \cE_2 \right] \cdot \Pr\left[r_H = X_i \mid X_i = j, \cE_2\right] \\
        &=& \frac{1}{\abs{H}} \cdot \frac{\sqrt{n}}{d_j} \cdot \Pr[\forall k \in [i-1], r_H \neq X_k  \mid \cE_2].
    \end{eqnarray*}
    It follows that
    \begin{eqnarray*}
        \Pr\left[r_H = j \mid r_H \neq \Null, \cE_2\right] &=& \frac{\Pr\left[r_H = j \mid \cE_2\right]}{\Pr[r_H \neq \Null \mid \cE_2]} \\
        &=& \frac{\sum_{i \in [t]} \Pr\left[r_H = j, r_H = X_i \mid \cE_2\right]}{\Pr[r_H \neq \Null \mid \cE_2]} \\ 
        &=& Z \cdot d_j^{-1},
    \end{eqnarray*}
    where $Z = \frac{\sqrt{n} \sum_{i \in [t]}  \Pr[\forall k \in [i-1], r_H \neq X_k \mid \cE_2]}{\abs{H} \cdot \Pr[r_H \neq \Null \mid \cE_2]} $ does not depend on $j$. Since the probabilities sum to 1, we have
    \begin{equation}
        \Pr\left[r_H = j \mid r_H \neq \Null, \cE_2\right] = \frac{d_j^{-1}}{\sum_{j^{\prime} \in H} (d_{j'})^{-1}}. \label{eq:RL0-1}
    \end{equation}
    By a similar argument, for any $j \in L$,
    \begin{equation}
        \Pr\left[r_L = j \mid r_L \neq \Null, \cE_2\right] = \frac{d_j^{-1}}{\sum_{j^{\prime} \in L} (d_{j'})^{-1}}. \label{eq:RL0-2}
    \end{equation}
    
    Without loss of generality, we assume that both $H$ and $L$ are not empty, otherwise Algorithm~\ref{alg:ell0} is already a $\rLo$-sampler by \eqref{eq:RL0-1} and \eqref{eq:RL0-2}. By Lemma~\ref{lem:RF0-wH}, Lemma~\ref{lem:RF0-wL}, and our choices of parameters, $W_H$ and $W_L$ are $\left(1 + \frac{\eps}{4}\right)$-approximation of $\sum_{j \in H} d_j^{-1}$ and $\sum_{j \in L} d_j^{-1}$, respectively. Therefore,
    \begin{equation}
        \frac{W_H}{W_H + W_L} \in \left[\frac{1-(\eps /4)}{1 + (\eps /4)}, \frac{1+(\eps /4)}{1 - (\eps /4)} \right] \cdot \frac{\sum_{j \in H} d_j^{-1}}{\rFo} = (1 \pm \eps) \cdot \frac{\sum_{j \in H} d_j^{-1}}{\rFo}. \label{eq:RL0-3}
    \end{equation}
    By \eqref{eq:RL0-1} and \eqref{eq:RL0-3}, for any $j \in H$, we have
    \begin{eqnarray*}
        \Pr\left[\text{output } j \mid \text{no failure}, \cE_2\right] = \frac{W_H}{W_H + W_L} \cdot \frac{d_j^{-1}}{\sum_{j^{\prime} \in H} (d_{j'})^{-1}} = (1 \pm \eps) \frac{d_j^{-1}}{\rFo}.
    \end{eqnarray*}
    By a similar argument, for any $j \in L$, 
    \begin{eqnarray*}
        \Pr\left[\text{output } j \mid \text{no failure}, \cE_2\right] = \frac{W_L}{W_H + W_L} \cdot \frac{d_j^{-1}}{\sum_{j^{\prime} \in L} (d_{j'})^{-1}} = (1 \pm \eps) \frac{d_j^{-1}}{\rFo}.
    \end{eqnarray*}
\end{proof}

By Lemma~\ref{lem:RF0-wH}, Lemma~\ref{lem:RF0-wL} and Lemma~\ref{lem:RL0-fail}, the total failure probability of Algorithm~\ref{alg:ell0} is at most $\Pr[\cE_2, \text{failure}] + \Pr[\lnot \cE_2] = O(n^{-3})$. By Lemma~\ref{lem:RL0-fail} and Lemma~\ref{lem:RL0-success}, for any $j \in [n]$, we have
\begin{eqnarray*}
    \Pr\left[\text{output } j, \text{no failure}\right] &=& \Pr\left[\text{output } j, \text{no failure}, \cE_2\right] + \Pr\left[\text{output } j, \text{no failure}, \lnot \cE_2\right] \\
    &=& \Pr[\cE_2, \text{no failure}] \cdot \Pr\left[\text{output } j \mid \text{no failure}, \cE_2\right] \pm \Pr[\lnot \cE_2] \\
    &=& (1 \pm \eps) \cdot \frac{d_j^{-1}}{\rFo} \pm O(n^{-3}),
\end{eqnarray*}
which implies that
\begin{eqnarray*}
    \Pr\left[\text{output } j \mid \text{no failure}\right] &=& \frac{\Pr\left[\text{output } j, \text{no failure}\right]}{\Pr\left[\text{no failure}\right]} \\
    &=& (1 \pm \eps) \cdot \frac{d_j^{-1}}{\rFo} \pm O(n^{-3}) = (1 \pm 2\eps) \cdot \frac{d_j^{-1}}{\rFo},
\end{eqnarray*}
where the last equality holds since we have assumed that $\eps \ge n^{-1/4}$, and $d_j \le n$ and $\rFo \le n$.

Consequently, by down-scaling $\eps$ by a factor of $2$, Algorithm~\ref{alg:ell0} is a $(1+\eps, o(1))$-$\rLo$-sampler using $O\left(\frac{1}{\eps^2}\sqrt{n} \log n\right)$ words of space.

\subsection{$\rFp$-Estimation}
\label{sec:Fp}

In this section, we study $\rFp$-estimation. Our algorithm is described in Algorithm~\ref{alg:Fp}.  

\begin{algorithm}[!t]
    \caption{$\rFp$-estimation}
    \label{alg:Fp}
    \DontPrintSemicolon
    \SetAlgoNoEnd
    
    \KwIn{a data stream $\sigma$ of size $n$ and a similarity function $f : U \times U \to [0, 1]$, parameter $\eps$}
    \KwOut{a $(1 +\eps)$-approximation of $\rFp(\sigma)$}
    
    \smallskip
    $\eps^{\prime} \gets \frac{\eps}{4(p+1)}$, $t_1 \gets \frac{6n^{1-1/(p+1)} \cdot \log n}{(\eps^{\prime})^2}$, $t_2 \gets \frac{4^{p+2}n^{1-1/(p+1)}}{\eps^2}$, $\theta \gets \frac{12\log n}{(\eps^{\prime})^2}$  \label{ln:theta-fp}\;
     
    $W_H \gets 0$, $W_L \gets 0$ \;
    
    $S_1$: an array of size $t_1$,  with entries initialized to $0$  \gray{\tcc{$S_1$ stores a set of witnesses}} 
    
    $S_2$, $I_L$ and $D$: arrays of size $t_2$, with entries initialized to $0$  \gray{\tcc{$S_2$ contains samples drawn from $[n]$, $I_L$ records whether each node in $S_2$ belongs to $L$, and $D$ stores their degrees}}
    
    \gray{\tcp{Pass $1$: obtain samples $S_1$ (witness set) and $S_2$ (low-degree candidates)}}
    
    \ForEach{incoming $\sigma_j$} {
        \lFor{$i \gets 1, \dots, t_1$} {
            with probability $\frac{1}{j}$, $S_1[i] \gets \sigma_j$ 
        }
        \lFor{$i \gets 1, \dots, t_2$} {
            with probability $\frac{1}{j}$, $S_2[i] \gets \sigma_j$ 
        }
    }
    
    \gray{\tcp{Pass $2$:  estimate $\rFp$}}
    
    \ForEach{incoming $\sigma_j$} {
        \For{$i \gets 1, \dots, t_2$} {
            $D[i] \gets D[i] + f(S_2[i], \sigma_j)$ \gray{\tcc{compute the degrees of nodes in $S_2$}}
        }
        $Y_j \gets \sum_{i \in [t_1]} f(S_1[i], \sigma_j)$\;
        \If{$Y_j \geq \theta$} {
            $W_H \gets W_H + \left(\frac{n}{t_1}\cdot Y_j\right)^p$ \gray{\tcc{$\sigma_j$ is classified as a high-degree node; add the estimated contribution of $\sigma_j$ to $W_H$}}
        }
        \Else{
            \gray{\tcc{$\sigma_j$ is classified as a low-degree node; mark all $\sigma_j \in S_2$}}
            
            \For{$i \gets 1, \dots, t_2$} {
                \lIf{$S_2[i] = \sigma_j$} {
                    $I_L[i] \gets 1$ 
                }
            }
        }
    }
    \gray{\tcc{rescale to estimate the total contribution of low-degree nodes}}
    
    $W_L \gets \frac{n}{t_2} \sum_{i \in [t_2]} \left(I_L[i] \cdot D[i]^p\right)$ 
    
    \Return $W_H + W_L$.
\end{algorithm}

\vspace{2mm}
\noindent{\bf The Ideas and Algorithm Descriptions.\ }
We can use the same strategy as Algorithm~\ref{alg:F0} to estimate $\rFp$.  The idea is to partition the nodes into a high-degree group and a low-degree group based on a degree threshold of approximately $n^{1/(p+1)}$, using a witness set $S_1$ sampled in the first pass.  Like before, we compute the contribution of the high-degree group by approximating each node's degree in the second pass, while the low-degree group's contribution is derived from a sample set with exact degrees in the third pass.  As before, this approach needs three passes.  

However, for $\rFp$, we can compress three passes to two passes.  The key observation is that we can prepare for a sample $S_2$ of the low-degree group in the first pass {\em without} identifying the low-degree nodes.  We can then filter out high-degree nodes in $S_2$ in the second pass and use the remaining nodes to estimate the contribution of low-degree nodes. 
If, after filtering, the number of remaining nodes in $S_2$ is very small, then we know that the number of low-degree nodes in the whole dataset is small, and thus the contribution of low-degree nodes is negligible. Otherwise, we will obtain sufficiently many samples to accurately estimate the contribution of the low-degree nodes. This idea does {\em not} work for $\rFo$, since the contribution of a small number of low-degree nodes to $\rFo$ may be dominant.

\vspace{2mm}
\noindent{\bf The Analysis.\ } 
We now analyze Algorithm~\ref{alg:Fp} and prove the following theorem.

\begin{theorem} \label{thm:Fp}
    Given an input data stream of length $n$ and a similarity function $f : U \times U \to [0, 1]$, for any $\eps \in (0, 1)$,   there is an algorithm that computes a $(1+\eps, 0.01)$-approximation of $\rFp\ (p > 0)$, using two passes and $O\left(\frac{1}{\eps^2} n^{1-1/(p+1)} \log n\right)$ words of space.
\end{theorem}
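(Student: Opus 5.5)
The plan is to follow the template of Theorem~\ref{thm:F0}: split the nodes into $H$ and $L$ according to the witness counts $Y_j$, and bound the errors of $W_H$ and $W_L$ separately. Space and passes can be read off Algorithm~\ref{alg:Fp}, since $t_1,t_2=O(\eps^{-2}n^{1-1/(p+1)}\log n)$ for constant $p$. Throughout, write $\tau=n^{1/(p+1)}$, so that $t_1\tau/n=6\log n/(\eps')^2=\theta/2$.

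\textbf{Step 1 (high-degree part).} I would prove the analogue of Lemma~\ref{lem:RF0-wH}, using Chernoff bounds exactly as there. If $d_j<\tau$ then $\E[Y_j]<\theta/2$, so $j\in L$ with high probability; hence every $j\in H$ has $d_j\ge\tau$. If $d_j\ge\tau$ then $\E[Y_j]\ge\theta/2$, so $nY_j/t_1=(1\pm\eps')d_j$ with high probability. It follows that every $j\in L$ has $d_j\le\frac{n\theta}{t_1(1-\eps')}\le 4\tau$. With $\theta=12\log n/(\eps')^2$ the exponent is only $\Theta(\log n)$, so the failure probability should come out as $1/\mathrm{poly}(n)$; I would recheck the constants. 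Raising to the power $p$ gives $(1\pm\eps')^p=1\pm 2p\eps'$, so $W_H=(1\pm\eps/2)\sum_{j\in H}d_j^p$.

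\textbf{Step 2 (low-degree part).} Fix $S_1$ so that the Step 1 event holds. Then $H$ and $L$ are determined, and $S_2$ is independent of them. Each entry $i$ contributes $Z_i=n\,I_L[i]\,D[i]^p$, where $I_L[i]=1$ exactly when $S_2[i]\in L$ and $D[i]=d_{S_2[i]}$ exactly. So $\E[Z_i]=\sum_{j\in L}d_j^p=:M_L$ and $W_L$ is an unbiased average. For the variance,
\[
\E[Z_i^2]=n\sum_{j\in L}d_j^{2p}\le n(4\tau)^p M_L ,
\]
hence $\Var[W_L]\le n(4\tau)^pM_L/t_2$. By Chebyshev, with probability $0.995$,
\[
|W_L-M_L|\le 15\sqrt{n4^p\tau^pM_L/t_2}.
\]

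\textbf{Step 3 (the main obstacle).} The deviation must be bounded relative to the total $M_p$ rather than $M_L$, because $M_L$ itself may be tiny; this is the ``few low-degree nodes are negligible'' point from the algorithm description. The key fact I would use is $M_p\ge\sum_j d_j^p\ge n$, since every $d_j\ge 1$ by the self-loop. Combined with $M_L\le M_p$, this gives
\[
n\tau^pM_L\le \tau^pM_p^2\le n^{p/(p+1)}M_p^2 = n^{1-1/(p+1)}M_p^2 .
\]
Therefore the deviation is at most $15\sqrt{4^p n^{1-1/(p+1)}/t_2}\,M_p$. With $t_2=4^{p+2}n^{1-1/(p+1)}/\eps^2$ this is a constant times $\eps M_p$, and I would tune the constant in $t_2$, or the Chebyshev confidence, to make it $\le\eps M_p/2$. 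The delicate part is reconciling the constants with the target failure probability $0.01$; if that fails, I would take a median or mean over a few independent copies.

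Finally, combining the two parts gives $|W_H+W_L-M_p|\le \frac{\eps}{2}\sum_{H}d_j^p+\frac{\eps}{2}M_p\le\eps M_p$. The total failure probability is $o(1)+0.005\le 0.01$.
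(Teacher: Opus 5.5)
Your proposal is correct and follows the paper's proof essentially step for step. You use the same Chernoff-based $H/L$ classification lemma and the same second-moment bound $\E[Z_i^2]\le n\,(4n^{1/(p+1)})^p\sum_{j\in L}d_j^p$. You also use the same key facts, $M_p\ge n$ and $\sum_{j\in L}d_j^p\le M_p$, to turn the Chebyshev deviation into an additive $\frac{\eps}{2}M_p$ error.

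The only difference is how the failure probability is handled. The paper applies Chebyshev at confidence $3/4$ with the stated $t_2$, which gives exactly $\frac{\eps}{2}M_p$, obtains overall success $2/3$, and boosts to $0.99$ by a median of independent repetitions. You instead enlarge $t_2$ by a constant factor to reach confidence $0.995$ directly. Either way the space grows only by a constant factor.
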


The pass and space complexities are clear from Algorithm~\ref{alg:Fp}. We now prove that with probability at least $2/3$, Algorithm~\ref{alg:Fp} outputs a $(1+\eps)$-approximation of $\rFp$. By parallel repetition and taking the median, we can boost the success probability to 0.99 with an additional constant factor in the space bound.

Recall that in Algorithm~\ref{alg:Fp}, $Y_j$ is the total similarity weight between $S_1$ and $\sigma_j$. Let $H = \{ j \in [n] \mid Y_j \geq \theta \}$ and $L = \{ j \in [n] \mid Y_j < \theta \}$, where $\theta$ is defined at Line~\ref{ln:theta-fp} of Algorithm~\ref{alg:Fp}.  Let $W_H$ and $W_L$ be the estimated contributions of $H$ and $L$, respectively.  

We have the following lemma.
\begin{lemma} 
    \label{lem:RFp-wH}
    With probability $1 - o(1)$, the following events hold: (1) $\min_{j \in H} d_j \geq n^{1/(p+1)}$, (2) $\max_{j \in L} d_j \leq 4 n^{1/(p+1)}$, and (3) $W_H$ is a $\left(1+\frac{\eps}{2}\right)$-approximation of $\sum_{j \in H} d_j^p$.
\end{lemma}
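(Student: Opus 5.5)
We follow the proof of Lemma~\ref{lem:RF0-wH} almost verbatim, now with threshold $\tau = n^{1/(p+1)}$ in place of $\sqrt{n}$ and with $\eps' = \frac{\eps}{4(p+1)}$ in place of $\eps$. The witnesses $S_1[1],\dots,S_1[t_1]$ are i.i.d.\ uniform over $[n]$. For each $j$, $Y_j$ is therefore a sum of $t_1$ independent $[0,1]$-valued variables with
\[
\E[Y_j] = \frac{t_1 d_j}{n} = \frac{6 \log n}{(\eps')^2}\cdot \frac{d_j}{\tau} = \frac{\theta}{2}\cdot\frac{d_j}{\tau},
\]
using $t_1 = 6 n^{1-1/(p+1)}\log n/(\eps')^2$ and $\theta = 12\log n/(\eps')^2$. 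Thus $\E[Y_j] < \theta/2$ exactly when $d_j < \tau$.

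\textbf{Step 1 (low-degree nodes land in $L$).} Fix $j$ with $d_j < \tau$. By the Chernoff bound in the form valid for a mean upper bound $\mu = \theta/2$,
\[
\Pr\bigl[Y_j > (1+\eps'/2)\theta/2\bigr] \le e^{-(\theta/2)(\eps'/2)^2/3} = n^{-1/2}.
\]
This per-node bound is too weak for a union bound over $n$ nodes. That is the main obstacle: the constants in $\theta$ here are smaller than in Algorithm~\ref{alg:F0}, and the lemma only claims $1-o(1)$. I would fix it by using a larger deviation, since $\theta$ is twice the mean bound. Chernoff with $\delta = 1$ gives
\[
\Pr[Y_j \ge \theta] \le e^{-(\theta/2)/3} = e^{-2\log n/(\eps')^2} \le n^{-2}.
\]
A union bound then yields the event $\cE_L$: every $j$ with $d_j < \tau$ is in $L$, with probability $1 - 1/n$. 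This proves item (1).

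\textbf{Step 2 (high-degree nodes are estimated accurately).} For $d_j \ge \tau$ we have $\E[Y_j] \ge \theta/2$. Chernoff with relative error $\eps'$ gives failure probability
\[
2\exp\bigl(-(\theta/2)(\eps')^2/3\bigr) = 2 n^{-2}.
\]
A union bound gives the event $\cE_A$: $nY_j/t_1 = (1\pm\eps') d_j$ for all such $j$, with probability $1 - 2/n$.

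\textbf{Step 3 (items (2) and (3)).} Condition on $\cE_L \cap \cE_A$. Take $j \in L$. Either $d_j < \tau$, or $d_j \ge \tau$ and
\[
d_j \le \frac{n\theta}{t_1(1-\eps')} = \frac{2\tau}{1-\eps'} \le 4\tau,
\]
which gives item (2). Now take $j \in H$. Then $d_j \ge \tau$, so
\[
\Bigl(\frac{nY_j}{t_1}\Bigr)^p = (1\pm\eps')^p d_j^p .
\]
Since $\eps' = \eps/(4(p+1))$, we have $(1+\eps')^p \le e^{p\eps'} \le 1+\eps/2$ and $(1-\eps')^p \ge 1 - p\eps' \ge 1-\eps/2$. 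These bounds hold for $p \ge 1$ and also for $0 < p < 1$, because then $(1\pm\eps')^p$ lies between $1\pm\eps'$ and $1$. Summing over $H$ gives item (3).

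The total failure probability is $O(1/n) = o(1)$. The only delicate point is the union-bound strength in Step 1, handled by the $\delta = 1$ Chernoff deviation; the rest is bookkeeping identical to Lemma~\ref{lem:RF0-wH}.
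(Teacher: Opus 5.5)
Your proof is correct and follows the paper's argument step for step: Chernoff bounds for the two degree regimes, union bounds, and then the same deductions for items (2) and (3). The one difference is in Step 1. The paper applies the Chernoff bound with relative deviation $\eps'$ (not $\eps'/2$), which already gives $\Pr[Y_j > (1+\eps')\theta/2] \le n^{-2}$, so the obstacle you flag never arises. Your $\delta = 1$ bound $\Pr[Y_j \ge \theta] \le e^{-\theta/6}$ is equally valid, and it is exactly the form the paper itself uses later in Lemma~\ref{lem:LMp-classification}.
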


\begin{proof}
    Since all nodes in $S_1$ are sampled uniformly at random from $[n]$, for any $j \in [n]$ with $d_j < n^{1/(p+1)}$, we have $\E[Y_j] = t_1 \cdot \frac{d_j}{n} < \frac{\theta}{2}$.  By a Chernoff bound, 
    \begin{equation*}
        \Pr\left[ Y_j > \left(1 + \eps^{\prime} \right) \cdot \frac{\theta}{2} \right] \leq  e^{-\frac{(\theta/2)\cdot (\eps^{\prime})^2}{3}} = \frac{1}{n^2}.
    \end{equation*}
    Therefore, with probability $1 - n \cdot \frac{1}{n^2} = 1 - \frac{1}{n}$, the following event $\cE'_L$ holds: for all $j \in [n]$ with $d_j < n^{1/(p+1)}$, $Y_j \le (1 + \eps^{\prime}) \cdot \frac{\theta}{2} < \theta$, and thus $j \in L$. This implies that if $j \in H$, then $d_j \ge n^{1/(p+1)}$, proving the first item.
    
    On the other hand, for any $j \in [n]$ with $d_j \ge n^{1/(p+1)}$, we have $\E[Y_j] = t_1 \cdot \frac{d_j}{n} \ge \frac{\theta}{2}$.  By a Chernoff bound, 
    \begin{equation*}
        \Pr\left[ \abs{ Y_j - \E[Y_j] } > \eps^{\prime} \cdot \E[Y_j] \right] \leq 2 e^{-\frac{\E[Y_j]\cdot (\eps^{\prime})^2}{3}} \leq 2 e^{-\frac{(\theta/2)\cdot (\eps^{\prime})^2}{3}} = \frac{2}{n^2}.
    \end{equation*}
    Therefore, with probability $1 - n \cdot \frac{2}{n^2} = 1 - \frac{2}{n}$, the following event $\cE'_A$ holds: for all $j \in [n]$ with $d_j \ge n^{1/(p+1)}$, $\frac{n Y_j}{t_1}$ is a $(1 + \eps^{\prime})$-approximation of $d_j$. Consequently, $\left(\frac{n Y_j}{t_1}\right)^p$ is a $\left(1+ \frac{\eps}{2}\right)$-approximation of $d_j^p$ when $p > 0$.
    
    We condition on events $\cE'_L$ and $\cE'_A$ in the rest of the proof, which hold with probability $1 - o(1)$.
    
    If $j \in L$, then either (1) $d_j < n^{1/(p+1)}$, or (2) $d_j \geq n^{1/(p+1)}$ and $Y_j < \theta$, where the second case implies $d_j \le \frac{n}{t_1} \cdot \frac{ \theta}{1 - \eps^{\prime}}$, since $\frac{n Y_j}{t_1}$ is a $(1 + \eps^{\prime})$-approximation of $d_j$.  Therefore,
    \begin{equation*} 
        \max_{j \in L} d_j \leq \max\left\{n^{1/(p+1)}, \frac{n}{t_1} \cdot \frac{ \theta}{1-\eps^{\prime}}\right\} \leq 4 n^{1/(p+1)},
    \end{equation*}
    proving the second item.
    
    If $j \in H$, then $d_j \geq n^{1/(p+1)}$, and thus $\left(\frac{n Y_j}{t_1}\right)^p$ is a $\left(1+ \frac{\eps}{2}\right)$-approximation of $d_j^p$. As a result, $W_H=\sum_{j \in H} \left(\frac{n Y_j}{t_1}\right)^p$ is a $\left(1+ \frac{\eps}{2}\right)$-approximation of $\sum_{j \in H} d_j^p$, proving the third item.
\end{proof}

Let $\cE_3$ denote the event that Lemma~\ref{lem:RFp-wH} holds, then $\Pr[\cE_3] = 1 - o(1)$. The next lemma shows that, conditioned on $\cE_3$, $W_L$ provides an approximation to $\sum_{j \in L} d_j^p$ up to a small additive error.

\begin{lemma} \label{lem:RFp-wL}
    Conditioned on $\cE_3$, with probability at least $\frac{3}{4}$, we have
    \begin{equation*}
        \sum_{j \in L} d_j^p - \frac{\eps}{2}\cdot \rFp \leq W_L \leq \sum_{j \in L} d_j^p + \frac{\eps}{2}\cdot \rFp.
    \end{equation*}
\end{lemma}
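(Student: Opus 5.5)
Conditioning on $\cE_3$ fixes $S_1$, and hence the partition $H \cup L$ and the indicators $\mathbf{1}[j \in L]$. I would argue that $S_2$ is independent of $S_1$: both are drawn in Pass 1 by independent reservoir samplers. So, given $S_1$, the entries of $S_2$ are still i.i.d.\ uniform over $[n]$. The plan is then a second-moment (Chebyshev) argument on $W_L$.

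For $i \in [t_2]$, let $X_i$ be the $i$-th node of $S_2$ and define
\[
Z_i = \mathbf{1}[X_i \in L]\, d_{X_i}^p .
\]
After Pass 2, $I_L[i] = 1$ exactly when $X_i \in L$, and $D[i] = d_{X_i}$ exactly. Hence $W_L = \frac{n}{t_2}\sum_{i \in [t_2]} Z_i$. Since $\E[Z_i \mid S_1] = \frac{1}{n}\sum_{j \in L} d_j^p$, we get $\E[W_L \mid S_1] = \sum_{j \in L} d_j^p$, so the estimator is unbiased.

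For the variance, independence gives
\[
\Var[W_L \mid S_1] \le \frac{n^2}{t_2}\E[Z_i^2 \mid S_1] = \frac{n}{t_2}\sum_{j \in L} d_j^{2p}.
\]
I would then use item (2) of Lemma~\ref{lem:RFp-wL}'s companion, Lemma~\ref{lem:RFp-wH}: every $j \in L$ has $d_j \le 4n^{1/(p+1)}$. This gives
\[
\sum_{j \in L} d_j^{2p} \le \bigl(4n^{1/(p+1)}\bigr)^p \sum_{j \in L} d_j^p \le 4^p n^{p/(p+1)} \rFp .
\]

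The key step, and the main obstacle, is relating this variance to $\rFp^2$ rather than to $(\sum_{j\in L} d_j^p)^2$. The low part alone may be tiny, which is exactly why the lemma only claims additive error $\frac{\eps}{2}\rFp$. I will use a lower bound on $\rFp$ itself. Every $d_j \ge 1$ because of the self-loop, so $\rFp \ge n$. I need
\[
\frac{n \cdot 4^p n^{p/(p+1)}\rFp}{t_2} \le \frac{\eps^2}{16}\rFp^2,
\]
which is equivalent to $\rFp \ge \frac{16 \cdot 4^p n^{1+p/(p+1)}}{\eps^2 t_2}$. Substituting $t_2 = 4^{p+2} n^{1-1/(p+1)}/\eps^2$, the right-hand side equals $n^{1+p/(p+1)-1+1/(p+1)} = n$. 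So the bound reduces exactly to $\rFp \ge n$, which holds, and the parameter choice matches.

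Chebyshev then gives
\[
\Pr\Bigl[\bigl|W_L - \textstyle\sum_{j \in L} d_j^p\bigr| > \tfrac{\eps}{2}\rFp \,\Bigm|\, S_1\Bigr] \le \frac{\Var[W_L \mid S_1]}{(\eps\rFp/2)^2} \le \frac14 .
\]
This holds for every fixed $S_1$ satisfying $\cE_3$, and hence also conditioned on $\cE_3$, which proves the lemma.
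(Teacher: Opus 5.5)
Your proposal is correct and follows the paper's proof essentially step for step: an unbiased estimator, a second-moment bound using $\max_{j\in L} d_j \le 4n^{1/(p+1)}$, and Chebyshev with $\sum_{j\in L} d_j^p \le M_p$ and $n \le M_p$ to reach $1/4$. The only differences are cosmetic: you keep the factor $n$ outside $Z_i$, and you state explicitly that $S_2$ is independent of $S_1$, which the paper uses implicitly.
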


\begin{proof}
    For each $i \in [t_2]$, let $X_i$ be the $i$-th sample in $S_2$ and $Z_i = \mathbf{1}\{X_i \in L\} \cdot n \cdot d_{X_i}^p$. Fix any $S_1$ such that $\cE_3$ holds. Since each $X_i$ is sampled from $[n]$ uniformly at random, we have $\E[Z_i \mid S_1] = \frac{1}{n} \sum_{j \in L} n \cdot d_j^p = \sum_{j \in L} d_j^p$. Furthermore, by Lemma~\ref{lem:RFp-wH},  for all $i \in [t_2]$ we have
    \begin{equation*}
        \Var[Z_i \mid S_1] \leq \E[Z_i^2 \mid S_1] \leq n\cdot (4n^{1/(p+1)})^p \cdot  \E[Z_i \mid S_1] = 4^p n^{2-1/(p+1)} \cdot \sum_{j \in L} d_j^p.
    \end{equation*}
    Let $Z = \frac{1}{t_2} \sum_{i \in [t_2]} Z_i$. Observe that \begin{equation*}
        \Var[Z \mid S_1] = \frac{\Var[Z_i \mid S_1]}{t_2} \leq \frac{\eps^2 n}{16} \cdot \sum_{j \in L} d_j^p.
    \end{equation*}
    By Chebyshev's inequality,
    \begin{equation*}
        \Pr\left[ \abs{Z - \E[Z \mid S_1]} > \frac{\eps}{2} \cdot \rFp \ \middle| \ S_1 \right] \leq \frac{4\Var[Z \mid S_1]}{\eps^2 \rFp^2} \leq \frac{n \cdot \sum_{j \in L} d_j^p}{4 \rFp^2} \leq \frac{1}{4}.
    \end{equation*}
    The last inequality holds by noting that $\sum_{j \in L} d_j^p \leq \rFp$ and $n \leq \rFp$. 
    
    Since $W_L = Z$ and $\E[Z \mid S_1] = \sum_{j \in L} d_j^p$, with probability at least $\frac{3}{4}$, we have
    \begin{equation*}
        \sum_{j \in L} d_j^p - \frac{\eps}{2}\cdot \rFp \leq W_L \leq \sum_{j \in L} d_j^p + \frac{\eps}{2}\cdot \rFp.
    \end{equation*}
\end{proof}

By Lemma~\ref{lem:RFp-wH} and Lemma~\ref{lem:RFp-wL}, with probability $(1-o(1))\cdot \frac{3}{4} \geq \frac{2}{3}$, we have
\begin{equation}
    \left(1 - \frac{\eps}{2}\right) \sum_{j \in H} d_j^p \leq W_H \leq \left(1 + \frac{\eps}{2}\right) \sum_{j \in H} d_j^p, \label{eq:Fphigh}
\end{equation}
and
\begin{equation}
    \sum_{j \in L} d_j^p - \frac{\eps}{2} \cdot \rFp \leq W_L \leq \sum_{j \in L} d_j^p + \frac{\eps}{2} \cdot \rFp. \label{eq:Fplow}
\end{equation}
Adding \eqref{eq:Fphigh} and \eqref{eq:Fplow} gives
\begin{equation*}
    \rFp - \frac{\eps}{2}\left(\rFp + \sum_{j \in H} d_j^p\right) \leq W_H + W_L \leq \rFp + \frac{\eps}{2}\left(\rFp + \sum_{j \in H} d_j^p \right).
\end{equation*}
Since $\sum_{j \in H} d_j^p \leq \rFp$, with probability at least $\frac{2}{3}$, Algorithm~\ref{alg:Fp} returns a $(1 + \eps)$-approximation of $\rFp$.

\subsection{$L_{M_p}$-Sampling}
\label{sec:LMp-sampling}

In this section, we consider $L_{M_p}$-sampling. Our algorithm is described in
Algorithm~\ref{alg:LMp-sampling}.

\begin{algorithm}[!t]
    \caption{$L_{M_p}$-Sampling}
    \label{alg:LMp-sampling}
    \DontPrintSemicolon
    \SetAlgoNoEnd
    
    \KwIn{a data stream $\sigma$ of size $n$ and a similarity function $f$, parameter $\eps$}
    \KwOut{a node in $\sigma$ sampled with probability proportional to the $p$-th power of its degree}
    
    $\eps'\gets \eps/(16(p+1))$,
    $t_1\gets \frac{6(p+3)n^{1-1/(p+1)}\log n}{(\eps')^2}$,
    $t_2\gets 12\cdot 4^p(p+3)n^{1-1/(p+1)}\log n$\;
    $\theta\gets \frac{12(p+3)\log n}{(\eps')^2}$,
    $\lambda\gets \frac{6(p+3)\log n}{n}$,
    $K\gets\infty$, $r\gets\Null$\;
    $S_1$: an array of size $t_1$, with entries initialized to $0$
    \gray{\tcc{$S_1$ stores a set of witnesses}}
    $S_2$, $I_L$, and $D$: arrays of size $t_2$, with entries initialized to $0$
    \gray{\tcc{$S_2$ stores low-degree candidates, $I_L$ records whether a candidate belongs to $L$ and $D$ stores its degree}}
    
    \gray{\tcp{Pass 1: obtain the witness set and low-degree candidates}}
    \ForEach{incoming $\sigma_j$}{
        \For{$i\gets 1, \dots, t_1$}{
            \text{with probability }$1/j$, $S_1[i]\gets\sigma_j$\;
        }
        draw $R_j\sim \EXP(1)$\;
        \If{$R_j\le (4n^{1/(p+1)})^p\lambda$}{
            insert $\sigma_j$ into an empty entry of $S_2$; if no such entry exists, \Return Fail \label{ln:LMp-fail}
        }
    }
    
    \gray{\tcp{Pass 2: find the minimum of $R_j / d_j^p$ over all high-degree nodes and stored low-degree nodes}}
    \ForEach{incoming $\sigma_j$}{
        regenerate $R_j$ using the same random seed\;
        \For{$i\gets 1, \dots, t_2$}{
            $D[i]\gets D[i]+f(S_2[i],\sigma_j)$ \gray{\tcc{compute the degrees of nodes in $S_2$}}
        }
        $Y_j\gets\sum_{i\in[t_1]}f(S_1[i],\sigma_j)$\;
        \eIf{$Y_j\ge\theta$}{
            $\widetilde d_j\gets nY_j/t_1$ \gray{\tcc{$\sigma_j \in H$; compute the estimated degree of $\sigma_j$ }}
            \If{$R_j/\widetilde d_j^p<K$}{
                $(K,r)\gets(R_j/\widetilde d_j^p, \sigma_j)$ \gray{\tcc{update the current minimum using $\sigma_j$ }}
            }
        }{
            \gray{\tcc{$\sigma_j$ is classified as a low-degree node; mark all $\sigma_j \in S_2$}}
            \For{$i \gets 1, \dots, t_2$} {
                \lIf{$S_2[i] = \sigma_j$} {
                    $I_L[i] \gets 1$
                }
            }
        }
    }
    \For{$i\gets 1, \dots, t_2$}{
        regenerate $R_{S_2[i]}$ using the same random seed\;
        \If{$I_L[i]=1$ and $R_{S_2[i]}/D[i]^p<K$}{
            $(K,r)\gets(R_{S_2[i]}/D[i]^p,S_2[i])$ \gray{\tcc{update the current minimum using a stored low-degree node}}
        }
    }
    \lIf{$K>\lambda$}{\Return Fail}
    \Return $r$\;
\end{algorithm}

\vspace{2mm}
\noindent{\bf The Ideas and Algorithm Descriptions.\ }
We can use the same strategy as $L_{\DI}$-sampling to develop a
three-pass algorithm for $L_{M_p}$-sampling. The idea is to partition the
nodes into a high-degree group $H$ and a low-degree group $L$, based on a
degree threshold of approximately $n^{1/(p+1)}$, using a witness set
$S_1$ sampled in the first pass. We then sample candidates from the two
groups in the second pass and compute their exact degrees in the third
pass. Rejection sampling can be applied within each group, after which
the two samples are merged according to the $M_p$-contributions of $H$
and $L$. As before, this approach needs three passes.

However, it is not clear how to compress this rejection-sampling approach
to two passes. We instead use the exponential-scaling approach. For every node
$j$, we draw an independent $R_j\sim\EXP(1)$ and scale it by $d_j^{-p}$.
The node minimizing $R_j / d_j^p$ is sampled with probability $d_j^p/M_p$.

The key observation is that we can store all low-degree nodes that {\em may}
attain the minimum in the first pass, without identifying the low-degree
nodes. Note that the minimum of $\{R_j / d_j^p\}_{j \in [n]}$ is distributed according to $\EXP(M_p)$, thus $\min\{R_j / d_j^p\}_{j \in [n]} \le \lambda$ with high probability.
Moreover, every node $j\in L$ has degree at most $O(n^{1/(p+1)})$. Therefore, if a low-degree node satisfies $R_j / d_j^p\leq\lambda$, then with high probability $R_j\leq O\left(n^{p/(p+1)} \lambda\right) = O(n^{-1/(p+1)} \log n)$. We can thus store every node satisfying the latter condition in $S_2$ during the first pass. In the second pass, we use $S_1$ to classify the nodes, compute the exact degrees of the stored low-degree candidates, and process every high-degree node using its estimated degree. Consequently, every node that may attain the minimum is processed by the algorithm. This gives a two-pass, sublinear-space algorithm.

We note that in Algorithm~\ref{alg:LMp-sampling}, by ``regenerate $R_j$'' we mean regenerating the corresponding random variables using Nisan's pseudorandom generator~\cite{Nisan90} with the same seed, which is stored in memory. Using Nisan's generator incurs only an additional $O(\log n)$ multiplicative factor in the space complexity and increases the error probability by at most $1/n$. We omit these overheads throughout the paper.

\vspace{2mm}
\noindent{\bf The Analysis.\ }
We now analyze Algorithm~\ref{alg:LMp-sampling} and prove the following theorem.

\begin{theorem}
\label{thm:LMp-upper}
Given an input data stream $\sigma$ of length $n$ and a similarity function
$f : U \times U \to [0, 1]$, for any $\eps \in (0,1)$ and $p > 0$, Algorithm~\ref{alg:LMp-sampling} is a $(1+\eps,o(1))$-$L_{M_p}$-sampler; it uses two passes and $O\left(\frac{1}{\eps^2}n^{1-1/(p+1)}\log n\right)$ words of space.
\end{theorem}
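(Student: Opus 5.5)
Pass and space complexity follow from Algorithm~\ref{alg:LMp-sampling} once we show that $|S_2|$ stays within $t_2$ with high probability. The proof therefore has three parts: the good event, the failure probability, and the output distribution.

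\textbf{Step 1: the good event.} First I would establish the analogue of Lemma~\ref{lem:RFp-wH} with the new parameters $\eps'$, $t_1$, $\theta$. The same Chernoff argument applies, with the extra factor $(p+3)$ giving failure probability $n^{-(p+2)}$ per node. With probability $1-o(1)$ this yields:
\begin{enumerate}
\item $\min_{j\in H} d_j\ge n^{1/(p+1)}$;
\item $\max_{j\in L} d_j\le 4n^{1/(p+1)}$;
\item $\widetilde d_j^{\,p}=(1\pm\eps/8)\,d_j^p$ for all $j\in H$.
\end{enumerate}

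\textbf{Step 2: failure probability.} Each $j$ enters $S_2$ independently with probability $1-e^{-(4n^{1/(p+1)})^p\lambda}\le 4^p\cdot 6(p+3)n^{-1/(p+1)}\log n$. So $\E|S_2|\le t_2/2$, and a Chernoff bound shows Line~\ref{ln:LMp-fail} fails with probability $o(1)$.

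Next, consider the true minimum $\min_j R_j/d_j^p\sim\EXP(M_p)$. Since $M_p\ge n$, we get $\Pr[\min>\lambda/2]\le e^{-3(p+3)\log n}$. Under the good event, the perturbed keys $R_j/\widetilde d_j^{\,p}$ for $j\in H$ and $R_j/d_j^p$ for $j\in L$ are within factor $1\pm\eps/4$ of the true keys. Hence the perturbed minimum is at most $\lambda$.

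Every $j\in L$ with $R_j/d_j^p\le\lambda$ satisfies $R_j\le(4n^{1/(p+1)})^p\lambda$, so it lies in $S_2$ and is considered in the final loop. Therefore the algorithm's $K$ equals the minimum over all $n$ perturbed keys, and the failure $K>\lambda$ occurs only with $o(1)$ probability.

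\textbf{Step 3: output distribution (main obstacle).} The key point is that, as argued above, $r$ is exactly the argmin of $R_j/\widehat d_j^{\,p}$ over all $j$, except on an $o(1)$ bad event. Here $\widehat d_j$ denotes $\widetilde d_j$ for $j\in H$ and $d_j$ for $j\in L$.

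I would fix $S_1$ satisfying the good event, so that the $\widehat d_j$ are deterministic and independent of the $R_j$. This holds because the $R_j$ are drawn independently of $S_1$, which is why fixing $S_1$ first matters. By the standard exponential-race property,
\[
\Pr[\arg\min = j]=\frac{\widehat d_j^{\,p}}{\sum_k \widehat d_k^{\,p}}=(1\pm\eps/2)\,\frac{d_j^p}{M_p}.
\]

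Then I would account for the bad events. These are the failure of the good event, failure of Nisan's generator, and the event that $K\neq$ the global minimum while the algorithm does not fail. Their total probability is $O(n^{-(p+2)})+1/n$. Conditioning on no failure perturbs each output probability by an additive $O(n^{-(p+2)})$ term. Since $d_j^p/M_p\ge 1/(n\cdot n^p)=n^{-(p+1)}$, this is a relative $(1\pm\eps/2)$ error when $\eps\gg 1/n$; smaller $\eps$ makes the space exceed $n$, so the trivial offline algorithm handles that case.

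The delicate part is choosing the $(p+3)$ constants in the Chernoff exponents so that these additive terms are small relative to $n^{-(p+1)}$. The Nisan $1/n$ term needs the same care: it must be absorbed as a failure outcome rather than counted as distributional error.
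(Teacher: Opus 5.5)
Your proof follows the paper's skeleton. You fix $S_1$ on the classification event and run the exponential race with weights $\widehat d_j^{\,p}$. You argue that every low-degree node with key at most $\lambda$ was stored in $S_2$, so the algorithm returns the race winner $J$. Finally, you convert additive error to relative error using $d_j^p/M_p\ge n^{-(p+1)}$ and a lower bound on $\eps$.

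The one real difference is in how the event $\{Z>\lambda\}$ is handled. The paper uses the independence of argmin and minimum in Lemma~\ref{lem:exp-race}, which gives $\Pr[J=j\mid Z\le\lambda]=\widehat d_j^{\,p}/\widehat M_p$ exactly. There, $\{Z>\lambda\}$ only needs to be small as a failure probability. You instead charge it as additive distributional error, bounding it through the true minimum at threshold $\lambda/2$, whereas the paper uses directly that $Z\sim\EXP(\widehat M_p)$ given $S_1$. Your route works, but it means every failure event correlated with $J$ must have probability $o(\eps\, n^{-(p+1)})$. That includes overflow of $S_2$ and $\{K>\lambda\}$, and ``$o(1)$'', as you state in Step 2, is not enough. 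Your own computations do supply these bounds: $e^{-t_2/6}$ for overflow, and $n^{-3(p+3)}+\Pr[\lnot\cE_4]$ for $\{K>\lambda\}$. You should state them explicitly.

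Three fixes are needed.
\begin{itemize}
\item Correct the exponent in Step 1. With $\theta=12(p+3)\log n/(\eps')^2$, the Chernoff bounds give $O(n^{-2(p+3)})$ per node, hence $O(n^{-2p-5})$ after the union bound. The $n^{-(p+2)}$ per node you wrote would union-bound only to $n^{-(p+1)}$. That is the same order as the smallest target probability, so the additive-to-relative conversion would fail. It is also inconsistent with the $O(n^{-(p+2)})$ total you claim in Step 3.
\item State the overflow and $\{K>\lambda\}$ bounds quantitatively, as above, rather than as $o(1)$.
\item Do not treat the pseudorandom generator's error as a failure outcome. It is a total-variation error, not a detectable failure, so it cannot be ``absorbed as a failure outcome''. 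Instead, instantiate Nisan's generator with error $n^{-c}$ for a sufficiently large constant $c$. The paper simply omits this overhead.
\end{itemize}
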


The pass and space complexities are clear from Algorithm~\ref{alg:LMp-sampling}.  In the remainder of this section, we show the correctness.

W.l.o.g., assume $\eps\ge n^{-1/(2(p+1))}$; otherwise, the stated space bound is at least $n$, in which case we can store all nodes and perform the sampling offline.

Recall that $Y_j$ is the total similarity weight between $S_1$ and $\sigma_j$.
Let $H=\{j\in[n]\mid Y_j\ge\theta\}$ and $L=\{j\in[n]\mid Y_j<\theta\}$, where $\theta$ is defined in Algorithm~\ref{alg:LMp-sampling}.

\begin{lemma}
    \label{lem:LMp-classification}
    With probability at least $1-O(n^{-2p-5})$, the following events hold: (1) $\min_{j\in H}d_j\ge n^{1/(p+1)}$, (2) $\max_{j\in L}d_j\le 4n^{1/(p+1)}$ and (3) for any $j\in H$, $\left(\frac{nY_j}{t_1}\right)^p=(1\pm \eps/8)d_j^p$.
\end{lemma}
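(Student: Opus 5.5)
The proof follows the template of Lemma~\ref{lem:RFp-wH}, with the parameters of Algorithm~\ref{alg:LMp-sampling} substituted. Set $\tau = n^{1/(p+1)}$. For each fixed $j$, write $Y_j = \sum_{i\in[t_1]} f(S_1[i],\sigma_j)$. This is a sum of $t_1$ independent $[0,1]$-valued variables, because each entry of $S_1$ is an independent uniform node of $[n]$. Hence $\E[Y_j] = t_1 d_j/n$. Since
\[
t_1\tau/n = \frac{6(p+3)\log n}{(\eps')^2} = \theta/2,
\]
the threshold $\theta$ corresponds exactly to degree $2\tau$ in expectation. The multiplicative Chernoff bound therefore applies directly.

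\textbf{Step 1: low-degree nodes land in $L$.} Fix $j$ with $d_j<\tau$, so $\E[Y_j]<\theta/2$. By the standard monotone version of Chernoff (or by padding with dummy variables up to mean $\theta/2$),
\[
\Pr\bigl[Y_j>(1+\eps')\theta/2\bigr]\le e^{-(\theta/2)(\eps')^2/3}=n^{-2(p+3)}.
\]
Since $(1+\eps')\theta/2<\theta$, the node $j$ lies in $L$. A union bound over $n$ nodes costs $n^{-2p-5}$, which gives item (1).

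\textbf{Step 2: high-degree nodes are estimated accurately.} Fix $j$ with $d_j\ge\tau$, so $\E[Y_j]\ge\theta/2$. The two-sided Chernoff bound gives $|Y_j-\E[Y_j]|\le\eps'\E[Y_j]$ except with probability $2n^{-2(p+3)}$. A union bound over $n$ nodes costs $2n^{-2p-5}$. On this event, $nY_j/t_1=(1\pm\eps')d_j$ for every $j$ with $d_j\ge\tau$.

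\textbf{Step 3: deduce the three items.} Condition on both events. Item (1) is Step 1. For item (2), take $j\in L$ with $d_j\ge\tau$. Then $Y_j<\theta$ and $Y_j\ge(1-\eps')t_1d_j/n$, so
\[
d_j<\frac{n\theta}{t_1(1-\eps')}=\frac{2\tau}{1-\eps'}\le4\tau.
\]
For item (3), every $j\in H$ has $d_j\ge\tau$ by item (1), so Step 2 applies. Raising to the $p$-th power gives $(1\pm\eps')^p$. Because $\eps'=\eps/(16(p+1))$, we have $(1+\eps')^p\le e^{p\eps'}\le1+\eps/8$ and $(1-\eps')^p\ge1-p\eps'\ge1-\eps/8$, using $p\eps'\le\eps/16$ and $e^x\le1+2x$ for small $x$.

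The only delicate point is the bookkeeping. The factor $6(p+3)$ in $t_1$ and $12(p+3)$ in $\theta$ must be checked to produce tail probability exactly $n^{-2(p+3)}$ per node, so that the union bound yields $O(n^{-2p-5})$. The $p$-th power expansion of $(1\pm\eps')^p$ must also be checked to land inside $1\pm\eps/8$. There is no conceptual obstacle beyond what was done in Lemma~\ref{lem:RFp-wH}.
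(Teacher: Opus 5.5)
This is essentially the paper's proof, step for step. The one cosmetic difference is in Step 1: the paper bounds $\Pr[Y_j\ge\theta]\le e^{-\theta/6}$ directly, using deviation factor $1$ rather than $\eps'$. Both routes give $n^{-2(p+3)}$ per node.

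One small repair is needed in item (3). The bound $(1-\eps')^p\ge 1-p\eps'$ is Bernoulli's inequality, and it holds only for $p\ge 1$. For $0<p<1$ the map $x\mapsto(1-x)^p$ is concave, so the inequality reverses. In that range you should use $(1-\eps')^p\ge 1-\eps'\ge 1-\eps/16$ instead, so the $(1\pm\eps/8)$ conclusion still holds.
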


\begin{proof}
    Since all nodes in $S_1$ are sampled uniformly at random from $[n]$, for any $j\in[n]$ with $d_j<n^{1/(p+1)}$, we have $\E[Y_j]=t_1\frac{d_j}{n} < \frac{\theta}{2}$. By a Chernoff bound,
    \begin{equation*}
        \Pr[Y_j\ge\theta] \le e^{-\theta/6} \le n^{-2(p+3)}.
    \end{equation*}
    Therefore, with probability at least $1-n^{-2p-5}$, every node of degree less
    than $n^{1/(p+1)}$ belongs to $L$. This proves the first item.
    
    On the other hand, for every $j$ with $d_j\ge n^{1/(p+1)}$, we have
    $\E[Y_j]\ge\theta/2$. By a Chernoff bound,
    \begin{equation*}
        \Pr\left[\abs{Y_j-\E[Y_j]}>\eps^{\prime} \E[Y_j] \right]
        \le 2\exp\left(-\frac{\E[Y_j] \cdot (\eps^{\prime})^2}{3}\right)
        \le 2\exp\left(-\frac{(\theta / 2) \cdot (\eps^{\prime})^2}{3}\right)
        = 2n^{-2(p+3)}.
    \end{equation*}
    Therefore, with probability at least $1-2n^{-2p-5}$, simultaneously for all
    $j$ with $d_j\ge n^{1/(p+1)}$, $\frac{nY_j}{t_1}=(1\pm\eps^{\prime}) d_j$.
    
    We condition on these two events in the rest of the proof, which hold with probability $1- O(n^{-2p-5})$.
    
    If $j\in L$, then either (1) $d_j<n^{1/(p+1)}$, or (2) $d_j\ge n^{1/(p+1)}$ and $Y_j<\theta$. In the second case,
    \begin{equation*}
        d_j \le \frac{nY_j}{(1-\eps^{\prime})t_1}
        < \frac{n\theta}{(1-\eps')t_1} \le 4n^{1/(p+1)}.
    \end{equation*}
    This proves the second item.
    
    Finally, for every $j\in H$, the first item implies
    $d_j\ge n^{1/(p+1)}$, and thus $nY_j/t_1=(1\pm\eps^{\prime})d_j$. Since $\eps^{\prime}=\eps/(16(p+1))$, we have $(nY_j/t_1)^p=(1\pm\eps/8)d_j^p$, proving the third item.
\end{proof}

The following property of exponential variables is widely used in weighted sampling.

\begin{lemma}[\cite{JW18}]
    \label{lem:exp-race}
    Let $w_1,\ldots,w_n>0$, let $R_1,\ldots,R_n$ be independent $\EXP(1)$ random variables, and set $K_i=R_i / w_i$. Let $I=\operatorname*{argmin}_{i\in[n]}K_i$ and $Z=\min_{i\in[n]}K_i$. Then, we have $\Pr[I=i]=\frac{w_i}{\sum_{j\in[n]}w_j}$, 
    $Z\sim\EXP\bigl(\sum_{j\in[n]}w_j\bigr)$, and $I$ is independent of
    $Z$.
\end{lemma}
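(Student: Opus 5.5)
The statement to prove is the exponential race lemma (Lemma~\ref{lem:exp-race}). I would compute the joint law of $(I,Z)$ directly and read off all three claims from it.

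First, each $K_i = R_i/w_i$ is distributed as $\EXP(w_i)$, since
\begin{equation*}
\Pr[K_i > s] = \Pr[R_i > w_i s] = e^{-w_i s}.
\end{equation*}
The $K_i$ are independent, and because they have continuous distributions, ties occur with probability zero. Hence $I$ is almost surely well defined.

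Next, fix an index $i$ and a threshold $z \ge 0$. I would compute $\Pr[I=i,\ Z>z]$ by conditioning on $K_i = s$ and integrating over $s>z$. The event requires every other $K_j$ to exceed $s$, and by independence this has probability $\prod_{j\ne i} e^{-w_j s}$. Writing $W=\sum_{j\in[n]} w_j$, this gives
\begin{equation*}
\Pr[I=i,\ Z>z] = \int_z^\infty w_i e^{-w_i s}\prod_{j\neq i} e^{-w_j s}\,ds = \int_z^\infty w_i e^{-W s}\,ds = \frac{w_i}{W}\, e^{-W z}.
\end{equation*}

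All three claims now follow from this product form.
\begin{itemize}
\item Setting $z=0$ gives $\Pr[I=i] = w_i/W$.
\item Summing over $i$ gives $\Pr[Z>z] = e^{-Wz}$, so $Z\sim\EXP(W)$.
\item The joint tail $\Pr[I=i,\ Z>z]$ equals $\Pr[I=i]\cdot\Pr[Z>z]$ for every $i$ and $z$. Since the sets $\{Z>z\}$ generate the $\sigma$-algebra of $Z$, this gives independence of $I$ and $Z$.
\end{itemize}

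The only point needing care is the last one: justifying that factorization on tail events suffices for independence. Because $I$ is discrete, this follows from a standard $\pi$–$\lambda$ argument.
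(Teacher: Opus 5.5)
Your proof is correct and complete. The joint-tail identity $\Pr[I=i,\,Z>z]=\frac{w_i}{W}e^{-Wz}$ gives all three claims at once. The independence step is valid because the events $\{Z>z\}$ are closed under intersection, so they form a $\pi$-system and the $\pi$--$\lambda$ argument applies. The paper gives no proof of its own here; it simply imports the lemma from \cite{JW18}. Your joint-law computation is the standard argument behind that result, so it works as a self-contained replacement for the citation.
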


Let $\cE_4$ denote the event that Lemma~\ref{lem:LMp-classification} holds. Then $\Pr[\cE_4] = 1 - O(n^{-2p-5})$. Define
\begin{equation*}
    \widehat d_j=
    \begin{cases}
        nY_j/t_1,&j\in H\\
        d_j,&j\in L
    \end{cases}
\end{equation*}
and $\widehat M_p=\sum_{j\in[n]}\widehat d_j^p$.
Let $J=\arg\min_{j\in[n]}\frac{R_j}{\widehat d_j^p}$ and $Z=\min_{j\in[n]}\frac{R_j}{\widehat d_j^p}$.

\begin{lemma}
    \label{lem:LMp-race}
    Conditioned on $\cE_4$, we have $\Pr[J=j \mid Z\leq\lambda]=(1\pm\eps/2)\frac{d_j^p}{M_p}$ for every $j\in[n]$. Moreover, $\Pr[Z>\lambda]\le n^{-5(p+3)}$.
\end{lemma}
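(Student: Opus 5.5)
Fix any $S_1$ for which $\cE_4$ holds. The witness set $S_1$ is drawn in Pass 1 using randomness independent of the exponential variables $\{R_j\}$, so conditioning on $S_1$ fixes the partition $H,L$ and the weights $\widehat d_j^p$ but leaves $R_1,\dots,R_n$ i.i.d.\ $\EXP(1)$. We may therefore apply Lemma~\ref{lem:exp-race} with weights $w_j=\widehat d_j^p>0$. It gives three facts: $\Pr[J=j]=\widehat d_j^p/\widehat M_p$, $Z\sim\EXP(\widehat M_p)$, and $J$ is independent of $Z$. By the independence, conditioning on the event $\{Z\le\lambda\}$ does not change the law of $J$, so
\[
\Pr[J=j\mid Z\le\lambda]=\frac{\widehat d_j^p}{\widehat M_p}.
\]

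Next we compare $\widehat d_j^p/\widehat M_p$ with $d_j^p/M_p$. By item (3) of Lemma~\ref{lem:LMp-classification}, $\widehat d_j^p=(1\pm\eps/8)d_j^p$ for every $j\in H$. For $j\in L$ we have $\widehat d_j=d_j$ exactly. Summing over all $j$ gives $\widehat M_p=(1\pm\eps/8)M_p$. Hence the ratio lies in $\bigl[\frac{1-\eps/8}{1+\eps/8},\frac{1+\eps/8}{1-\eps/8}\bigr]\cdot\frac{d_j^p}{M_p}$. For $\eps<1$ this interval is contained in $(1\pm\eps/2)\frac{d_j^p}{M_p}$, because $\frac{1+x}{1-x}\le 1+3x$ for $x\le 1/8$.

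For the failure probability, $Z\sim\EXP(\widehat M_p)$ gives $\Pr[Z>\lambda]=e^{-\lambda\widehat M_p}$. Every $d_j\ge 1$ because of the self-loop, so $M_p\ge n$ and $\widehat M_p\ge(1-\eps/8)n\ge\frac{7}{8}n$. Therefore $\lambda\widehat M_p\ge\frac78\cdot 6(p+3)\log n$.

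This is the one step that needs care, since it yields the exponent $5.25(p+3)$ only when $\log$ is the natural logarithm. If $\log$ is base 2 the exponent is even larger, since $\log_2 n\ge\ln n$. In either case $\Pr[Z>\lambda]\le n^{-5(p+3)}$. All bounds hold for every fixed $S_1$ satisfying $\cE_4$, so they also hold conditioned on $\cE_4$.
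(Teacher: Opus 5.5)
Your proposal is correct and follows essentially the same argument as the paper: fix $S_1$ satisfying $\cE_4$, apply Lemma~\ref{lem:exp-race} with $w_j=\widehat d_j^p$, use item (3) of Lemma~\ref{lem:LMp-classification} to place $\widehat d_j^p/\widehat M_p$ in $\bigl[\frac{1-\eps/8}{1+\eps/8},\frac{1+\eps/8}{1-\eps/8}\bigr]\cdot\frac{d_j^p}{M_p}$, and use $M_p\ge n$ to bound $e^{-\lambda\widehat M_p}$. Your explicit check that $\frac{1+x}{1-x}\le 1+3x$ for $x\le 1/8$ and your remark on the base of the logarithm supply details that the paper leaves implicit.
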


\begin{proof}
    Fix any $S_1$ for which $\cE_4$ holds. Conditioned on
    this $S_1$, the values $\widehat d_1,\ldots,\widehat d_n$ are
    fixed, while $R_1,\ldots,R_n$ remain independent $\EXP(1)$ random
    variables. Applying Lemma~\ref{lem:exp-race} with $w_j=\widehat d_j^p$ and $\widehat M_p=\sum_{i\in[n]}\widehat d_i^p$, we obtain
    \begin{equation*}
        \Pr[J=j\mid Z\leq\lambda] = \frac{\widehat d_j^p}{\widehat M_p} \in \left[\frac{1-\eps/8}{1+\eps/8}, \frac{1+\eps/8}{1-\eps/8}\right]\frac{d_j^p}{M_p}=(1\pm\eps/2)\frac{d_j^p}{M_p}.
    \end{equation*}
    Moreover, since $M_p \ge n$,  $\Pr[Z>\lambda] = e^{-\widehat M_p\lambda} \le e^{-(1-\eps / 8) M_p\lambda} \le n^{-5(p+3)}$.
\end{proof}

\begin{lemma}
    \label{lem:LMp-candidates}
    The probability that Algorithm~\ref{alg:LMp-sampling} fails at line~\ref{ln:LMp-fail} is at most $e^{-t_2/6}$. Conditioned on $\cE_4$, if this event does not occur and $Z\le\lambda$, Algorithm~\ref{alg:LMp-sampling} returns $J$.
\end{lemma}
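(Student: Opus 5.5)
The proof has two parts: a tail bound on how many nodes enter $S_2$ in Pass 1, and a deterministic argument that the algorithm's final minimum coincides with the ideal minimum $Z$ and its argmin $J$.

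\textbf{Part 1: the failure probability at line~\ref{ln:LMp-fail}.} This event depends only on the independent $R_j\sim\EXP(1)$. Node $j$ is inserted into $S_2$ exactly when $R_j\le \tau:=(4n^{1/(p+1)})^p\lambda$. Its insertion probability is
\[
q=1-e^{-\tau}\le \tau = 4^p n^{p/(p+1)}\cdot\frac{6(p+3)\log n}{n}=6\cdot 4^p(p+3)n^{-1/(p+1)}\log n .
\]
The number $N$ of insertions is therefore a sum of $n$ independent Bernoulli variables with
\[
\E[N]\le nq\le 6\cdot 4^p(p+3)n^{1-1/(p+1)}\log n=t_2/2 .
\]
Failure means $N>t_2$, i.e.\ $N\ge 2\cdot(t_2/2)$. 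The Chernoff bound with $\delta=1$, applied after monotonically padding the mean up to $\mu=t_2/2$, gives $\Pr[N>t_2]\le e^{-\mu/3}=e^{-t_2/6}$.

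\textbf{Part 2: the algorithm returns $J$.} Condition on $\cE_4$, no failure at line~\ref{ln:LMp-fail}, and $Z\le\lambda$. By construction the algorithm computes the minimum of $R_j/\widehat d_j^{\,p}$ over the set $C=H\cup\{j\in L: j\in S_2\}$. For $j\in H$, the estimate $\widetilde d_j=nY_j/t_1$ equals $\widehat d_j$. For stored $j\in L$, $I_L=1$ and $D[i]=d_j=\widehat d_j$ exactly after Pass 2.

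Since $C\subseteq[n]$, the algorithm's final $K$ is at least $Z$. It remains to show $J\in C$, which then gives $K=Z$ and $r=J$.
\begin{itemize}
\item If $J\in H$, then $J\in C$ immediately.
\item If $J\in L$, then $R_J=Z\,d_J^p\le\lambda d_J^p\le\lambda(4n^{1/(p+1)})^p=\tau$, using item (2) of Lemma~\ref{lem:LMp-classification}. Hence $J$ was inserted into $S_2$ in Pass 1, no failure occurred, and $J\in C$.
\end{itemize}
With $K=Z\le\lambda$ the final check does not trigger Fail, so the algorithm returns $J$. Ties occur with probability zero and can be ignored.

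\textbf{Bookkeeping subtleties.} The main one is that the random seeds make $R_j$ consistent across the two passes, so "regenerate" yields the same values; I treat this as exact, per the paper's convention about Nisan's generator. A second is that $I_L$ marking by value equality could mark duplicates. Since each node is regarded as distinct via $(i,\sigma_i)$, I assume entries of $S_2$ store the index as well, so a stored $j\in L$ is correctly marked and a stored high-degree node is not. Neither affects the argument above.
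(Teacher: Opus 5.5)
Your proposal is correct and follows the paper's proof essentially step for step. You use the same bound $\E[X]\le n(4n^{1/(p+1)})^p\lambda = t_2/2$ with a $\delta=1$ Chernoff bound, and the same case split on $J\in H$ versus $J\in L$, invoking item (2) of Lemma~\ref{lem:LMp-classification} to get $R_J\le (4n^{1/(p+1)})^p\lambda$. Your extra bookkeeping only makes explicit what the paper leaves implicit: the algorithm minimizes $R_j/\widehat d_j^{\,p}$ over a subset of $[n]$ that contains $J$, so $K=Z\le\lambda$ and the final check does not trigger.
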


\begin{proof}
    Let $X$ be the number of nodes stored in $S_2$. Since
    $\Pr[R_j\le x]=1-e^{-x}\le x$ for every $x\ge0$, we have $\E[X] \le n \cdot (4n^{1/(p+1)})^p\lambda = \frac{t_2}{2}$. A Chernoff bound therefore gives $\Pr[X > t_2] \le e^{-t_2 / 6}$.
    
    Now condition on $\cE_4$ and suppose that $Z\le\lambda$. If $J\in H$, then Algorithm~\ref{alg:LMp-sampling} processes $J$ explicitly in the second pass.
    If $J\in L$, then Lemma~\ref{lem:LMp-classification} gives
    $d_J\le4n^{1/(p+1)}$, and hence $R_J=d_J^pZ \le(4n^{1/(p+1)})^p\lambda$.
    Thus, $J$ was stored in the first pass, and its exact degree is computed in the second pass. Therefore, in both cases Algorithm~\ref{alg:LMp-sampling} returns $J$.
\end{proof}

We now prove Theorem~\ref{thm:LMp-upper}. By Lemma~\ref{lem:LMp-classification}, Lemma~\ref{lem:LMp-race}, and Lemma~\ref{lem:LMp-candidates}, Algorithm~\ref{alg:LMp-sampling} fails with probability at most $O(n^{-2p-5})+n^{-5(p+3)}+e^{-t_2/6}=O(n^{-2p-5})$. Conditioned on $\cE_4$ and no failure, Algorithm~\ref{alg:LMp-sampling} outputs $\sigma_j$ with probability $(1 \pm \eps/2)\frac{d_j^p}{M_p}$. Consequently, for every $j\in[n]$,
\begin{equation*}
    \Pr[\text{output }j \mid \text{no failure}]
    =(1\pm\eps/2)\frac{d_j^p}{M_p}\pm O(n^{-2p-5})
    = (1\pm\eps)\frac{d_j^p}{M_p},
\end{equation*}
where the last equality holds since we have assumed that $\eps\ge n^{-1/(2(p+1))}$, and $d_j\le n$ and $M_p \le n^{p+1}$. 
Thus, Algorithm~\ref{alg:LMp-sampling} is a $(1+\eps,o(1))$-$L_{M_p}$-sampler.

\subsection{Removing the Assumption that $n$ Is Known}
\label{app:assumption}

We now explain how to obtain the random samples in the first pass in all of our algorithms without knowing the stream length. Note that after the first pass, $n$ is known automatically.

Let $t(N)$ denote the required sample size when the current stream length is $N$, and let $C>1$ be a sufficiently large constant. We assign each arriving node $i$ an independent rank $R_i$ chosen uniformly from $[0,1]$. After processing $N$ nodes, we set $q_N=\min\left\{1,\frac{C \cdot t(N)}{N}\right\}$
and store precisely the nodes $i\leq N$ satisfying $R_i\leq q_N$, discarding any previously stored nodes that no longer satisfy this condition when the threshold decreases.

For each sample size used in this paper, $t(N)$ is sublinear and $t(N)/N$ is decreasing. Hence, after each update, the stored set is exactly $\{i\leq N \mid R_i\leq q_N\}$. At the end of the first pass, if $q_n<1$, the number $X_n$ of stored nodes satisfies $X_n\sim\operatorname{Bin}(n,q_n)$ and $\E[X_n]=C \cdot t(n)$.
A Chernoff bound therefore shows that $X_n\geq t(n)$ with high probability. We then select the $t(n)$ stored nodes with the smallest ranks. These are the $t(n)$ smallest-ranked nodes in the entire stream and hence form a uniform sample without replacement.

Moreover, at every time $N\leq n$, the expected number $X_N$ of stored nodes is at most $C \cdot t(N)\leq C \cdot t(n)$. For a sufficiently large constant $K$, a Chernoff bound gives $\Pr\!\left[X_N>Kt(n)\right]\leq \exp(-\Omega(t(n)))$.
Note that the size of each sample used in our algorithms is $\Omega(\log n)$. A union bound over all $n$ updates therefore shows that $\max_{N\leq n}X_N=O(t(n))$ with high probability.

For Algorithms~\ref{alg:F0} and~\ref{alg:ell0}, we apply this construction only to the witness set $S_1$. For Algorithm~\ref{alg:Fp}, we apply this construction for both $S_1$ and $S_2$.  For Algorithm~\ref{alg:LMp-sampling}, we apply this construction to $S_1$. The standard concentration bounds for uniform sampling without replacement are no weaker than the corresponding bounds used in all these algorithms.  For $S_2$ in Algorithm~\ref{alg:LMp-sampling}, each node $i$ is assigned an independent exponential rank $R_i\sim\EXP(1)$ and is stored when $R_i\leq q_N$, where $q_N=O(N^{-1/(p+1)}\log N)$; since $q_N$ is decreasing and $\Pr[R_i\leq q_N]=1-e^{-q_N}\leq q_N$, the same Chernoff-bound argument preserves the stated space bound with high probability. 
Thus, the stated guarantees and space bounds are preserved without an additional pass.

\section{Lower Bounds}
\label{sec:lb}

In this section, we give several lower bounds to complement our algorithmic results. 

We use $\mathbf{e}_i^{k}$ to represent the $i$-th unit vector in $\mathbb{R}^{k}$, and $\mathbf{0}^{k}$ to represent the $k$-dimensional vector of all zeros.
Given $u \in \mathbb{R}^x$ and $v \in \mathbb{R}^y$, the notation $u \circ v$ stands for the vector in $\mathbb{R}^{x+y}$ obtained by concatenating $u$ and $v$.

Although we describe the hard instances using high-dimensional vectors, every stream item in our reductions has a succinct $O(\log n)$-bit representation using only $O(1)$ indices and constant-valued coordinates, from which the similarity function can be evaluated directly.

We prove lower bounds by performing reductions from the following well-studied communication problems.  

\begin{definition}[\INDEX]
    In the \INDEX\ problem, Alice holds an $n$-bit vector $X = (X_1, \ldots, X_n)$, and Bob holds an index $b \in [n]$.  The task is for Alice to send a message to Bob so that Bob can output $X_b$.
\end{definition}

\begin{lemma}[\cite{KN96}]
    \label{lem:INDEX}
    Any randomized algorithm that solves \INDEX\ with probability at least $0.51$ needs $\Omega(n)$ bits of communication. 
\end{lemma}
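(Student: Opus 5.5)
This lemma is the classical lower bound for one-way randomized communication complexity of \INDEX, cited from \cite{KN96}. We do not derive it from the streaming results; we give the standard information-theoretic argument.

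\emph{Reduction to distributional complexity.} By Yao's minimax principle it suffices to fix a hard input distribution and bound deterministic protocols with distributional error at most $0.49$. We take $X$ uniform over $\{0,1\}^n$ and $b$ uniform over $[n]$, independent of $X$. Public randomness can then be fixed to a good value, so we may assume Alice's message $M = M(X)$ is a deterministic function of $X$ with $|M| \le c$ bits.

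\emph{Information accounting.} Since $M$ takes at most $2^c$ values, $c \ge H(M) \ge I(M;X)$. The coordinates of $X$ are independent, so the chain rule gives
\begin{equation*}
I(M;X) \ge \sum_{i=1}^n I(M;X_i).
\end{equation*}
Thus it suffices to show that $I(M;X_i)$ is at least a constant on average over $i$.

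\emph{Per-coordinate bound via Fano.} Let $\delta_i$ be the probability that Bob errs when $b=i$. Bob's output on index $i$ is a function of $M$ alone, so Fano's inequality gives $H(X_i \mid M) \le h(\delta_i)$, where $h$ is the binary entropy function. Hence $I(M;X_i) \ge 1 - h(\delta_i)$. The success condition says $\frac{1}{n}\sum_i \delta_i \le 0.49$. Since $h$ is concave, Jensen's inequality yields
\begin{equation*}
\sum_i \bigl(1-h(\delta_i)\bigr) \ge n\bigl(1-h(0.49)\bigr) = \Omega(n),
\end{equation*}
and therefore $c = \Omega(n)$.

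The only delicate step is the Yao/averaging reduction. We must check that a worst-case success probability of $0.51$ converts to average error at most $0.49$ under the product distribution. We must also check that private randomness causes no trouble: private coins can be absorbed into a public random string, and we then average over that string. The resulting constant $1-h(0.49)$ is tiny but positive, which is all the $\Omega(n)$ claim requires.
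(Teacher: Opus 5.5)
The paper gives no proof of this lemma; it imports the one-way randomized lower bound for \INDEX\ as a black box from the Kushilevitz--Nisan textbook. Your argument is therefore a self-contained alternative to a citation rather than a reconstruction of anything in the paper. It is the standard information-theoretic proof, and it is correct: averaging over coins under the uniform product distribution (only the easy direction of Yao is needed), superadditivity of $I(M;X)$ over independent coordinates, Fano per coordinate, then Jensen.

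Three small steps deserve to be made explicit.
\begin{itemize}
\item Superadditivity is $I(M;X_i\mid X_{<i}) = H(X_i) - H(X_i\mid M,X_{<i}) \ge I(M;X_i)$. This is exactly where independence of the coordinates enters.
\item Jensen only gives $\sum_i h(\delta_i)\le n\,h(\bar\delta)$ with $\bar\delta=\frac1n\sum_i\delta_i\le 0.49$. Replacing $h(\bar\delta)$ by $h(0.49)$ uses that $h$ is increasing on $[0,\frac12]$, not concavity.
\item If Alice's message length can vary, there are fewer than $2^{c+1}$ possible messages. You then get $H(M)\le c+1$ rather than $c$, which is harmless.
\end{itemize}

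Compared with the bare citation, your route buys an explicit constant, $c\ge(1-h(0.49))\,n\approx 2.9\cdot 10^{-4}\,n$. It also shows the bound holds with constant $1-h(\delta)$ for any error $\delta<\frac12$. Finally, it lower-bounds the information the message reveals about $X$, not just its length. The citation costs the paper nothing and is the conventional choice for so classical a fact.
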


\begin{definition}[\DISJ]
    In the \DISJ\ problem, Alice holds an $n$-bit vector $X = (X_1, \ldots, X_n)$, and Bob holds an $n$-bit vector $Y = (Y_1, \ldots, Y_n)$.  The task is for Alice and Bob to exchange messages so that they output $1$ if there exists an index $i \in [n]$ such that $X_i = Y_i = 1$, and output $0$ otherwise.
\end{definition}

\begin{lemma}[\cite{BYJ+04}]
    \label{lem:DISJ}
    Any randomized algorithm that solves \DISJ\ with probability at least $0.51$ needs $\Omega(n)$ bits of communication. 
\end{lemma}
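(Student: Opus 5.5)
The statement is the classical randomized lower bound for set disjointness, which the paper imports from~\cite{BYJ+04}. I would reproduce it with the information-complexity method of that paper and not re-derive Razborov's or Kalyanasundaram--Schnitger's combinatorial proofs. The idea is to lower bound the \emph{conditional information cost} of any protocol $\Pi$ solving \DISJ\ with error at most $0.49$. Information cost lower bounds communication, since $I(XY;\Pi\mid D)\le H(\Pi)\le|\Pi|$. So it suffices to show that the information cost is $\Omega(n)$ under a suitable input distribution.

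First I fix a hard distribution for a single coordinate. Draw an auxiliary variable $D_i\in\{A,B\}$ uniformly. If $D_i=A$, set $X_i=0$ and draw $Y_i$ uniformly from $\{0,1\}$; symmetrically if $D_i=B$. This distribution is supported only on $(0,0),(0,1),(1,0)$, and conditioned on $D_i$ the pair $(X_i,Y_i)$ is a product distribution. Taking $n$ independent copies gives a distribution $\mu^n$ that is supported on disjoint instances. It is a ``collapsing'' distribution: if we plant arbitrary $(u,v)$ in coordinate $i$ and draw the other coordinates from $\mu$, then $\DISJ(X,Y)=\mathrm{AND}(u,v)$.

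Second, the direct-sum step. By independence of the coordinates given $D=(D_1,\dots,D_n)$ and the chain rule (superadditivity of mutual information), $I(XY;\Pi\mid D)\ge\sum_i I(X_iY_i;\Pi\mid D)$. For each $i$, Alice and Bob can use private randomness to sample the other coordinates (possible because, given $D_{-i}$, each coordinate's input is held by one player and the other's is fixed to $0$). This embeds a two-bit $\mathrm{AND}$ instance into a \DISJ\ protocol with the same error. Hence each summand is at least the conditional information cost $\mathrm{IC}_{\mu}(\mathrm{AND})$ of computing $\mathrm{AND}$ with error $0.49$, and it remains to show that this quantity is $\Omega(1)$.

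The main obstacle is this single-bit bound, because under $\mu$ the input $(1,1)$ never occurs, yet the protocol must still distinguish it. I would pass from mutual information to Hellinger distance, using that $I(Z;\Pi(Z))\ge h^2(\Pi_{z_1},\Pi_{z_2})$ for uniform $Z$ on $\{z_1,z_2\}$. This yields $\mathrm{IC}\ge\frac12\bigl(h^2(\Pi_{00},\Pi_{01})+h^2(\Pi_{00},\Pi_{10})\bigr)\ge\frac14 h^2(\Pi_{01},\Pi_{10})$ by the triangle inequality. Next I would apply the cut-and-paste property of randomized protocols (transcript distributions form a rectangle), which gives $h(\Pi_{01},\Pi_{10})=h(\Pi_{00},\Pi_{11})$. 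Finally, correctness with error $0.49$ forces $\Pi_{00}$ and $\Pi_{11}$ to be at constant statistical distance, so their Hellinger distance is also a constant. Chaining these bounds gives $\mathrm{IC}_\mu(\mathrm{AND})=\Omega(1)$, hence information cost $\Omega(n)$, hence $\Omega(n)$ communication.
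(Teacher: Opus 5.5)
Your proposal is correct. It is essentially the information-complexity proof of \cite{BYJ+04}: a hard distribution with auxiliary variable $D$, the direct-sum step, reduction to Hellinger distance, cut-and-paste, and a correctness bound on $h(\Pi_{00},\Pi_{11})$. That is exactly the source the paper cites in place of giving its own proof. Two details to keep straight: cut-and-paste requires private coins (or public coins included in the transcript and conditioned on), and $I(Z;\Pi(Z))\ge h^2$ holds with logarithms in base $2$.
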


In the proofs of theorems in this section, we always use $I_{\mathtt{NO}}$ to denote an instance of \INDEX\ or \DISJ\ for which the output is $0$, and $I_{\mathtt{YES}}$ to denote an instance for which the output is $1$.

\subsection{One-Pass Streaming Algorithms}
We show the following lower bounds for one-pass streaming algorithms in the node-arrival model.

\begin{theorem}
    \label{thm:F0-lb}
    For any constant $C \ge 1$, any one-pass $(C, 0.49)$-approximation algorithm for computing $\rFo$ on a data stream of length $n$ needs $\Omega(n)$ bits of space.
\end{theorem}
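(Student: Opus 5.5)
We reduce from \INDEX\ (Lemma~\ref{lem:INDEX}). Given a one-pass streaming algorithm $\A$ with $s$ bits of memory, Alice feeds her part of the stream into $\A$ and sends the memory state to Bob. Bob continues the stream with his part and reads off the answer. This gives a one-way protocol with $s$ bits of communication, so it suffices to design streams whose $\rFo$ values differ by a factor larger than $C^2$ depending on $X_b$. We use a Boolean similarity function.

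\textbf{Construction.} Let $m=\Theta(n)$ be the length of the \INDEX\ instance. Alice inserts one item $a_i$ for every $i\in[m]$ with $X_i=1$, for instance $a_i=\mathbf{e}_i^{m}\circ\mathbf{0}^1$. To keep the stream length fixed, she pads with dummy items that are similar to nothing except themselves. These contribute a known amount $k$ to $\rFo$, and we keep $k$ small by choosing the dummy items all similar to each other, so together they contribute exactly $1$.

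Bob then inserts $K$ copies of an item $b^*$ with $f(b^*,a_b)=1$ and $f(b^*,\text{everything else})=0$. We also want the copies of $b^*$ to be dissimilar to each other, because then $\rFo$ counts them individually. This needs a symmetric $f$ with $f(x,x)=1$, so for genuinely distinct items $b^*_1,\dots,b^*_K$ we set $f(b^*_\ell,b^*_{\ell'})=0$ for $\ell\neq\ell'$ and $f(b^*_\ell,a_b)=1$. With $K=\Theta(n)$ the two cases are:
\begin{itemize}
\item If $X_b=0$, each $b^*_\ell$ is isolated and $\rFo\ge K$.
\item If $X_b=1$, each $b^*_\ell$ has degree $2$, and $a_b$ has degree $K+1$.
\end{itemize}
This only changes the value by a factor of about $2$, which is not enough for an arbitrary constant $C$. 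The gap must be amplified.

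\textbf{Amplification.} Make Bob's items a clique among themselves but not with $a_b$. Then they contribute exactly $1$ regardless of $X_b$, which is also useless. The better choice is to put the similarity structure on Alice's side. Alice inserts, for each $i$ with $X_i=1$, a group of $K$ mutually non-similar items $a_{i,1},\dots,a_{i,K}$, all labelled with index $i$. Bob inserts $K'$ items $b_1,\dots,b_{K'}$ that are mutually similar (a clique) and similar to every $a_{b,\ell}$. To keep the total length $O(n)$, use $m=\Theta(n/K)$ groups with $K$ a large constant depending on $C$.
\begin{itemize}
\item If $X_b=0$: $\rFo$ equals the contribution of Alice's items plus $K'\cdot\frac{1}{K'}=1$.
\item If $X_b=1$: group $b$'s items each have degree $K'+1$, so group $b$ drops from contributing $K$ to about $K/K'$.
\end{itemize}
This is still only an additive change, which is drowned out by the other groups.

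\textbf{Cancelling Alice's contribution.} To make the instance's contribution dominant, the other groups must contribute negligibly. We therefore let Alice's groups for different $i$ be mutually similar: all of Alice's items $a_{i,\ell}$ are pairwise similar except within the same group, so they form a complete multipartite graph. Each has degree about $|X|K$, giving a total contribution of about $1$. Bob's items are similar to nothing except each other and group $b$. One more adjustment is needed: Bob's items should be mutually non-similar ($K'$ isolated nodes), with $K'=\Theta(n)$, and similar to the single item $a_b$. Then
\begin{itemize}
\item $X_b=0$ gives $\rFo\approx K'+1$;
\item $X_b=1$ gives $\rFo\approx K'/2+O(1)$.
\end{itemize}
The factor is again $2$.

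\textbf{Main obstacle.} Getting a ratio above $C^2$ is the hard step. My approach is to let each Bob item $b_\ell$ be similar to an $r$-element block $\{a_{b,1},\dots,a_{b,r}\}$, with $r=\lceil 2C^2\rceil$ a constant, while the block items are pairwise similar to each other. Bob's $K'$ items are mutually dissimilar. Then
\begin{itemize}
\item $X_b=0$ gives $\rFo\ge K'$;
\item $X_b=1$ gives each $b_\ell$ degree $r+1$, so $\rFo\le K'/(r+1)+O(1)$.
\end{itemize}
Take $K'=\Theta(n)$ and $m=\Theta(n)/r$, so that Alice's blocks are pairwise similar across different $i$ and contribute $O(1)$ in total. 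The ratio then exceeds $C^2$. A $(C,0.49)$-approximation therefore distinguishes the two cases with probability $0.51$, which solves \INDEX\ on $\Theta(n)$ bits. By Lemma~\ref{lem:INDEX}, $s=\Omega(n)$. Each item is an index pair plus a type tag, so $f$ is evaluable from an $O(\log n)$-bit encoding.
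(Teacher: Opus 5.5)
Your final construction (the one in the ``Main obstacle'' paragraph) is correct and is essentially the paper's \INDEX\ reduction: Bob's $\Theta(n)$ mutually dissimilar items all attach to a constant-size clique of Alice's at index $b$ exactly when $X_b=1$, which divides their contribution to $\DI$ by roughly $r+1>2C^2$. The differences are cosmetic. The paper keeps Alice's $k$ blocks as separate cliques (contributing $k$, still dominated by Bob's $tk$ isolated nodes) and encodes $X_i=0$ by placing the block far from Bob's points, whereas you merge Alice's items into one clique and pad with a dummy clique so that her contribution is $O(1)$. The earlier paragraphs are abandoned attempts and can simply be deleted.
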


\begin{proof}
    Set $t = 2C^2$ and $k = \frac{n}{2t} = \frac{n}{4C^2}$.  We perform a reduction from \INDEX\ to $\rFo$.  The input reduction is as follows: Alice, for each $i \in [k]$, adds $t$ points at location $0.6 \mathbf{e}_i^{k} \circ \mathbf{0}^{t k}$ if $X_i = 1$, or $t$ points at location $2\mathbf{e}_i^{k} \circ  \mathbf{0}^{t k}$ if $X_i = 0$.  Bob, for each $j \in [t k]$, adds a point at location $0.6 \mathbf{e}_b^{k} \circ 0.6 \mathbf{e}_j^{t k}$.  In total, Alice and Bob create $t k + t k = n$ points. We then define the similarity between two points $x$ and $y$ as follows: 
\begin{equation*}
f(x, y) \;=\;
\begin{cases}
    1, & \text{if } \lVert x-y \rVert_1 \leq 1, \\[6pt]
    0, & \text{otherwise.}
\end{cases}
\end{equation*}

If $X_b = 0$, then $\rFo(I_{\mathtt{NO}}) = (1+t) k = (1+2C^2) k$.  If $X_b = 1$, then 
\begin{equation*}
\rFo(I_{\mathtt{YES}})  = (k-1) + t \cdot \frac{1}{t + t k} + t k \cdot \frac{1}{t+1} < 2k.
\end{equation*}
We thus have
\begin{equation*}
\frac{\rFo(I_{\mathtt{NO}})}{C} > 2C k > C \cdot \rFo(I_{\mathtt{YES}}).
\end{equation*}
Therefore, any one-pass $C$-approximation streaming algorithm $\A$ for $\rFo$ using $s$ bits of space can be used to solve \INDEX\ with $s$ bits of communication: Alice simulates $\A$ on the set of points she has constructed, in any order, and sends the resulting memory configuration to Bob. Bob then continues the simulation on the set of points he has constructed, also in any order, and outputs $1$ if $\A$’s estimate of $\rFo$ is at most $2Ck$. By Lemma~\ref{lem:INDEX}, any such algorithm requires $\Omega(k) = \Omega(n)$ space.
\end{proof}

\begin{theorem}
    \label{thm:l0-lb}
    For any $\eps\in(0,0.9)$, any one-pass $(1+\eps,0.49)$-$L_{\DI}$-sampler on a data stream of length $n$ needs $\Omega(n)$ bits of space.
\end{theorem}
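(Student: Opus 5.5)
We reduce from \INDEX, using the same streaming-simulation template as in Theorem~\ref{thm:F0-lb}: Alice encodes her bit vector into the first part of the stream, Bob appends points depending on $b$, and Bob recovers $X_b$ from the sample the $L_\DI$-sampler outputs. The sampler only returns a node, not an estimate, so the construction must make the output distribution differ sharply between the two cases. Concretely, under $X_b=0$ some node must receive large $L_\DI$-mass, while under $X_b=1$ the same node must receive almost none. Since the sampler's guarantee is multiplicative, with $(1\pm\eps)$ error and $\eps<0.9$, ``almost none'' should mean a vanishing fraction, so that a $(1+\eps)$ blow-up still leaves it tiny.

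\textbf{Instance.} Let $k=n/2$, using the $\ell_1$-threshold Boolean similarity $f$ from Theorem~\ref{thm:F0-lb}. For each $i\in[k]$, Alice inserts one point: $0.6\mathbf{e}_i^{k}\circ\mathbf{0}$ if $X_i=1$, and $2\mathbf{e}_i^{k}\circ\mathbf{0}$ if $X_i=0$. Bob then inserts a single point $q=0.6\mathbf{e}_b^{k}\circ\mathbf{0}$, followed by $k-1$ ``padding'' points $0.6\mathbf{e}_b^{k}\circ 0.6\mathbf{e}_j^{k}$ for $j\in[k-1]$.

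\textbf{Degrees.} The pairwise $\ell_1$ distances give the following. Alice's points at distinct indices are never adjacent. Each padding point is adjacent to $q$ and to Alice's point $i=b$ only when $X_b=1$; two padding points are at distance $1.2$, so they are not adjacent to each other.
\begin{itemize}
\item If $X_b=1$: $q$ and Alice's $b$-point are adjacent to everything in the cluster, so both have degree about $k+1$. Each padding point has degree $3$, and Alice's other points have degree $1$.
\item If $X_b=0$: Alice's $b$-point is at distance $2.6>1$ from $q$ and isolated. The point $q$ then has degree $k$ and the padding points have degree $2$.
\end{itemize}
This first attempt does not separate the cases, because $q$ is a hub either way. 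The fix is to rescale Bob's points so that the padding points are adjacent only to Alice's $b$-point, never to $q$ or to each other, and to let $q$ see only that same point. For example, set $q=0.6\mathbf{e}_b^{k}\circ 0.6\mathbf{e}_{k}^{k}$, the same shape as a padding point, giving $k$ symmetric Bob points that each touch only Alice's $b$-point when $X_b=1$.

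The resulting picture is as follows. If $X_b=1$, Alice's $b$-point becomes a hub of degree $k+1$ and every Bob point has degree $2$. If $X_b=0$, every Bob point is isolated with degree $1$. That rescaling alone only changes the Bob points' degrees from $1$ to $2$, which again shifts mass by a constant factor. So the real lever is the hub itself: it has mass $1/(k+1)$ out of $\DI=\Theta(k)$, versus $1$ out of $\Theta(k)$ when $X_b=0$. This is a factor-$k$ gap, but on a probability of order only $1/k$, which is too small to detect.

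\textbf{Main obstacle and approach.} To make Bob's test observe a constant-probability event, I would make Alice's block heavy in $L_\DI$-mass only through the indexed location. I would therefore put Alice's $k$ points in $t=\Theta(1)$ copies, so that Alice's total mass is $k$ when every location is isolated. Bob adds points that \emph{merge} all of Alice's locations except $b$ into huge cliques via shared coordinates, or that overwhelm them with low-degree mass. But Bob does not know $X$, so instead he adds $\Theta(k)$ points that are each adjacent only to location $b$. Under $X_b=0$ they are isolated, each with mass $1$; under $X_b=1$ they form a star with the $b$-point, each with mass $1/2$. Bob's test is then whether the output is one of his own points, and its acceptance probability differs by a constant ratio between the two cases: roughly $1/2$ versus $1/3$ when Alice also contributes $\Theta(k)$ mass. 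A constant gap that survives the $(1\pm\eps)$ distortion for $\eps<0.9$ requires tuning the masses so that the ratio exceeds $(1.9/0.1)$. I would achieve this by letting Bob's points have degree $1$ versus degree $\Theta(k)$: when $X_b=1$, Bob's points are made mutually adjacent through the $b$-point's neighborhood, using coordinates placed so that a distance at most $1$ holds only via the $0.6\mathbf{e}_b$ component. Repetition of independent sampler runs is unavailable in one message, but the constant success probability $0.51$ suffices via Lemma~\ref{lem:INDEX}. Checking this geometric realization is the step I expect to be delicate.
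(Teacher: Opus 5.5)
Your proposal does not reach a working instance, and the mechanism you settle on at the end cannot be realized. The similarity $f$ is evaluated pairwise, so whether two of Bob's points are adjacent depends only on those two items and cannot change with $X_b$. Bob's points therefore cannot be ``made mutually adjacent through the $b$-point's neighborhood.'' A Bob point's degree can depend on $X_b$ only through Alice's points at location $b$. Giving it degree $\Theta(k)$ would force Alice, who does not know $b$, to place $\Theta(k)$ copies at every location. Then $n=\Theta(k^2)$, and you would get only $\Omega(\sqrt{n})$. You were pushed toward this extreme separation by a mistaken claim: independent repetition \emph{is} available in a one-way protocol. Alice runs $\ell$ independent copies of the sampler on her part and sends all $\ell$ memory states, which costs $\ell s=O(s)$ bits for constant $\ell$. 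With repetition you do not need a near-deterministic test. It suffices to find an event whose target probability is a small constant in one case and a constant bounded away from $0$ in the other.

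The missing idea, used in the paper, is a large constant multiplicity $t$ (the paper takes $t=400$, $k=n/(2t)$) in essentially your corrected instance. Alice puts $t$ copies at $0.6\mathbf{e}_i^{k}\circ\mathbf{0}^{tk}$ if $X_i=1$, or at $2\mathbf{e}_i^{k}\circ\mathbf{0}^{tk}$ if $X_i=0$. Bob adds the $tk$ points $0.6\mathbf{e}_b^{k}\circ 0.6\mathbf{e}_j^{tk}$. Let $V_A$ be Alice's nodes.
If $X_b=0$, every Alice location is a $t$-clique of total mass $1$, and Bob's $tk$ points are isolated. The target probability of $V_A$ is then $k/((t+1)k)=1/(t+1)$.
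If $X_b=1$, Alice's mass is $(k-1)+\frac{1}{k+1}=\frac{k^2}{k+1}\ge\frac{k}{2}$. Each Bob point now has degree $t+1$ rather than $2$, so Bob's mass is $\frac{tk}{t+1}<k$, and the target probability of $V_A$ is at least $\frac13$.
Your $t=1$ computation, with Bob's points of mass $\frac12$, is what kept the two target probabilities within a constant factor ($\frac12$ versus about $\frac23$ for Alice's side). Letting $t$ grow pushes the $X_b=0$ probability down to $1/(t+1)$, while the $X_b=1$ probability stays at least $\frac13$.
Bob outputs $1$ if and only if one of $\ell=50$ copies returns a node of $V_A$, with failures counted as no.
\begin{itemize}
\item When $X_b=0$, a union bound bounds the error by $50\cdot 1.9/401<0.24$.
\item When $X_b=1$, each copy hits $V_A$ with probability at least $0.51\cdot 0.1\cdot\frac13$, so all $50$ copies miss with probability below $0.43$.
\end{itemize}
This gives an \INDEX\ protocol with $50s$ bits of communication, so $s=\Omega(k)=\Omega(n)$.
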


\begin{proof}
    We use the hard instance from the proof of Theorem~\ref{thm:F0-lb} with $t=400$ and $k=\frac{n}{2t} = \frac{n}{800}$. Let $V_A$ be the set of the $tk$ nodes added by Alice, and let $\cE_A$ be the event that the sampler outputs a node in $V_A$.

If $X_b=0$, then the total inverse-degree mass of $V_A$ is $k$ and $\DI(I_{\mathtt{NO}})=(t+1)k$, so the target probability of $\cE_A$ is $q_{\mathtt{NO}} = \frac{1}{t+1}$.

If $X_b=1$, then the total inverse-degree mass of $V_A$ is $(k-1) + \frac{1}{k+1} = \frac{k^2}{k+1}$ and $\DI(I_{\mathtt{YES}})=\frac{k^2}{k+1}+\frac{tk}{t+1}$.
Consequently, the target probability of $\cE_A$ is
\begin{equation*}
    q_{\mathtt{YES}} =\frac{k^2/(k+1)}{k^2/(k+1)+tk/(t+1)} \ge \frac{1}{3}.
\end{equation*}

Suppose that there is a one-pass
$(1+\eps,0.49)$-$L_{\DI}$-sampler $\A$ using $s$ bits of
space, where $\eps\in(0,0.9)$. Run $\ell=50$ independent copies of
$\A$ in parallel. Alice runs all copies on her portion of the
stream and sends their memory configurations to Bob. Bob then continues all copies on his portion of the stream. He outputs $1$ if at least one copy returns a node in $V_A$, and outputs $0$ otherwise. In particular, a failure is treated as not returning a node in $V_A$.

If $X_b=0$, then one copy returns a node in $V_A$ with probability at most $(1+\eps)q_{\mathtt{NO}} < \frac{1.9}{401}$.
Therefore, by a union bound, when $X_b=0$, the probability that the
protocol incorrectly outputs $1$ is at most $\frac{1.9 \ell}{401} < 0.24$.

If $X_b=1$, then one copy returns a node in $V_A$ with probability at
least
\begin{equation*}
(1-0.49)(1-\eps)q_{\mathtt{YES}} > 0.51\cdot 0.1\cdot\frac{1}{3} = \frac{17}{1000}.
\end{equation*}
Since copies are independent, the probability that none of them
returns a node in $V_A$ is at most $\left(1 - \frac{17}{1000}\right)^{\ell} < 0.43$.

The protocol therefore solves \INDEX\ with success probability greater
than $0.51$. Its total communication is $50s=O(s)$ bits. By
Lemma~\ref{lem:INDEX}, $s=\Omega(k)=\Omega(n)$.
\end{proof}

\begin{theorem}
    \label{thm:Mp-lb}
    For any $p>0$ and constant $C \geq 1$, any one-pass $(C, 0.49)$-approximation algorithm for computing $M_p$ on a data stream of length $n$ needs $\Omega(n)$ bits of space.
\end{theorem}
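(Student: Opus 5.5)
We reduce from \INDEX, following the template of the proof of Theorem~\ref{thm:F0-lb}. There the gap came from changing the inverse-degree mass of Alice's nodes. Here we instead change the degrees of all of Bob's nodes: in the NO case every node has degree $1$, and in the YES case every Bob node has degree $t+1$ for a suitably large constant $t$. Since $M_p$ is a sum of $d_i^p$ with $p>0$, this multiplies the contribution of half of the stream by $(t+1)^p$.

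\textbf{The hard instance.} Fix $t$ to be a constant, depending on $C$ and $p$, with $(t+1)^p \ge 4C^2$. Set $k=\frac{n}{2t}$ and $m=\frac{n}{2}$.
\begin{itemize}
\item Alice, for each $i\in[k]$, inserts $t$ items $(\mathtt{A},i,\ell,X_i)$ for $\ell\in[t]$.
\item Bob inserts $m$ items $(\mathtt{B},b,j)$ for $j\in[m]$.
\end{itemize}
We define the Boolean symmetric similarity $f$ directly on these $O(\log n)$-bit tuples: $f(x,y)=1$ iff either $x=y$, or one item is $(\mathtt{A},i,\ell,1)$, the other is $(\mathtt{B},b,j)$, and $i=b$. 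Otherwise $f(x,y)=0$. This can also be realized by a vector embedding, as in the earlier proofs, but the direct definition suffices, since $f$ is evaluable from the two stored items.

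\textbf{Computing the two values.}
\begin{itemize}
\item If $X_b=0$, the graph has only self-loops, so every degree is $1$ and $M_p(I_{\mathtt{NO}})=tk+m=n$.
\item If $X_b=1$, Alice's group $b$ and Bob's $m$ items form a complete bipartite graph $K_{t,m}$, plus self-loops. Each Bob node has degree $t+1$, so
\begin{equation*}
M_p(I_{\mathtt{YES}})\ \ge\ m(t+1)^p\ \ge\ \tfrac{n}{2}\cdot 4C^2 = 2C^2 n .
\end{equation*}
\end{itemize}
Hence $C\cdot M_p(I_{\mathtt{NO}}) = Cn < \frac{M_p(I_{\mathtt{YES}})}{C}$, so the two cases are separated under any $C$-approximation.

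\textbf{The protocol.} Alice runs the one-pass $(C,0.49)$-approximation algorithm on her items and sends the memory state ($s$ bits) to Bob. Bob continues on his items and outputs $1$ iff the estimate exceeds $Cn$. This solves \INDEX\ on $k$ bits with probability at least $0.51$. By Lemma~\ref{lem:INDEX}, $s=\Omega(k)=\Omega(n/t)=\Omega(n)$, since $t$ is a constant.

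The main point of care is making the gap independent of $p$ being large. A single Alice node per index adjacent to Bob's nodes gives only $(m+1)^p$ versus $n$, which fails for $p\le 1$. Putting Alice's $t$ copies in a clique also fails: their own contribution $t^{p+1}$ per group swamps the change. Making Alice's copies mutually non-adjacent, so that only Bob's degrees jump from $1$ to $t+1$, resolves both issues.
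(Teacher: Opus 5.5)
Your proposal is correct and is essentially the paper's proof. The paper uses the same \INDEX\ instance: Alice places $t$ mutually non-adjacent copies per index, and Bob places $n/2$ nodes that all become adjacent to Alice's group $b$ exactly when $X_b=1$, so Bob's degrees jump from $1$ to $t+1$. The only differences are that the paper realizes $f$ through $\ell_1$ distances in $[4,8]$ rather than your direct tuple-based definition, and takes $t=(2C^2)^{1/p}$ to get $M_p(I_{\mathtt{YES}})>C^2 n$ instead of your choice $(t+1)^p\ge 4C^2$.
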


\begin{proof}
    Set $t = (2C^2)^{1/p}$ and $k = \frac{n}{2t} = \frac{n}{2(2C^2)^{1/p}}$.  We perform a reduction from \INDEX\ to $M_p$.  The input reduction is as follows: Alice, for each $i \in [k]$, adds $t$ points at locations $10 \mathbf{e}_i^{k} \circ \mathbf{e}_1^{t k} \circ 3, \ldots, 10 \mathbf{e}_i^{k} \circ \mathbf{e}_t^{t k} \circ 3$ if $X_i = 1$, or $t$ points at locations $-10 \mathbf{e}_i^{k} \circ \mathbf{e}_1^{t k} \circ 3, \ldots, -10 \mathbf{e}_i^{k} \circ \mathbf{e}_t^{t k} \circ 3$ if $X_i = 0$.  Bob, for each $j \in [t k]$, adds a point at location $10 \mathbf{e}_b^{k} \circ \mathbf{e}_{j}^{t k} \circ -3$.  In total, Alice and Bob create $t k + t k = n$ points. We then define the similarity between two points $x$ and $y$ as follows: 
\begin{equation*}
{f(x, y)} \;=\;
\begin{cases}
    1, & \text{if } x = y, \\[6pt]
    1, & \text{if } \lVert x-y \rVert_1 \in [4, 8], \\[6pt]
    0, & \text{otherwise.}
\end{cases}
\end{equation*}

If $X_b = 0$, then $M_p(I_{\mathtt{NO}}) = n$. If $X_b = 1$, then 
\begin{equation*}
M_p(I_{\mathtt{YES}}) = t(k-1)+t(tk+1)^p+tk(t+1)^p > tk\cdot t^p = C^2n.
\end{equation*}
Thus, for any constant $C \geq 1$, we have
\begin{equation*}
\frac{M_p(I_{\mathtt{YES}})}{C} > C \cdot M_p(I_{\mathtt{NO}}).
\end{equation*}

Since any one-pass $C$-approximation streaming algorithm for $M_p$ using $s$ bits of space can be used to solve \INDEX\ with $s$ bits of communication, by Lemma~\ref{lem:INDEX}, any such algorithm requires $\Omega(k) = \Omega(n)$ space.
\end{proof}

\begin{theorem}
\label{thm:LMp-lb}
For any $p>0$ and $\eps\in(0,0.9)$, any one-pass
$(1+\eps,0.49)$-$L_{M_p}$-sampler on a data stream of length $n$
needs $\Omega(n)$ bits of space.
\end{theorem}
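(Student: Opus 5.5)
We reduce from \INDEX, reusing the hard instance from the proof of Theorem~\ref{thm:Mp-lb}, with $t$ now a suitable constant depending only on $p$ (instead of $(2C^2)^{1/p}$) and $k=\frac{n}{2t}$. As in Theorem~\ref{thm:l0-lb}, we take the distinguishing event to be that the sampler outputs a node of Bob's set $V_B$ (the $tk$ points $10\mathbf{e}_b^k\circ\mathbf{e}_j^{tk}\circ -3$). We then show its target probability differs by a constant factor between the two cases.

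\textbf{Case $X_b=0$.} All degrees equal $1$, so $M_p=n$. The target probability of outputting a node in $V_B$ is $q_{\mathtt{NO}}=\frac{tk}{2tk}=\frac12$.

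\textbf{Case $X_b=1$.} The $t$ Alice nodes of block $b$ each have degree $tk+1$, and each Bob node has degree $t+1$. The total $M_p$-mass of $V_B$ is $tk(t+1)^p$. The mass of Alice's block $b$ is $t(tk+1)^p$, which for $p>0$ and large $n$ is $\Omega(k^p)$. The remaining Alice mass is $t(k-1)$.

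Here a direction issue arises. For $p\ge1$, block $b$'s mass $t(tk+1)^p\gg tk(t+1)^p$ dominates, so $q_{\mathtt{YES}}\to 0$. For $p<1$, Bob's mass $\Theta(k\,t^{p+1})$ dominates block $b$'s mass $\Theta(t^{p+1}k^p)$, and with $t$ chosen large we get $q_{\mathtt{YES}}\ge \frac{(t+1)^p}{(t+1)^p+1+o(1)}$, which is close to $1$. In both regimes we choose $t$ (a constant depending on $p$) so that $q_{\mathtt{YES}}$ is bounded away from $\frac12$ by a large constant factor in the relevant direction: near $0$ when $p\ge 1$, and near $1$ when $p<1$. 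For $p<1$ the cleaner choice is to switch the event to ``output is in $V_A$'', whose probability is $\frac12$ in the NO case and $O(1/t^{p})+o(1)$ in the YES case. Either way, the event $\cE$ has target probability at least some constant $c_0$ in one case and at most $c_1\ll c_0$ in the other.

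\textbf{Amplification.} We follow the protocol of Theorem~\ref{thm:l0-lb}. Run $\ell=O(1)$ independent copies of the sampler $\A$. Alice simulates all copies on her points and sends their memory states, and Bob finishes the simulation. Failures are counted as the event not occurring. Because of the $(1\pm\eps)$ distortion with $\eps<0.9$ and the failure probability $0.49$, a single copy observes $\cE$ with probability at least $0.51\cdot 0.1\cdot c_0$ in the ``large'' case and at most $1.9\,c_1$ in the ``small'' case.

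\textbf{Main obstacle.} The delicate step is choosing $t$ and $\ell$ so that both error probabilities are below $0.49$. With $c_1$ tiny, we need a threshold test on the count of copies observing $\cE$, or the ``at least one copy'' rule with $\ell\cdot 1.9c_1<0.24$ and $(1-0.051c_0)^\ell<0.43$. This requires $c_1$ to be much smaller than $c_0$, which is achieved by taking $t$ a sufficiently large constant. In the $p\ge1$ orientation we instead use the event ``output in Alice's block/$V_A$'' so that the rare case is the NO case. With constants fixed, the communication is $O(s)$, and Lemma~\ref{lem:INDEX} gives $s=\Omega(k)=\Omega(n)$.
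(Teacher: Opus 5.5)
Your $p<1$ branch works. With the event ``output lies in $V_A$'', the target is $\frac12$ in the NO case and $\frac{1}{1+(t+1)^p}+O(k^{p-1})$ in the YES case.

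The gap is the $p\ge1$ branch. Your claim that block $b$'s mass $t(tk+1)^p$ dominates Bob's mass $tk(t+1)^p$ is false at $p=1$, where both equal $t^2k+O(tk)$. In fact, at $p=1$ neither $V_A$ nor $V_B$ can separate the cases, for any $t$:
\begin{itemize}
\item The instance has no non-loop edges inside $V_A$ or inside $V_B$.
\item So the $M_1$-mass of each of $V_A$ and $V_B$ equals $tk$ plus the number of $V_A$--$V_B$ edges.
\item Hence each carries exactly half of $M_1$ in both the NO and the YES case. Your statement that $q_{\mathtt{YES}}$ is near $0$ for $p\ge1$ fails here.
\end{itemize}
For $p>1$, the event $V_A$ has target $\frac12$ versus $1-o(1)$. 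A sampler with $\eps$ up to $0.9$ and failure probability $0.49$ can realize the same event probability in both cases, so $V_A$ does not work there either.

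Your $p\ge1$ branch is therefore correct only if ``Alice's block$/V_A$'' means the $t$ nodes Alice adds at coordinate $b$. That event has target $\frac{1}{2k}$ in the NO case. In the YES case you still need a constant lower bound, and it cannot come from dominance. Using $k^p\ge k$, $t^p\ge 1$ and $(t+1)^p\le (2t)^p$, its mass fraction is at least $\frac{1}{2+2^p}$. This suffices with $\ell=\Theta(2^p)$ copies and the at-least-one rule. Note that the smallness of the rare-case probability then comes from $k$, not from $t$.

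The paper avoids the whole dichotomy by using the event $A_{\bar b}$, the set of Alice's nodes at coordinates $i\ne b$. For every $p>0$, its target is $\frac{k-1}{2k}\ge\frac13$ in the NO case and at most $\frac{t(k-1)}{tk(t+1)^p}<(t+1)^{-p}$ in the YES case. The reason is that Bob's mass alone exceeds the mass of $A_{\bar b}$ by a factor $(t+1)^p$. Block $b$'s mass only enlarges the denominator and never has to be estimated. That is exactly the quantity whose comparison with Bob's mass causes your case split and the $p=1$ degeneracy. A single choice, $t^p\ge 1000$ with $\ell=50$ copies, then works uniformly in $p$, with no $O(k^{p-1})$ error term to control.
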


\begin{proof}
    We use the same hard instance from the proof of
Theorem~\ref{thm:Mp-lb}. Choose a constant $t$ satisfying $t^p\geq 1000$, and set $k=\frac{n}{2t}$.

Let $A_{\bar b}$ be the set of the $t(k-1)$ nodes added by Alice for
coordinates $i\neq b$, and let $\cE_{\bar b}$ be the event that the
sampler outputs a node in $A_{\bar b}$.

If $X_b=0$, then every node has degree $1$. Therefore, the target
probability of $\cE_{\bar b}$ is
\begin{equation*}
    q_{\mathtt{NO}}=\frac{t(k-1)}{n}=\frac{k-1}{2k}\geq \frac{1}{3},
\end{equation*}
where the last inequality holds for $k\geq3$.

If $X_b=1$, then the $t$ nodes added by Alice for coordinate $b$ have
degree $tk+1$, the $tk$ nodes added by Bob have degree $t+1$, and all
nodes in $A_{\bar b}$ have degree $1$. Hence the target probability of $\cE_{\bar b}$ is
\begin{equation*}
    q_{\mathtt{YES}} = \frac{t(k-1)}{t(k-1)+t(tk+1)^p+tk(t+1)^p} 
    \le \frac{t(k-1)}{tk(t+1)^p} < \frac{1}{(t+1)^p} < \frac{1}{1000}.
\end{equation*}

Suppose that there is a one-pass
$(1+\eps,0.49)$-$L_{M_p}$-sampler $\A$ using $s$ bits of space, where
$\eps\in(0,0.9)$. Run $\ell=50$ independent copies of $\A$ in parallel.
Alice runs all copies on her portion of the stream and sends their memory configurations to Bob. Bob then continues all copies on his portion of the stream.
He outputs $0$ if at least one copy returns a node in $A_{\bar b}$, and
outputs $1$ otherwise. A failure is treated as not returning a node in
$A_{\bar b}$.

If $X_b=0$, then one copy returns a node in $A_{\bar b}$ with probability at least
\begin{equation*}
    (1-0.49)(1-\eps)q_{\mathtt{NO}} > 0.51\cdot0.1\cdot\frac{1}{3} = \frac{17}{1000}.
\end{equation*}
Since the copies are independent, the probability that none of them
returns a node in $A_{\bar b}$ is at most $\left(1-\frac{17}{1000}\right)^\ell \le \exp\left(-\frac{17}{20}\right) < 0.43$.
Thus, when $X_b=0$, the protocol outputs $0$ with probability greater
than $0.57$.

If $X_b=1$, then one copy returns a node in $A_{\bar b}$ with probability at most
\begin{equation*}
    (1+\eps)q_{\mathtt{YES}} < \frac{1.9}{1000}.
\end{equation*}
Therefore, by a union bound, the probability that the protocol incorrectly outputs $0$ is at most $\ell(1+\eps)q_{\mathtt{YES}} < 0.1$.

The protocol therefore solves \INDEX\ with success probability greater
than $0.51$. Its total communication is $\ell s=O(s)$ bits. By
Lemma~\ref{lem:INDEX}, $s=\Omega(k)=\Omega(n)$.
\end{proof}

\subsection{Multi-Pass Streaming Algorithms}
We give the following multi-pass lower bound.

\begin{theorem}
    \label{thm:F0-lb-multi}
    For any $\eps \in (0, 0.9)$, any $O(1)$-pass $(1+\eps, 0.49)$-approximation algorithm for computing $\rFo$ on a data stream of length $n$ needs $\Omega\left(\min\left(\sqrt{\frac{n}{\eps}}, n \right)\right)$ bits of space.
\end{theorem}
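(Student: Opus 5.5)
Our plan is a reduction from \DISJ\ (Lemma~\ref{lem:DISJ}). Since multi-pass streaming algorithms are simulated by multi-round protocols, an $O(1)$-pass algorithm with $s$ bits of memory yields a \DISJ\ protocol with $O(s)$ communication. Alice and Bob alternate sending the memory state at the boundary between their portions of the stream, once per pass.

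The gadget must create a difference in $\rFo$ of relative size $\eps$ from a single intersecting coordinate. Let $m$ be the \DISJ\ size and let $t$ be a group size, with $n = \Theta(mt)$. For each $i\in[m]$, Alice adds a group of $t$ points if $X_i=1$, and Bob adds a group of $t$ points if $Y_i=1$. These are placed so that Alice's group $i$ and Bob's group $i$ are fully similar to each other (complete bipartite, weight $1$) and dissimilar to every other group; within a group the points are pairwise dissimilar except for self-loops. We use an $\ell_1$-threshold similarity as in Theorem~\ref{thm:F0-lb}, with location $\mathbf{e}_i$ shifted by an extra coordinate ($+$ for Alice, $-$ for Bob) and distinct unit vectors to separate points within a group. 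Each point also needs a succinct representation.

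\begin{itemize}
\item If $X_i\ne Y_i$ or both are $0$, coordinate $i$ contributes exactly $t$ (if one side is present) or $0$ to $\rFo$.
\item If $X_i=Y_i=1$, there are $2t$ nodes of degree $t+1$, contributing $2t/(t+1)<2$ instead of the $2t$ that two disjoint groups would give.
\end{itemize}

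So an intersection reduces $\rFo$ by about $2t$. We cannot rely on a promise that $|X|+|Y|$ is fixed, so we pad: for every $i$, Alice adds a group if $X_i=1$ and Bob adds a dummy ``complement'' group (dissimilar to everything) if $Y_i=0$, and symmetrically. A NO instance then has $\rFo=\Theta(mt)$, a fixed known value $V$. Under the standard hard distribution we may also assume $|X|=|Y|=m/4$ and at most one intersection, which makes $V$ determined. A YES instance has $\rFo\le V-2t+2$. Distinguishing therefore requires relative accuracy $\eps \approx t/(mt)=1/m$.

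Parameters: set $m=\Theta(1/\eps)$? That gives only $\Omega(1/\eps)$, which is wrong. The claimed bound $\sqrt{n/\eps}$ suggests a different balance, where the gap is $\approx t^2$-scale against total $\rFo\approx$ number of low-degree nodes. Our intended fix is to make the unaffected mass small. Give each coordinate a base structure of degree-$\sqrt{}$-type size, so that the total $\rFo$ is about $m$ while an intersection changes the contribution of coordinate $b$ by a constant. Concretely, for each $i$ Alice's group is a clique of $t$ points (mutually similar, contributing $1$). An intersection merges Alice's and Bob's cliques into one clique of size $2t$, contributing $1$ instead of $2$, which is a change of $1$ against a total of $\Theta(m)$. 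This needs $m=\Theta(1/\eps)$ with $n=mt$ arbitrary, which again gives only $1/\eps$.

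To reach $\sqrt{n/\eps}$, we instead let many isolated singleton nodes carry the bulk. A coordinate contributes $\Theta(t)$ when unmatched and $O(1)$ when matched, as in the bipartite gadget. Total nodes $n=mt$ give $\rFo\approx mt$ in the NO case, and a drop of about $t$ is needed relative to $\eps mt$. We will tune this so that intersection collapses about $k$ nodes across several coordinates. The step I expect to be the main obstacle is exactly this parameter balance: designing the gadget so the drop is $\eps\cdot\rFo$ while $m=\sqrt{n/\eps}$. The first thing I would try is groups of size $t=\sqrt{n\eps}$, with Bob's group complete-bipartite to Alice's, giving $m=n/t=\sqrt{n/\eps}$. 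I would then check whether $\rFo(\text{NO})\approx n$ and the drop $\approx t=\sqrt{n\eps}$ suffices; if $\eps\sqrt{n}\cdot\sqrt{n}$ mismatches, I would reweight via padding singletons. When $\sqrt{n/\eps}>n$ we cap at $m=n$.
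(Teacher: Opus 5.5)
There is a genuine gap: you never reach a gadget that achieves the bound, and the direction you keep using cannot reach it. In every construction you try, an intersection \emph{decreases} $\rFo$. In your gadgets each unmatched coordinate contributes $\Theta(t)$, so $\rFo(I_{\mathtt{NO}})=\Theta(mt)$, while an intersection removes at most about $2t$. That is a relative gap of $O(1/m)$, which is why you keep landing at $m=\Theta(1/\eps)$.

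This limit is inherent to the decreasing direction. $\rFo$ is additive over separated coordinate gadgets, and the decrease is capped by the NO-case contribution of the affected coordinate. That contribution cannot be much larger than those of the other $m-1$ coordinates, because a YES instance may have all of them of the largest non-intersecting type. Your last attempt illustrates the failure. With $t=\sqrt{\eps n}$ and $\rFo(I_{\mathtt{NO}})\approx n$, a drop of about $2t$ is a relative change of about $2\sqrt{\eps/n}$, far below $\eps$. Padding with singletons only enlarges the denominator and makes this worse.

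The missing idea is to reverse the direction. Make the NO-case graph dense so that every coordinate contributes only $O(1)$, and let an intersection \emph{isolate} $t$ nodes so that this coordinate's contribution jumps to $\Theta(t)$. Then $\rFo(I_{\mathtt{NO}})=\Theta(m)$ and the gap is $\Theta(t)$. With $n=2mt$ this is a relative gap of $\Theta(t/m)=\Theta(t^2/n)$. Taking $t=\max(1,5\sqrt{\eps n})$, as the paper does, gives $m=n/(2t)=\Theta(\min(\sqrt{n/\eps},n))$.

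The paper realizes this with the $\ell_1\le 1$ threshold. For coordinate $i$, Alice places $t$ pairwise-dissimilar points $0.6\mathbf{e}_j^t\circ(4i-0.4X_i)$ for $j\in[t]$, and Bob places $t$ coincident points $\mathbf{0}^t\circ(4i+0.4Y_i)$. Their distance is $0.6+0.4(X_i+Y_i)$, so the two groups are complete bipartite unless $X_i=Y_i=1$. A non-intersecting coordinate then contributes $\frac{t}{t+1}+\frac12$. An intersecting one contributes $t+1$: $t$ isolated Alice nodes plus Bob's clique. As a bonus, the three non-intersecting types yield identical graphs. So $\rFo(I_{\mathtt{NO}})$ is a fixed value, and neither your padding nor a promise on $|X|,|Y|$ is needed.
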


\begin{proof}
    Set $t = \max(1, 5 \sqrt{\eps n})$ and $k = \frac{n}{2t} = \min \left(\frac{n}{2}, \frac{1}{10} \sqrt{\frac{n}{\eps}}\right)$.
We perform a reduction from \DISJ\ to $\rFo$.  The input reduction is as follows:  Alice, for each $i \in [k]$,  adds $t$ points at locations $0.6 \mathbf{e}_1^t \circ 4i, \ldots, 0.6 \mathbf{e}_t^t \circ 4i$ if $X_i = 0$, or $t$ points at locations $0.6 \mathbf{e}_1^t \circ (4i-0.4), \ldots, 0.6 \mathbf{e}_t^t \circ (4i-0.4)$ if $X_i = 1$.  Bob, for each $i \in [k]$,  adds $t$ points all at location $\mathbf{0}^t \circ 4i$ if $Y_i = 0$, or $t$ points all at location $\mathbf{0}^t \circ (4i+0.4)$ if $Y_i = 1$.  In total, Alice and Bob create $2t k = n$ points. We then define the similarity between two points $x$ and $y$ as follows: 
\begin{equation*}
f(x, y) \;=\;
\begin{cases}
    1, & \text{if } \lVert x-y \rVert_1 \leq 1, \\[6pt]
    0, & \text{otherwise.}
\end{cases}
\end{equation*}

If \DISJ$(X, Y) = 0$, then
\begin{equation*}
\rFo(I_{\mathtt{NO}}) = k \cdot \left(\frac{t}{t+1} + \frac{t}{2t}\right) \le \frac{3k}{2} = \frac{3n}{4t}.
\end{equation*}
If \DISJ$(X, Y) = 1$, then there exists $r \ge 1$ coordinates satisfying $X_i = Y_i = 1$ and
\begin{equation*}
\rFo(I_{\mathtt{YES}}) = (k-r) \left(\frac{t}{t+1} + \frac{t}{2t}\right) + r(t+1) \ge \rFo(I_{\mathtt{NO}}) + \left(t - \frac{1}{2} + \frac{1}{t+1}\right) \ge \rFo(I_{\mathtt{NO}}) + \frac{3t}{4}. 
\end{equation*}
It follows that
\begin{equation*}
    \rFo(I_{\mathtt{YES}}) \ge \left(1 + \frac{t^2}{n}\right) \rFo(I_{\mathtt{NO}}) \ge (1 + 25 \eps) \rFo(I_{\mathtt{NO}}).
\end{equation*}
Thus, for any $\eps \in (0, 0.9)$, easy calculations give
\begin{equation*}
(1-\eps) \cdot \rFo(I_{\mathtt{YES}}) > (1+\eps) \cdot \rFo(I_{\mathtt{NO}}).
\end{equation*}
Therefore, any $O(1)$-pass $(1+\eps)$-approximation streaming algorithm for $\rFo$ using $s$ bits of space can be used to solve \DISJ\ with $O(s)$ bits of communication. By Lemma~\ref{lem:DISJ}, any such algorithm requires $\Omega(k) = \Omega\left(\min \left(n, \sqrt{\frac{n}{\eps}}\right)\right)$ space.
\end{proof}

\begin{theorem}
    \label{thm:l0-lb-multi}
    For any $\eps\in(0,0.9)$, any $O(1)$-pass $(1+\eps,0.49)$-$L_{\DI}$-sampler on a data stream of length $n$ needs $\Omega(\sqrt{n})$ bits of space.
\end{theorem}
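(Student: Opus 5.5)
We reduce from \DISJ\ on $k$ coordinates, reusing the gadget of Theorem~\ref{thm:F0-lb-multi} with $t = \sqrt{n}$ (so $k = \frac{n}{2t} = \Theta(\sqrt{n})$), and we make the sampler distinguish the two cases through the probability of a fixed event. Recall that in that instance each coordinate $i$ yields $t$ Alice-nodes and $t$ Bob-nodes, with the following three local configurations.
\begin{itemize}
\item If $X_i=Y_i=1$, the $t$ Alice-nodes are pairwise dissimilar and dissimilar to Bob's cluster, so each has degree $1$. The $t$ Bob-nodes have degree $t$. Coordinate $i$ contributes $t+1$ to $\DI$, and the Alice-nodes alone contribute $t$.
\item If $X_i=1$ and $Y_i=0$, or if $X_i=0$ and $Y_i=1$ (wait: check which configurations create adjacency), the Alice-nodes have degree $2$ (one star center $\cdot$), so coordinate $i$ contributes $O(1)$.
\item If neither bit is $1$, coordinate $i$ likewise contributes $O(1)$, namely $\frac{t}{t+1}+\frac12$.
\end{itemize}
I will recheck the geometry carefully: the relevant quantity is the total inverse-degree mass of Alice's nodes across all coordinates. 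I expect it to be $\Theta(t)$ in the NO case but $\ge t$ from a single intersecting coordinate in the YES case. More precisely, when Alice's point is adjacent to Bob's cluster it has degree $t+1$, and otherwise degree $1$.

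Since the exact numbers depend on the gadget, I will instead define the instance directly. For coordinate $i$, Alice places $t$ pairwise-dissimilar points if $X_i=1$, and otherwise places nothing distinguishing (a single cluster of $t$ identical points). Bob places $t$ identical points similar to all of Alice's coordinate-$i$ points if $Y_i=0$, and points similar to none of them if $Y_i=1$. Then:
\begin{itemize}
\item In the NO instance, every Alice node has degree $\ge t$, so $V_A$ has inverse-degree mass at most about $k$.
\item Bob's nodes contribute $\Theta(k)$ as well.
\item In the YES instance, the intersecting coordinate alone gives $V_A$ mass $t$, while the total $\DI$ is $O(k+t)$.
\end{itemize}
Choosing $t = ck$ with a large constant $c$, the target probability $q_{\mathtt{NO}}$ of the event $\cE_A$ (output lies in the special set of degree-$1$ Alice nodes, or more simply in $V_A$) becomes a small constant, while $q_{\mathtt{YES}}$ is a large constant. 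Both stay valid when Bob's points with $Y_i=1$ are padded so that all gadgets have equal size and $n = \Theta(tk) = \Theta(k^2)$.

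The protocol then mirrors Theorem~\ref{thm:l0-lb}. Run $\ell = O(1)$ independent copies of the $O(1)$-pass sampler, simulating each pass by exchanging memory states, which costs $O(s)$ communication. Output YES if at least one copy returns a node in $V_A$, treating failures as NO. The $(1+\eps)$ relative error with $\eps<0.9$ and failure probability $0.49$ still leave a constant gap: at least $0.51\cdot 0.1\cdot q_{\mathtt{YES}}$ versus at most $1.9\, q_{\mathtt{NO}}$. Repetition and a union bound therefore give success probability above $0.51$, and Lemma~\ref{lem:DISJ} yields $s=\Omega(k)=\Omega(\sqrt n)$.

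The main obstacle is designing the gadget so that in the NO case every Alice node has large degree ($\approx t$), which requires Bob's $Y_i=0$ points to cover Alice's, while a single intersection exposes $t$ Alice nodes of degree $1$. This has to be realized with a symmetric Boolean $f$ given by a distance threshold. I would first try the $\ell_1$ construction of Theorem~\ref{thm:F0-lb-multi} with the roles of the bits swapped: Alice's points sit at $0.6\mathbf{e}_j^t\circ 4i$, and Bob puts his cluster at $\mathbf{0}^t\circ 4i$ when $Y_i=0$ and far away when $Y_i=1$. I would then verify all pairwise distances.
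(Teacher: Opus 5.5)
Your hard instance is fine. In either the gadget of Theorem~\ref{thm:F0-lb-multi} with $t=\Theta(k)$ or your direct variant, every Alice node has degree at least $t$ in a NO instance, while each intersecting coordinate contributes $t$ Alice nodes of degree $1$. (In the gadget, the Alice nodes of a non-intersecting coordinate have degree $t+1$, not $2$, as you later corrected.)

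The gap is in the event your protocol tests. You output YES when some copy returns a node of $V_A$, and you claim that taking $t=ck$ makes $q_{\mathtt{NO}}$ a small constant. This is false. As your own bullets show, in a NO instance Alice's nodes and Bob's nodes \emph{both} carry inverse-degree mass $\Theta(k)$, independently of $t$. In your direct instance with $X_i=Y_i=0$ for all $i$, every node has degree $2t$, so $q_{\mathtt{NO}}=1/2$. In the gadget, each non-intersecting coordinate contributes $\frac{t}{t+1}$ from Alice and $\frac12$ from Bob, so $q_{\mathtt{NO}}\ge 0.6$ once $t\ge4$. Your comparison therefore gives no gap: the YES-side bound $0.51\cdot0.1\cdot q_{\mathtt{YES}}$ is at most $0.051$, while the NO-side bound $1.9\,q_{\mathtt{NO}}$ is about $1$.

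Worse, consider an exact sampler that never fails on NO instances but fails with probability $0.49$ on YES instances. It lands in $V_A$ with probability at least $0.6$ on the gadget's NO instances and at most $0.51$ on YES instances. So your at-least-one-copy rule says YES at least as often on NO inputs as on YES inputs, for every $\ell$. The analogy with Theorem~\ref{thm:l0-lb} breaks because there Bob's $tk$ degree-$1$ nodes swamp Alice's mass in the NO case; nothing in the \DISJ\ gadget does this.

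The repair is the event you mention only parenthetically and then drop. Let $A_\cap$ be the Alice nodes of coordinates with $X_i=Y_i=1$; these are exactly the degree-$1$ Alice nodes.
\begin{itemize}
\item On NO instances $A_\cap=\emptyset$, so the test never fires and the error is one-sided; no bound on $q_{\mathtt{NO}}$ is needed.
\item On YES instances of the gadget with $r\ge1$ intersections and $t=4k$ (so $n=8k^2$), we have $\DI\le\frac32(k-r)+r(t+1)\le\frac{11rk}{2}$, while $A_\cap$ has mass $rt=4rk$. Its target probability is therefore at least $8/11$. Each copy hits $A_\cap$ with probability at least $0.51\cdot0.1\cdot\frac{8}{11}$, so all $50$ copies miss with probability below $e^{-1}$.
\end{itemize}
Since membership in $A_\cap$ depends on both inputs, you must also say how it is checked. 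The returned node reveals its coordinate $i$ (and in the gadget its location also encodes $X_i$), so $O(\log k)$ extra bits per copy suffice. The total communication is then $O(s+\log k)$, and Lemma~\ref{lem:DISJ} gives $s=\Omega(k)=\Omega(\sqrt n)$. This is exactly the paper's proof.
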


\begin{proof}
    We use the hard instance from the proof of Theorem~\ref{thm:F0-lb-multi}, but choose $t=4k$, so the stream contains $n=2tk=8k^2$ nodes.

Let $A_{\cap}$ be the set of Alice's nodes belonging to coordinates $i$ for which $X_i=Y_i=1$, and let $\cE_{\cap}$ be the event that the sampler outputs a node in $A_{\cap}$. If \DISJ$(X,Y)=0$, then $A_{\cap}=\emptyset$ and $\cE_{\cap}$ never occurs.

If \DISJ$(X,Y)=1$, then there exists $r \ge 1$ coordinates satisfying $X_i = Y_i = 1$ and the total inverse-degree mass of $A_{\cap}$ is $rt=4rk$, while
\begin{equation*}
\DI(I_{\mathtt{YES}})=(k-r) \left(\frac{t}{t+1} + \frac{t}{2t}\right) + r(t+1) \le \frac{11rk}{2}.
\end{equation*}
Hence the probability of $\cE_{\cap}$ is $q_{\cap} = \frac{rt}{\DI(I_{\mathtt{YES}})} \ge \frac{8}{11}$.

Suppose that there is an $O(1)$-pass $(1+\eps,0.49)$-$L_{\DI}$-sampler $\A$ using $s$ bits of space, where $\eps\in(0,0.9)$. Run $\ell=50$ independent copies of $\mathcal A$ in parallel. Alice and Bob simulate these copies by exchanging their memory configurations after their respective portions of each pass. After the samplers terminate, they use $O(\ell\log k)=O(\log k)$ additional bits to determine whether any
copy returned a node in $A_{\cap}$. They output $1$ if and only if
this occurs. A failure is treated as not returning a node in
$A_{\cap}$.

If \DISJ$(X,Y)=0$, then $A_{\cap}=\emptyset$, so the protocol always
outputs $0$ and is therefore always correct.

If \DISJ$(X,Y)=1$, then one copy returns a node in $A_{\cap}$ with probability at least
\begin{equation*}
(1-0.49)(1-\eps)q_{\cap} \ge 0.51\cdot 0.1\cdot\frac{8}{11} = \frac{51}{1375}.
\end{equation*}
Since copies are independent, the probability that none of them
returns a node in $A_{\cap}$ is at most $\left(1 - \frac{51}{1375}\right)^{\ell} < e^{-1} < 0.37$.

The protocol therefore solves \DISJ\ with success probability greater
than $0.51$. The total communication is $O(\ell s + \ell \log k) = O(s + \log k)$ bits. By Lemma~\ref{lem:DISJ}, $s=\Omega(k)=\Omega(\sqrt{n})$.
\end{proof}

\begin{theorem}
    \label{thm:Fp-lb-multi}
    For any $p > 0$ and $\eps \in (0, 0.9)$, any $O(1)$-pass $(1+\eps, 0.49)$-approximation algorithm for computing $\rFp$ on a data stream of length $n$ needs $\Omega\left(\min \left(\eps^{-1/(p+1)}n^{1-1/(p+1)}, n \right)\right)$ bits of space.
\end{theorem}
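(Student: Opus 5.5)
We reduce from \DISJ, following the template of Theorem~\ref{thm:F0-lb-multi} but with the gadget of Theorem~\ref{thm:Mp-lb}, where an intersecting coordinate creates a dense bipartite block. Fix a block size $t$ and set $k = \frac{n}{2t}$. For each $i\in[k]$, Alice adds $t$ distinct points $10\mathbf{e}_i^k\circ \mathbf{e}_1^{t}\circ 3,\ldots,10\mathbf{e}_i^k\circ\mathbf{e}_t^{t}\circ 3$ if $X_i=1$, and the same points with $-10\mathbf{e}_i^k$ if $X_i=0$. Bob adds $t$ distinct points $10\mathbf{e}_i^k\circ\mathbf{e}_j^{t}\circ(-3)$, $j\in[t]$, if $Y_i=1$, and the same points with $-10\mathbf{e}_i^k$ replaced by a location far from everything (say $20\mathbf{e}_i^k$) if $Y_i=0$. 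We use the similarity function of Theorem~\ref{thm:Mp-lb}: $f(x,y)=1$ if $x=y$ or $\lVert x-y\rVert_1\in[4,8]$, and $0$ otherwise.

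We first check the graph. An Alice point and a Bob point at the same coordinate $i$, both with sign $+10$, are at $\ell_1$ distance $6$ or $8$, so they are adjacent. Two points of the same party at the same coordinate are at distance $2$, so they are not adjacent. Points at different coordinates are at distance at least $20$, and mismatched signs give distance at least $20$. Hence if \DISJ$(X,Y)=0$, every node has degree $1$ and $M_p(I_{\mathtt{NO}})=n$. If there are $r\ge1$ intersecting coordinates, each contributes a $K_{t,t}$ plus self-loops, so $2t$ nodes have degree $t+1$. This gives
\begin{equation*}
M_p(I_{\mathtt{YES}}) = n + 2rt\bigl((t+1)^p-1\bigr) \ge n + 2t\bigl((t+1)^p-1\bigr).
\end{equation*}

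Next we choose $t$. To separate the cases we need $(1-\eps)M_p(I_{\mathtt{YES}})>(1+\eps)M_p(I_{\mathtt{NO}})$. It suffices that $2t((t+1)^p-1)\ge c\,\eps n$ for a suitable constant $c$ (e.g.\ $c=20$, since $\eps<0.9$ keeps $(1+\eps)/(1-\eps)-1 = O(\eps)$ up to constants; we may need $c$ larger to handle $\eps$ near $0.9$). Because $(t+1)^p-1\ge t^p/2$ roughly for $t\ge1$, we take $t=\max\bigl(1,\,(c\eps n)^{1/(p+1)}\bigr)$, or a constant multiple of it. When $t=1$ the $K_{1,1}$ block still gives $2(2^p-1)$ extra mass, so we handle that small-$t$ regime by choosing the constant to absorb it. We then have $k=\frac{n}{2t}=\Omega\bigl(\min(n,\eps^{-1/(p+1)}n^{1-1/(p+1)})\bigr)$.

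Finally we simulate. An $O(1)$-pass algorithm with $s$ bits of space yields a \DISJ\ protocol on $k$ bits with $O(s)$ communication: the parties exchange the memory state once per pass, and Bob thresholds the estimate at $(1+\eps)n$. By Lemma~\ref{lem:DISJ}, $s=\Omega(k)$. The main obstacle is verifying the distance bookkeeping, particularly that Bob's $Y_i=0$ points are isolated and that same-party points within a block are not adjacent. The second difficulty is the constant chase for $\eps$ close to $0.9$, where $(1+\eps)/(1-\eps)$ is as large as $19$; we handle this by taking $t$ to be a larger constant multiple of $(\eps n)^{1/(p+1)}$.
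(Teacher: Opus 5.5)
Your proposal is correct and follows essentially the same route as the paper: the same \DISJ\ reduction, with $K_{t,t}$ blocks on intersecting coordinates, the same similarity function, and $t \approx (\eps n)^{1/(p+1)}$. The paper parks inactive points with a last coordinate of $\pm 30$ rather than by moving the first block, which makes no difference.

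The one loose spot is the constant. The bound $(t+1)^p-1\ge t^p/2$ fails for small $p$, so the constant must depend on $p$. The paper uses $c_p=\min(1,2^p-1)$ and $t=\max\bigl(1,(10\eps n/c_p)^{1/(p+1)}\bigr)$, which gives $M_p(I_{\mathtt{YES}})\ge(1+20\eps)n$. Your worry about $\eps$ near $0.9$ is unfounded: $c=20$ already suffices, since $(1-\eps)(1+20\eps)>1+\eps$ holds exactly when $\eps<0.9$.
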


\begin{proof}
    Let $c_p = \min(1, 2^p -1) > 0$. Set 
\begin{equation*}
t = \max\left(1, \left(\frac{10 \eps n}{c_p}\right)^{1/(p+1)}\right) \text{ and } 
\end{equation*}
\begin{equation*}
k = \frac{n}{2 t} = \min \left( \frac{n}{2}, \frac{c_p^{1/(p+1)}}{2 \cdot 10^{1/(p+1)}}\eps^{-1/(p+1)} n^{1-1/(p+1)}\right).
\end{equation*}
We perform a reduction from \DISJ\ to $M_p$. The input reduction is as follows: Alice, for each $i \in [k]$,  adds $t$ points at locations $10 \mathbf{e}_i^k \circ \mathbf{e}_1^t \circ 3, \ldots, 10 \mathbf{e}_i^k \circ \mathbf{e}_t^t \circ 3$ if $X_i = 1$, or $t$ points at locations $10 \mathbf{e}_i^k \circ \mathbf{e}_1^t \circ 30, \ldots, 10 \mathbf{e}_i^k \circ \mathbf{e}_t^t \circ 30$ if $X_i = 0$. Bob, for each $i \in [k]$, adds $t$ points at locations $10 \mathbf{e}_i^k \circ \mathbf{e}_1^t \circ -3, \ldots, 10 \mathbf{e}_i^k \circ \mathbf{e}_t^t \circ -3$ if $Y_i = 1$, or $t$ points at locations $10 \mathbf{e}_i^k \circ \mathbf{e}_1^t \circ -30, \ldots, 10 \mathbf{e}_i^k \circ \mathbf{e}_t^t \circ -30$ if $Y_i = 0$. In total, Alice and Bob create $2kt = n$ points. We then define the similarity between two points $x$ and $y$ as follows: 
\begin{equation*}
f(x, y) \;=\;
\begin{cases}
    1, & \text{if } x = y, \\[6pt]
    1, & \text{if } \lVert x-y \rVert_1 \in [4, 8], \\[6pt]
    0, & \text{otherwise.}
\end{cases}
\end{equation*}

If \DISJ$(X,Y) = 0$, then $M_p(I_{\mathtt{NO}}) = n$. If \DISJ$(X,Y) = 1$, then there exists $r \ge 1$ coordinates satisfying $X_i = Y_i = 1$ and
\begin{equation*}
M_p(I_{\mathtt{YES}}) = (n-2rt) + 2rt \cdot (t+1)^p \geq n + 2 c_p t^{p+1} \ge (1+20\eps)n,
\end{equation*}
where we apply the inequality $(t+1)^p - 1 \ge c_p t^p$ for any $t \ge 1$ and $p > 0$.
Thus, for any $\eps \in (0,0.9)$, easy calculations give
\begin{equation*}
(1-\eps)\cdot M_p(I_{\mathtt{YES}}) > (1+\eps) \cdot M_p(I_{\mathtt{NO}}).
\end{equation*}
Therefore, any $O(1)$-pass $(1+\eps)$-approximation streaming algorithm for $M_p$ using $s$ bits of space can be used to solve \DISJ\ with $O(s)$ bits of communication. By Lemma~\ref{lem:DISJ}, any such algorithm requires $\Omega(k) = \Omega(\min(n, \eps^{-1/(p+1)} n^{1-1/(p+1)}))$ space.
\end{proof}

\begin{theorem}
\label{thm:LMp-lb-multi}
For any $p>0$ and $\eps\in(0,0.9)$, any $O(1)$-pass
$(1+\eps,0.49)$-$L_{M_p}$-sampler on a data stream of length $n$
needs $\Omega\left(n^{1-1/(p+1)}\right)$ bits of space.
\end{theorem}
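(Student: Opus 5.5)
We reduce from \DISJ, reusing the hard instance from the proof of Theorem~\ref{thm:Fp-lb-multi}, and follow the pattern of Theorem~\ref{thm:l0-lb-multi}. Fix a constant $t$ with $t^p$ large; a choice such as $(t+1)^p \ge 10$ should suffice. Set $t = k^{1/p}$, so that a single intersecting coordinate carries $M_p$-mass comparable to the entire baseline. The stream then has length $n = 2tk = 2k^{1+1/p}$, which gives $k = \Theta(n^{p/(p+1)}) = \Theta(n^{1-1/(p+1)})$. Alice and Bob place their $t$ points per coordinate exactly as before, and we use the same similarity function.

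Let $A_\cap$ be the set of all nodes, Alice's and Bob's, belonging to coordinates $i$ with $X_i = Y_i = 1$, and let $\cE_\cap$ be the event that the sampler outputs a node of $A_\cap$. If \DISJ$(X,Y)=0$, then $A_\cap=\emptyset$ and $\cE_\cap$ never occurs. If \DISJ$(X,Y)=1$ with $r\ge1$ intersecting coordinates, every node of $A_\cap$ has degree $t+1$, while every other node has degree $1$. The target probability of $\cE_\cap$ is therefore
\[
q_\cap=\frac{2rt(t+1)^p}{(n-2rt)+2rt(t+1)^p}\ge \frac{2rt\cdot k}{2tk+2rt\cdot k}\ge \frac12,
\]
using $(t+1)^p \ge t^p = k$ and $n = 2tk$. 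In particular, $q_\cap$ is bounded below by a constant.

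Next run $\ell = 50$ independent copies of a hypothetical $O(1)$-pass $(1+\eps,0.49)$-$L_{M_p}$-sampler using $s$ bits. Alice and Bob simulate each pass by exchanging memory configurations. At the end, they spend $O(\ell\log n)$ extra bits to check whether any returned node lies in $A_\cap$: the returned node's coordinate $i$ is sent, and both players check their bit. They output $1$ iff some copy hits $A_\cap$. When the sets are disjoint, this protocol is always correct. When they intersect, each copy hits $A_\cap$ with probability at least $0.51\cdot0.1\cdot\frac12$, so all $50$ copies miss with probability below $0.49$. By Lemma~\ref{lem:DISJ}, this gives $O(s+\log n)=\Omega(k)$, hence $s=\Omega(n^{1-1/(p+1)})$.

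The main obstacle is choosing $t$ as a function of $k$ so that two constraints hold at once: the intersection's mass $2rt(t+1)^p$ must dominate the baseline $n \approx 2tk$, and the number of coordinates $k$ must remain $n^{1-1/(p+1)}$. Setting $t^p \approx k$ balances both. The remaining work is to check integrality, which can be handled by rounding $t$ up, and to confirm that each stream item still has an $O(\log n)$-bit description, which holds as in the earlier constructions.
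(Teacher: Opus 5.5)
Your proposal is correct and takes essentially the same route as the paper: the \DISJ\ instance from Theorem~\ref{thm:Fp-lb-multi} with $t$ of order $k^{1/p}$ so that $k=\Theta(n^{1-1/(p+1)})$, an event $\cE_\cap$ that cannot occur in the disjoint case, and $50$ parallel copies plus $O(\log n)$ extra bits to test membership. The differences are cosmetic: you put both players' intersecting nodes in $A_\cap$ and take $t=k^{1/p}$ to get $q_\cap\ge 1/2$, while the paper counts only Alice's nodes with $t=(4k)^{1/p}$ to get $q_\cap\ge 2/5$; you should also delete your opening remark about fixing a \emph{constant} $t$, since it contradicts, and is superseded by, your choice $t=k^{1/p}$.
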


\begin{proof}
    We use the same hard instance from the proof of
Theorem~\ref{thm:Fp-lb-multi}, but choose $t=(4k)^{1/p}$. The resulting stream contains $n=2tk = \Theta\left(k^{1+1/p}\right)$ nodes.

Let $A_{\cap}$ be the set of Alice's nodes belonging to coordinates
$i$ for which $X_i=Y_i=1$, and let $\cE_{\cap}$ be the event that the
sampler outputs a node in $A_{\cap}$. If \DISJ$(X,Y)=0$, then
$A_{\cap}=\emptyset$, and hence $\cE_{\cap}$ never occurs.

Suppose that \DISJ$(X,Y)=1$, and let $r\geq1$ be the number of coordinates satisfying $X_i=Y_i=1$. For every such coordinate, the corresponding $t$ Alice nodes and $t$ Bob nodes have degree $t+1$, while all other nodes have degree $1$. Therefore, the total mass of $A_{\cap}$ is $rt(t+1)^p$, and $M_p(I_{\mathtt{YES}}) = n-2rt+2rt(t+1)^p$.
Hence the target probability of $\cE_{\cap}$ is
\begin{equation*}
    q_{\cap} = \frac{rt(t+1)^p}{n-2rt+2rt(t+1)^p}
    \ge \frac{(t+1)^p}{2k+2(t+1)^p} \ge \frac{2}{5}.
\end{equation*}

Suppose that there is an $O(1)$-pass
$(1+\eps,0.49)$-$L_{M_p}$-sampler $\A$ using $s$ bits of space, where
$\eps\in(0,0.9)$. Run $\ell=50$ independent copies of $\A$ in parallel.
Alice and Bob simulate these copies by exchanging their memory configurations after their respective portions of each pass. After the samplers terminate, they use $O(\ell\log k)=O(\log k)$ additional bits to determine whether any copy returned a node in $A_{\cap}$. They output $1$ if and only if this occurs.
A failure is treated as not returning a node in $A_{\cap}$.

If \DISJ$(X,Y)=0$, then $A_{\cap}=\emptyset$, so the protocol always
outputs $0$ and is therefore always correct.

If \DISJ$(X,Y)=1$, then one copy returns a node in $A_{\cap}$ with
probability at least
\begin{equation*}
    (1-0.49)(1-\eps)q_{\cap} > 0.51\cdot0.1\cdot\frac{2}{5} = \frac{51}{2500}.
\end{equation*}
Since copies are independent, the probability that none of them returns
a node in $A_{\cap}$ is at most $\left(1-\frac{51}{2500}\right)^\ell < e^{-1} <0.37$.

The protocol therefore solves \DISJ\ with success probability greater than $0.51$. The total communication is $O(\ell s+\ell\log k)=O(s+\log k)$ bits. By Lemma~\ref{lem:DISJ}, $s=\Omega(k) = \Omega(n^{1-1/(p+1)})$.
\end{proof}

\section{Concluding Remarks}
\label{sec:conclude}

In this work, we studied a set of statistical problems defined on similarity graphs in the node-arrival data stream model. 
Several questions remain open. For example, while our upper and lower bounds match in terms of $n$, they still differ with respect to $\eps$; can these bounds be made tight?
It would also be interesting to study other statistical problems defined on similarity graphs in the node-arrival streams.

\bibliography{paper}
\bibliographystyle{plain}

\end{document}